\title{From Group Recommendations to Group Formation} 
 \author{
 \alignauthor
  Senjuti Basu Roy$^{\dag}$, Laks V. S. Lakshmanan $^{\diamond}$, Rui Liu$^{\dag}$.  \\
 \affaddr{
  $^{\dag}$ University of Washington Tacoma,
 $^{\diamond}$ University of British Columbia
 }
 {\email{
     \{senjutib,ruiliu\}@uw.edu,laks@cs.ubc.ca}
 }
 }
\newcommand{\eat}[1]{} 
\newcommand{\ratings}{\mbox {$\mathcal{R}$}} 
\newtheorem{theorem}{Theorem}
\newtheorem{lemma}{Lemma}
\newtheorem{definition}{Definition}
\newtheorem{example}{Example}
\begin{document}



\newcommand\Mark[1]{\textsuperscript#1}

     

\newcommand{\hide}[1]{}

\maketitle
\begin{abstract}
There has been significant recent interest in the area of group recommendations, where, given groups of users of a recommender system, one wants to recommend top-$k$ items to a group that maximize the satisfaction of the group members, according to a chosen semantics of group satisfaction. Examples semantics of satisfaction of a recommended itemset to a group include the so-called \emph{least misery} (LM) and \emph{aggregate voting} (AV). We consider the complementary problem of \emph{how to form groups} such that the users in the formed groups are most satisfied with the suggested top-$k$ recommendations. We assume that the recommendations will be generated according to one of the two group recommendation semantics -- LM or AV. Rather than assuming groups are given, or rely on ad hoc group formation dynamics, our framework allows a \emph{strategic} approach for forming  groups of users in order to maximize satisfaction. We show that the problem is NP-hard to solve optimally under both semantics. Furthermore, we develop two efficient algorithms for group formation under LM and show that they achieve  {\em bounded absolute error}. We develop  efficient heuristic algorithms for group formation under AV. We validate our results and demonstrate the scalability and effectiveness of our group formation algorithms on two large real data sets.
\end{abstract}


\section{Introduction}
There is a proliferation of {\em group recommender systems} that cope with the challenge
of addressing recommendations for groups of users.  YuTV~\cite{tvrec} is a TV program recommender for groups of viewers, 
LET'S BROWSE~\cite{lieberman1999let} recommends web pages to a group of two or more people who are browsing the web together, and  FlyTrap~\cite{crossen2002flytrap} recommends  music to be played in a public room. What all these recommender systems have in common is that they assume that the groups are \emph{ad hoc}, are {\em formed organically and are provided as inputs}, and  focus on designing the most appropriate {\em recommendation semantics} and effective algorithms. {\sl In all these systems, all members of a group are recommended a common list of items to consume.} Indeed, designing semantics to recommend items to ad-hoc groups has been a subject of recent research~\cite{reco1, polylens,DBLP:journals/pvldb/Amer-YahiaRCDY09,senot2010analysis}, and several algorithms have been  developed for recommending items personalized to \emph{given} groups of users  \cite{DBLP:conf/recsys/BerkovskyF10,DBLP:journals/pvldb/Amer-YahiaRCDY09}.

We, on the other hand, study the {\em flip problem} and address the question, {\em if group recommender systems follow one of these existing popular semantics, how best can we form groups to maximize user satisfaction}. In fact, we pose {\em this as an optimization problem to form groups in a principled manner, such that, after the groups are formed and the users inside the group are recommended  an itemset to consume together, following the existing semantics and group recommendation algorithms, they are as satisfied as possible}. 


{\bf Applications:} Our {\em strategic} group formation is potentially of interest to all group recommender system applications, as long as they use certain recommendation semantics. Instead of ad-hoc group formation~\cite{tvrec,lieberman1999let,crossen2002flytrap,Pizzutilo:2005:GMP:1366349.1366378,DBLP:conf/recsys/BerkovskyF10,DBLP:journals/pvldb/Amer-YahiaRCDY09}, or grouping individuals based on similarity in  preferences~\cite{DBLP:conf/er/NtoutsiSNK12}, or meta-data (e.g., socio-demographic factors~\cite{tvrec,lieberman1999let,crossen2002flytrap}), we explicitly embed the underlying group recommendation semantics in the group formation phase, which may dramatically improve user satisfaction. For example, two users with similar socio-demographic attributes may still have very distinct preferences for watching TV or listening to music; therefore, a meta-data based group formation strategy may place them in the same group, and thus a group recommendation semantics may end up recommending items to them that are not satisfactory to both. 
\eat{Similarly, two users with similar preferences on most of the items but high disagreement  only on a very few preferred items, may be considered {\em similar and placed in the same group}, while the recommended itemset may not satisfy both of them. Both of these scenarios indicate that the group recommender system may fail to ensure optimal recommendations to the groups, if the underlying group recommendation semantics are not explicitly considered in the group formation phase.} 
We attempt to bridge that gap and propose systematic investigation of the group formation problem. More concretely, our focus in this work is to formalize {\em how to form a set of user groups for a given population, such that, the aggregate satisfaction of all groups w.r.t. their recommended top-$k$ item lists, generated according to existing popular group recommendation semantics using existing group recommendation algorithms, is maximized}. 

Our work is also orthogonal to existing {\em market based strategies~\cite{groupon1,groupon2, groupon3} on daily deals sites}, such as Groupon and LivingSocial. Most of these works focus on {\em recommending deals} (i.e., items or bundles of items) to users. They rely on incentivizing formation of groups via price discounting. The utility of such recommendation strategies  is to {\em maximize revenue}, whereas, our group formation problem is purely designed to {\em maximize user satisfaction}. We elaborate on this interesting but orthogonal research direction further in the related work section.

Travel planning for user groups is a popular group recommendation application \cite{DBLP:journals/eswa/GarciaSO11}, where several hundreds of travelers can register their individual preferences to visit certain points of interest (POIs) in a city. A travel agency 
may decide to support, say 25 different user groups. Given these groups, they  accordingly design 
25 different plans, where each plan consists of a list of $5-10$ different POIs tailored to each group. Consequently, the registered customers are to be partitioned to form $25$ different groups and each group will be recommended a  plan of $k$ items ($5\le k\le 10$), based on a ``standard'' group recommendation semantics. 

Other emerging applications, such as, recommending news~\cite{Pizzutilo:2005:GMP:1366349.1366378}, music~\cite{crossen2002flytrap}, book, restaurants~\cite{DBLP:conf/er/NtoutsiSNK12}, and TV programs~\cite{tvrec} to groups, make use of similar settings. The size of the user population, the number of groups, the length of the recommended item list, or the most appropriate group recommendation semantics may be application dependent and are best decided by the domain experts; for example, an online news agency may create hundreds of segments of their large reader-base (with several thousands of users) to serve the top-$10$ news, whereas a TV recommender system may only form a few groups to serve the most appropriate $3$ programs to a family. What ties all these applications together  is the applicability of the same underlying settings.

For all these scenarios,  we study, {\em if group recommender systems follow a given semantics, how best can we can partition the user-base to form a  pre-defined number of groups, such that the recommended top-$k$ items to the groups maximize user satisfaction}. We exploit existing popular group recommendation semantics and do not propose a new one. Clearly, our problem is a {\em non-intrusive addition} to existing operational recommender systems and has clear practical impact in all these applications. 



{\bf Contributions:} Our first contribution is in proposing a formalism to create groups in the context of an existing group recommender system, that naturally fits many emerging applications. In particular, we study the group formation problem under two popular group recommendation semantics, namely \emph{least misery} (LM) and \emph{aggregate voting} (AV)~\cite{reco1, polylens,DBLP:journals/pvldb/Amer-YahiaRCDY09,senot2010analysis,tvrec}. Given an item and a group, LM sets the  preference rating of the item for the group to be the preference rating of the least happy member of that group for that item. On the other hand, AV sets  the preference rating of the item for the group to be the aggregated (sum) preference rating of that item over the members of the group. In this paper, {\sl given a user population of a recommender system, a set of items, and a number $\ell$, we seek to partition the users into at most $\ell$ non-overlapping groups, such that when each group is recommended a top-$k$ list of items under a  group recommendation semantics (LM or AV), the aggregate (sum) satisfaction of the created groups is maximized}. Given a group, its satisfaction with a recommended top-$k$ item list could  be measured in many ways: by considering the group preference of the most preferred item, the least preferred item (i.e., $k$-th recommended item), or the sum of group preferences over all $k$ items. These alternatives are discussed in Section~\ref{sec:model}.

Our second contribution is computational. We provide an in-depth analysis of the group formation problem and prove that finding an optimal set of groups is NP-hard under both group recommendation semantics, LM and AV. We propose two efficient approximation algorithms for group formation under LM with provable theoretical guarantees. In particular, our proposed algorithms {\tt GRD-LM} have   {\em absolute error}~\cite{kann1992approximability} guarantees. Additionally, we also describe efficient heuristic algorithms  {\tt GRD-AV} for forming groups under AV semantics and analyze the complexity of both algorithms. We also propose an integer programming based optimal solution for both LM and AV semantics (referred to Section~\ref{sec:optimal} in appendix) which will not scale, but can be used as a reference with which to calibrate scalable algorithms w.r.t. the quality of the solution, on small data sets.



Finally, we conduct a detailed empirical evaluation by using two large scale real world data sets (Yahoo! Music and MovieLens) to demonstrate the effectiveness as well as the scalability of our proposed solutions. We also compare our proposed solutions with intuitive baseline algorithms both qualitatively and w.r.t. various performance metrics. Our experimental results indicate that our proposed algorithms successfully form groups with high satisfaction scores w.r.t. the top-$k$ recommendations made to the groups. Additionally, we demonstrate that the proposed solutions are highly scalable and terminate within a couple of minutes in most cases. Furthermore, we conduct a user study involving users from Amazon Mechanical Turk. Our results demonstrate that our proposed formalism is indeed effective for forming multiple groups, with the individuals being highly satisfied w.r.t. the suggested top-$k$ group recommendations. 
Based on our experimental analysis, our group formation algorithms consistently  outperform the baseline algorithms both qualitatively and on  various performance metrics.

In summary, we make the following contributions:
\begin{itemize}
\item We initiate the study of how to form groups in the context of recommender systems, considering popular group recommendation semantics. We formalize the task as an optimization problem, with the objective to form groups, such that the aggregated group satisfaction  w.r.t. the suggested group recommendation is maximized (Section~\ref{sec:model}).

\item We provide an in-depth analysis of the problem and prove that finding an optimal set of groups is NP-hard under both LM and AV semantics (Section~\ref{hardness}). We present several simple and efficient  algorithms for group formation (Sections~\ref{sec:apprxalgo} and~\ref{sec:av}). We show that our algorithms for LM semantics under both Min and Sum aggregation, achieve a bounded absolute error w.r.t. the optimal solutions (Section~\ref{sec:apprxalgo}). We also work out a clean integer programming based formulation of the optimal solution under both LM and AV semantics (Appendix~\ref{sec:optimal}). 

\item We conduct a comprehensive experimental study (Section~\ref{exp}) on Yahoo! Music and MovieLens data sets and show that our algorithms are effective in achieving high aggregate satisfaction scores for user groups compared to the optimum, lead to relatively balanced group sizes, high average group satisfaction, and scale very well w.r.t. number of users, items, items recommended, and number of groups allowed. Our user study results demonstrate the effectiveness of our proposed solutions. 
\eat{
validates the effectiveness as well as the scalability aspects of our proposed algorithms.} 
\end{itemize}

Section~\ref{disc} presents extensions of the proposed work. Section~\ref{sec:related} presents related work. Finally, we conclude the paper  in Section~\ref{conc} and outline future research directions. 

\eat{The rest of the paper is organized as follows: Section~\ref{sec:model} contains the preliminaries, our data model, and problem definitions. We present a detailed complexity analysis  of the problems in Section~\ref{hardness}. Sections~\ref{sec:apprxalgo} and \ref{sec:av} include the efficient greedy algorithms and theoretical analyses for group formation under LM and AV, considering Min and Sum aggregation. We present two integer programming based optimal algorithms in Section~\ref{sec:optimal}. Section~\ref{disc} presents discussions of some extensions. Our experimental analysis is presented in Section~\ref{exp} and Section~\ref{sec:related} presents  related work. Finally, we conclude the paper  in Section~\ref{conc} and give future research directions.}



\section{Preliminaries \& Problem Defintion}\label{sec:model}
In this section, we first discuss the preliminaries and describe our data model. We also present two running examples that are used throughout the paper. Finally, we formalize the group formation problem in sub-section~\ref{sec:probdef}. 

\subsection{Data Model} 
We assume an item-set  $\mathcal{I} = \{i_1,i_2, \ldots, i_m\}$ containing $m$ items and a user set $\mathcal{U}=\{u_1,u_2,,\ldots,u_n\}$ with $n$ users. A group $g$ corresponds to a subset of users, i.e., $g \subseteq \mathcal{U}$. In this paper, we consider recommender systems with explicit feedback, which means users' feedback on items is in the form of an explicit rating $sc(u,i) \in \ratings$, where $\ratings$ is typically a discrete set of positive integers, e.g., $\ratings = \{1, ..., 5\}$, with $r^{min}$ and $r^{max}$ being the minimum and maximum possible ratings respective (e.g., $r^{min}$ may be $0$ and $r^{max}$ may be $5$). Without causing confusion, we also use $sc(u,i)$ to denote the  rating of item $i$ predicted for user $u$ by the recommender system.\footnote{Predicted ratings may be real numbers.} Thus, in general, $sc(u,i)$ denotes \emph{user $u$'s preference for item} $i$, whether user provided or system predicted. We sometimes also refer to $sc(u,i)$ as the \emph{relevance of an item for a user}. The recommended top-$k$ item list  for a group $g$ is denoted $\mathcal{I}^k_g$, where $\mathcal{I}^k_g \subseteq \mathcal{I}$ and $|\mathcal{I}^k_g| = k$. Furthermore, we denote the $k$-th item score for group $g$ as $sc(g, i^k)$, where $i^k$ denotes the $k$-th item (i.e., the worst item) in the top-$k$ item list $\mathcal{I}^k_g$ recommended to $g$. We note that $sc(g,i^k)$ is a  quantity that is defined according to a chosen group satisfaction semantics such as LM or AV, as explained in the next subsection.

\begin{example}\label{ex1}
{\em 
Imagine that the user set \\
$\mathcal{U}= \{u_1,u_2,u_3,u_4,u_5,u_6\}$ contains $6$ members and the itemset  $\mathcal{I} = \{i_1,i_2, i_3\}$ has $3$-items. The user's preference for the itemset is given (or predicted) as in Table~\ref{tab:ex1}. Imagine that the user set needs to be partitioned into at most $3$ groups ($\ell \leq 3$). \qed }  
\end{example}

\eat{
\begin{tabular}{l*{6}{c}r}
User-item Ratings & $u_1$ & $u_2$ & $u_3$ & $u_4$ & $u_5$  & $u_6$  \\
\hline
$i_1$ 		& 5 & $5$ & 2 & 2 & 1 & 1   \\
$i_2$       & 2 & 2 & 5 & $5$ &  1 & 2   \\
$i_3$       & 1 & 1 & 1 & 1 &  1 & $5$  \\
\end{tabular}

\begin{tabular}{l*{6}{c}r}
User-item Ratings & $u_1$ & $u_2$ & $u_3$ & $u_4$ & $u_5$  & $u_6$  \\
\hline
$i_1$ 	    & 5 & 4 & 2 & 2 & 1 & 1   \\
$i_2$       & 3 & 3 & 5 & 5 & 1 & 2   \\
$i_3$       & 1 & 4 & 1 & 1 & 1 & 4  \\
\end{tabular}

\begin{tabular}{l*{6}{c}r}
User-item Ratings & $u_1$ & $u_2$ & $u_3$ & $u_4$ & $u_5$  & $u_6$  \\
\hline
$i_1$ 	    & 3 & 4 & 2 & 2 & 1 & 1   \\
$i_2$       & 3 & 3 & 5 & 5 & 1 & 2   \\
$i_3$       & 1 & 5 & 1 & 1 & 1 & 5  \\
\end{tabular}}

\begin{table}
\begin{tabular}{l*{6}{c}r}
User-item Ratings & $u_1$ & $u_2$ & $u_3$ & $u_4$ & $u_5$  & $u_6$  \\
\hline
$i_1$ 	    & 1 & 2 & 2 & 2 & 3 & 1  \\
$i_2$       & 4 & 3 & 5 & 5 & 1 & 2   \\
$i_3$       & 3 & 5 & 1 & 1 & 1 & 5  \\
\end{tabular}
\caption{User Item Preference Rating for Example~\ref{ex1}\label{tab:ex1}}
\end{table}

\begin{example}\label{ex2}
{\em 
Imagine that the same user set and with the same itemset has now different ratings, as presented in Table~\ref{tab:ex2}. Let us assume that the user set needs to be partitioned into at most $2$ groups ($\ell \leq 2$). \qed }  
\end{example}

\begin{table}
\begin{tabular}{l*{6}{c}r}
User-item Ratings & $u_1$ & $u_2$ & $u_3$ & $u_4$ & $u_5$  & $u_6$  \\
\hline
$i_1$ 	    & 3 & 1 & 2 & 2 & 1 & 3   \\
$i_2$       & 1 & 4 & 5 & 5 & 2 & 2   \\
$i_3$       & 4 & 3 & 1 & 1 & 3 & 1  \\
\end{tabular}
\caption{User Item Preference Rating for Example~\ref{ex2}\label{tab:ex2}}
\end{table}



\subsection{Group Recommendation Semantics}

A group recommendation semantics spells out a numeric measure of just how satisfied a group is with an item recommended to it. 
Two popular semantics of group recommendation that have been employed in the literature on group recommendations are: (i) aggregate voting and (ii) least misery. Given a group $g$ and item $i$, the aggregated voting score of item $i$ for the group is the sum of the preference ratings of item $i$ for each member $u \in g$. On the other hand, the least misery score of item $i$ for group $g$ is the minimum preference rating of item $i$ across all  members of $g$. 




\begin{definition}[Least Misery Semantics $\mathcal{F}_{LM}$]\cite{reco1, polylens,DBLP:journals/pvldb/Amer-YahiaRCDY09,senot2010analysis} The group  satisfaction score of item $i$ for group $g$ is the least score of $i$ across  members of $g$, i.e., $sc(g,i) := min_{u \in g} sc(u,i)$. 
\end{definition}

\begin{definition}[Aggregated Voting  Semantics $\mathcal{F}_{AV}$]\cite{reco1, polylens,DBLP:journals/pvldb/Amer-YahiaRCDY09,senot2010analysis} 
The group satisfaction score of item $i$ for group $g$ is the aggregated score of $i$ for all members of $g$, i.e., $sc(g,i) :=\sum_{u \in g} sc(u,i)$. 
\end{definition}


\subsection{Group Satisfaction Aggregation}
\label{sec:agg} 	
Given a group $g$ and a list of $k$ recommended items $\mathcal{I}^k_g$,  there are multiple ways of aggregating the scores of the $k$ items in order to define the group $g$'s satisfaction with the recommended list $\mathcal{I}^k_g$. Some of the natural alternatives are described below.

\begin{itemize}
\item Max-aggregation: Satisfaction of the group is measured as the score of the very top item in the list, i.e., $g^s(\mathcal{I}^k_g) := sc(g,i^1)$.

\item Min-aggregation: Satisfaction of the group is measured as the score of the $k$-th item in the recommended list, i.e., $g^s(\mathcal{I}^k_g) := sc(g,i^k)$. 

\item Sum-aggregation: Satisfaction of the group is measured as the sum of scores of all items in the list, i.e., satisfaction of group $g$ is, $g^s(\mathcal{I}^k_g) := \Sigma_{i \in \mathcal{I}^k_g} sc(g,i)$. It is also possible to design a Weighted Sum aggregation function, where each of the $k$ items is assigned a differential weight correlated with its position $i$. We present a brief discussion of this extension in Section~\ref{disc}. 

\end{itemize}



Notice that when $k=1$,  Max, Min, and Sum-aggregation coincide. 


\subsection{Problem Definition}\label{sec:probdef} 
 {\bf  Recommendation Aware Group Formation (GF):} Given items $\{i_1,i_2,\ldots i_m\}$  and users $\{u_1,u_2,\ldots u_n\}$,  a group recommendation semantics 
LM or AV, two integers $k$ and  $\ell$, create a set of at most $\ell$ non-overlapping groups, where each group $g$ is associated with a top-$k$ itemset $\mathcal{I}^k_g$ in accordance with semantics LM or AV, s.t.:
\begin{itemize}
\item The aggregated group satisfaction of the created groups is maximized; i.e., maximize $Obj =\sum_{j=1}^{\ell}g^s_j(\mathcal{I}^k_{g_j})$.
\end{itemize} 
 


\eat{
\begin{table}
\centering
\caption{Notations and Interpretations}
\begin{tabular} {|c|p{5cm}|}
\hline
  {\bf Notation} & {\bf Interpretation} \\
   \hline 
   $sc(u,i)$ & relevance of item $i$ for user $u$ \\
   \hline 
  $sc(g,i)$ & relevance of item $i$ for group $g$ \\
   \hline 
  $\ell$ &  maximum total number of groups \\
  \hline 
  $y_{jp}$ & a decision variable to denote if item $j$ is the $k$-th item of group $p$ \\
  \hline
  $w_{ip}$ & a decision variable to denote if item $j$ is in the $(k-1)$-th itemset of group $p$ \\
  \hline
  $u_{ip}$ & a decision variable to denote if user  $u_i$ is part of group $p$ \\
  \hline
\end{tabular}\label{tab:notations}
\end{table}}

\section{Complexity Analysis}\label{hardness}
In this section, we show that the recommendation-aware group formation (GF) problem is NP-hard. Our hardness reduction is from Exact Cover by 3-Sets (X3C), known to be NP-hard~\cite{DBLP:books/fm/GareyJ79}. Since a direct reduction is involved, we first prove a helper lemma, which shows that a restricted version of Boolean Expected Component Sum (ECS\footnote{Expected Component Sum is also NP-hard \cite{DBLP:books/fm/GareyJ79}.}), called Perfect ECS (PECS for short), is NP-hard. We then reduce PECS to GF. 
\eat{
: (1) We consider a simpler version of GF (referred to as Simpler GF or SGF) for the complexity analysis. SGF is akin to GF, except that it does not have to form the groups such that no two groups have the same recommended top-$k$ item set. To show that SGF is NP-hard, we take a special case of Boolean Expected Component Sum (ECS)  (Boolean ECS is shown to be NP-hard~\cite{DBLP:books/fm/GareyJ79}) for the NP-hardness proof. We reduce this special case  of Boolean ECS (referred to as Boolean Perfect Expected Component Sum or Boolean PECS in brief) to SGF to prove that SGF is NP-hard.  2) We prove that Boolean PECS is also NP-hard; 3) Finally, we argue that our original version of GF must also be NP-hard.
} 

An instance of PECS consists of a collection $V$ of $m$-dimensional boolean vectors, i.e., $V \subset \{0,1\}^m$, and a number $K$. The question is whether there exists a disjoint partition of $V$ into $K$ subsets $V_1, ..., V_K$ such that $\sum_{i=1}^K max_{1\le j\le m}( \sum_{\vec{v}\in V_i} \vec{v}[j] ) = |V|$. We have: 

\begin{lemma}
PECS is NP-complete.
\end{lemma}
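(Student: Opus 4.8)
The plan is to show both that PECS lies in NP and that it is NP-hard, the latter by a reduction from Exact Cover by 3-Sets (X3C). Membership in NP is immediate: a disjoint partition $V_1,\dots,V_K$ of $V$ is a polynomial-size certificate, and verifying that the $V_i$ are pairwise disjoint with union $V$, computing $\sum_{i=1}^{K}\max_{1\le j\le m}\big(\sum_{\vec v\in V_i}\vec v[j]\big)$, and comparing it with $|V|$ can all be done in polynomial time. Note that NP-hardness of ordinary (Boolean) ECS does not give this for free, precisely because PECS fixes the threshold at the extremal value $|V|$, so a dedicated reduction is needed.

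The engine of the hardness proof is a simple structural reformulation that I would establish first. Since every coordinate of every vector lies in $\{0,1\}$, for each part $V_i$ we have $\max_{1\le j\le m}\big(\sum_{\vec v\in V_i}\vec v[j]\big)\le |V_i|$, hence $\sum_{i=1}^{K}\max_{j}\big(\sum_{\vec v\in V_i}\vec v[j]\big)\le \sum_{i=1}^K |V_i| = |V|$. So $|V|$ is the largest value conceivable, and it is attained \emph{if and only if} every nonempty part $V_i$ admits a coordinate $j_i$ with $\vec v[j_i]=1$ for all $\vec v\in V_i$. In other words, PECS is exactly the question: can $V$ be split into at most $K$ blocks, each of which is ``uniform'' in some coordinate? (With the convention that an empty block contributes $0$, ``into $K$ subsets'' may be read as ``into at most $K$ nonempty blocks''.)

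With this reformulation, the reduction from X3C is natural. Given an X3C instance with ground set $X=\{x_1,\dots,x_{3q}\}$ and a family $C=\{C_1,\dots,C_t\}$ of $3$-element subsets of $X$, construct a PECS instance of dimension $m=t$ (one coordinate per set of $C$): for each element $x_i$ let $\vec v_i\in\{0,1\}^t$ be its membership vector, $\vec v_i[s]=1$ iff $x_i\in C_s$, let $V$ be the collection $\{\vec v_1,\dots,\vec v_{3q}\}$ (treated as a multiset, so coincident membership vectors cause no trouble; equivalently, append a private coordinate to each $\vec v_i$), and set $K=q$. If $C'=\{C_{s_1},\dots,C_{s_q}\}$ is an exact cover, then the blocks $V_r=\{\vec v_i : x_i\in C_{s_r}\}$ partition $V$, each has size $3$, and all of its vectors are $1$ in coordinate $s_r$, so the perfect value $|V|=3q$ is attained. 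Conversely, given a perfect partition $V_1,\dots,V_q$, the structural observation supplies for each $r$ a coordinate $j_r$ with $\{x_i:\vec v_i\in V_r\}\subseteq C_{j_r}$, so $|V_r|\le 3$; since at most $q$ blocks of size $\le 3$ cover all $3q$ vectors, every block has size exactly $3$ and $\{x_i:\vec v_i\in V_r\}=C_{j_r}$, and distinct blocks force distinct sets $C_{j_r}$, so $\{C_{j_1},\dots,C_{j_q}\}$ is an exact cover of $X$.

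I expect the main obstacle to be the backward direction — in particular the counting step that upgrades ``every block is coordinate-uniform'' to ``every block has size exactly $3$ and coincides with some $C_s$'', together with checking that these sets are pairwise distinct so that they genuinely form an exact cover. The remaining care is bookkeeping: handling possibly-repeated membership vectors (by viewing $V$ as a collection, or via private coordinates) and pinning down the empty-block convention so that the equivalence ``perfect value $\iff$ at most $q$ uniform blocks'' is exact.
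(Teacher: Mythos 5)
Your proof is correct and follows essentially the same route as the paper's: the identical reduction from X3C (membership vectors of ground elements, one coordinate per $3$-set, $K=q$), with the backward direction resting on the same counting argument that forces every block to have exactly three vectors and coincide with some $S_j$. The only differences are presentational --- you make explicit the characterization ``perfect value $\iff$ every block is coordinate-uniform'' and the multiset/distinctness bookkeeping that the paper leaves implicit --- which slightly tightens but does not change the argument.
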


\begin{proof}
The membership of PECS in NP is straightforward. 

To prove hardness, we reduce a known NP-Complete problem, namely, Exact Cover by 3-sets (X3C)~\cite{DBLP:books/fm/GareyJ79} to PECS. 

X3C: An instance $\mathcal{I}$ of X3C consists of a ground set $\mathcal{X}$ with $3q$ elements and a collection $\mathcal{C} = \{S_1, ..., S_m\}$ of 3-element subsets of $\mathcal{X}$. The problem  is to find if there exists a subset $\mathcal{C}' \subseteq \mathcal{C}$, such that $\mathcal{C}'$ is an exact cover of $\mathcal{X}$, i.e., each element of $\mathcal{X}$ occurs in exactly one subset in $\mathcal{C}'$. 

Given an instance $\mathcal{I}$ of X3C, we create an instance $\mathcal{J}$ of PECS as follows. Transform each element $x_i \in \mathcal{X}$ into a boolean vector $\vec{v}_i\in\{0,1\}^m$ by setting $\vec{v}_i[j] = 1$ if $x_i\in S_j$ and $\vec{v}_i[j] = 0$ otherwise. Thus, $V = \{\vec{v}_1, ..., \vec{v}_{3q}\}$, where vector $\vec{v}_i$ corresponds to ground element $x_i\in\mathcal{X}$. By construction, subset $S_j\in \mathcal{C}$ corresponds to dimension $j$ of the vectors. Notice that at most three vectors have a $1$ in any given dimension. Thus, instance $\mathcal{J}$ consists of the vectors $V$ and the number $K := q$. 

We claim that $\mathcal{I}$ is a YES-instance of X3C iff $\mathcal{J}$ is a YES-instance of PECS. 

($\Rightarrow$): Suppose $\mathcal{C}' = \{S_{j_1}, ..., S_{j_q}\} \subseteq \mathcal{C}$ is an exact (disjoint) cover of $\mathcal{X}$. Then consider the partition $[V_1, ..., V_q]$ of $V$, where $\vec{v}_i \in V_k$ iff $x_i \in S_{j_k}$. Notice that each $V_k$ consists of exactly three vectors. Since $\mathcal{C}'$ is an exact cover of $\mathcal{X}$, each element $x\in \mathcal{X}$ appears in exactly one subset $S_k \in \mathcal{C}'$. Thus, $\max_{1\le \ell\le m} \sum_{\vec{v}\in T_i} \vec{v}[\ell] = 3$ and hence $\sum_{k=1}^q(\max_{1\le \ell\le m} \sum_{\vec{v}\in T_i} \vec{v}[\ell]) = 3q$, showing $\mathcal{J}$ is a YES-instance. 

($\Leftarrow$): Let $\pi$ be a partition of $V$ such that\\ $\sum_{T\in \pi}(\max_{1\le \ell\le m} \sum_{\vec{v}\in V_i} \vec{v}[\ell]) = 3q$, witnessing the fact that $\mathcal{J}$ is a YES-instance. Observe that any block $T \in \pi$ with $> 3$ vectors in it cannot contribute more than $3$ to the sum above. As well, any block $T \in \pi$ with $< 3$ vectors will surely contribute less than $3$ to the sum above. Since the overall sum is $3q$, it follows that every block must have exactly three vectors in it. For a block $T\in \pi$, let $dim(T) = argmax\{\sum_{\vec{v}\in T} \vec{v}[j] \mid 1\le j\le m\}$, i.e., $dim(T)$ is the dimension which maximizes the component sum. Then consider the collection $\mathcal{C}' = \{S_{dim(T)} \in \mathcal{C} \mid T\in \pi\}$. It is easy to verify that $|\mathcal{C}'| = q$ and that every element $x\in\mathcal{X}$ appears in exactly one set $S\in \mathcal{C}'$. 

\eat{ 
 the ground set $\mathcal{X}$ represents the collection of vectors $C$ in PECS, i.e., $|C|=3q$. Subset $i$ (which is a triplet) in collection $\mathcal{C}$ of X3C represents a dimension $i$ in PECS. In particular, each element $x$  in the triplet $i$ gets  a $1$ for dimension $i$, whereas, all other elements get a $0$ in dimension $i$. Notice, each dimension this way has exactly $3$ and only $3$ 1's, and the remaining $3q-3$ elements get $0$'s. Finally, we set the number of partitions $K$ of PECS to $q$. Given this instance, the 
objective is to create $K$ partitions such that the objective function value of PECS is $|C|$, such that there exists a solution of X3C of size $q$, if and only if, a solution to our instance of Boolean PECS exists. } 

\end{proof}


\begin{theorem}
The Group Formation Problem is NP-hard under both the least misery and aggregated voting semantics. 
\end{theorem}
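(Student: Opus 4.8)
The plan is to reduce PECS --- which the Lemma shows is NP-complete --- to the natural decision version of GF, specializing to $k=1$. Since Max-, Min-, and Sum-aggregation all coincide when $k=1$, a hardness proof in this case immediately yields NP-hardness of GF under any of these aggregations, so it suffices to treat $k=1$. Given a PECS instance $(V,K)$ with $V=\{\vec v_1,\ldots,\vec v_N\}\subseteq\{0,1\}^m$ and $N:=|V|$, I construct a GF instance with one user $u_i$ per vector $\vec v_i$, one item $i_j$ per coordinate $j\in\{1,\ldots,m\}$, ratings $sc(u_i,i_j):=\vec v_i[j]\in\{0,1\}$, and parameters $k:=1$ and $\ell:=K$. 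The transformation is clearly polynomial. A partition of the users into $\le K$ groups $g_1,\ldots,g_t$ corresponds to a partition $V_1,\ldots,V_t$ of $V$ via $V_p=\{\vec v_i : u_i\in g_p\}$, and since $k=1$ each group's satisfaction is simply the score of its single top item. I will show that under AV the optimum of $Obj$ equals $N$ iff $(V,K)$ is a YES-instance, and under LM the optimum of $Obj$ equals $K$ iff $(V,K)$ is a YES-instance.

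Under AV, $Obj=\sum_{p=1}^{t}\max_{1\le j\le m}\sum_{\vec v\in V_p}\vec v[j]$ is exactly the PECS objective of the partition. Because every entry $\vec v[j]$ is $0$ or $1$, we have $\max_j\sum_{\vec v\in V_p}\vec v[j]\le|V_p|$, with equality iff some coordinate is $1$ throughout $V_p$; summing over $p$ gives $Obj\le\sum_p|V_p|=N$, with equality iff every block has a common $1$-coordinate. Hence $Obj=N$ is attainable iff $V$ can be partitioned into $\le K$ blocks each sharing a $1$-coordinate, which is precisely the PECS YES-condition, so GF under AV is NP-hard.

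Under LM the same instance works with target $K$: each group's top-item score is $\max_j\min_{\vec v\in V_p}\vec v[j]\in\{0,1\}$, equal to $1$ iff some coordinate is $1$ throughout $V_p$, so $Obj\le K$. If $(V,K)$ is a YES-instance there is a partition of $V$ into $\le K$ blocks each with a common $1$-coordinate; repeatedly splitting any block of size $\ge 2$ preserves this property, so --- using $N\ge K$, which holds without loss of generality and in particular for the instances produced by the Lemma, where $N=3q>q=K$ --- we can refine it to exactly $K$ nonempty blocks and certify $Obj=K$. Conversely, $Obj=K$ forces exactly $K$ groups each sharing a $1$-coordinate, i.e., a PECS witness. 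Thus $Obj=K$ is attainable iff $(V,K)$ is a YES-instance, and GF under LM is NP-hard, completing the proof.

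I expect the main subtlety to lie in deducing a legal PECS partition from an optimal GF solution --- handled for AV by the inequality $\max_j\sum_{\vec v\in V_p}\vec v[j]\le|V_p|$ and tracking when it is tight --- and, for LM, in bridging the gap between ``at most $\ell$ groups'' and ``exactly $K$ blocks'', which the block-splitting observation closes. Ties in the top-$1$ item are irrelevant, since only the attained score contributes to $Obj$, and no machinery beyond the Lemma is required.
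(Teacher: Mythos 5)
Your proof is correct and takes essentially the same route as the paper: the identical reduction from PECS with one user per vector, one item per coordinate, $k=1$, $\ell=K$, and the observation that a group contributes $1$ under LM (resp.\ its size under AV) exactly when its vectors share a $1$-coordinate. You are somewhat more careful than the paper in spelling out the AV case (which the paper dismisses as ``trivial'') and in closing the gap between ``at most $\ell$ groups'' and ``exactly $K$ blocks'' via block-splitting, but the underlying argument is the same.
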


\begin{proof} 
We prove hardness of the restricted version of GF, where $k=1$ item is to be recommended to each group such that sum of satisfaction measures of each group is maximized, from which the theorem follows. We prove this hardness by reduction from PECS. Given an instance $\mathcal{I}$ of PECS, consisting of a set of boolean vectors $V \subset \{0,1\}^m$ and an integer $K$, we create an instance $\mathcal{J}$ of GF as follows. Each vector $\vec{v} \in V$ corresponds to a user's preference over the $m$ items, where preferences are binary. The decision version of GF asks whether there exists a disjoint partition of $V$ into $K$ groups $g_1, ..., g_K$ such that $\sum_{j=1}^K \max_{1\le\ell\le m} (min_{\vec{v} \in g_j} \vec{v}[\ell]) \ge K$. We claim that this is true iff $\mathcal{I}$ is a YES-instance. 

($\Rightarrow$): Suppose there are $g_1, ..., g_K$ such that \\ 
$\sum_{j=1}^K \max_{1\le\ell\le m} (min_{\vec{v} \in g_j} \vec{v}[\ell]) \ge K$. This sum can never be $> K$. If we replace the $min$ by a $sum$ in the objective function above, it is easy to see that the value will be exactly $|V|$, showing $\mathcal{I}$ is a YES-instance. 

($\Leftarrow$): Suppose there is a disjoint partition of $V$ into $K$ subsets showing that $\mathcal{I}$ is a YES-instance. Again, replacing the innermost summation by a $min$ in the objective function of PECS, will result in a value of $K$, showing $\mathcal{J}$ is a YES-instance. 

Hardness under the aggregated voting semantics follows trivially from the construction above. 


\eat{ 
Given a set of $n$ users and $m$ items $\mathcal{I}=\{i_1,i_2,\ldots, i_m\}$, where each user $u$'s preference is described by a vector of length $m$, where $sc(u,j)$ denotes $u$'s preference on item $i_j$. Any $sc(u,j)$ takes a value from a finite discrete set of integer numbers, namely the rating set $\mathcal{R}$.  Additionally, part of the inputs are two integers - number of groups $x$, and $k$ as the top-$k$ recommended items. The decision version of the problem attempts to partition the users in $x$ different groups such that the aggregated top-$k$ recommended score across $x$ groups is $s'$.

The membership of the decision version of SGF is clearly polynomial. To prove NP-hardness, we propose a polynomial time reduction from a known NP-Complete problem ECS (Expected Component Sum)~\cite{DBLP:books/fm/GareyJ79} to a simpler instance of GF problem. We describe these details next.

The decision version of a an instance of Boolean ECS is as follows: given a collection $C$ of $m$ dimensional Boolean vectors $v=v_1,v_2,\ldots,v_m$, is there a partition of $C$ into $K$ disjoint subsets $C_1,C_2,\ldots C_K$, such that $\Sigma_{i=1}^K max_{1 \leq j \leq m} (\Sigma_{(\forall v \in C_i)} v_j) = |C|$. Notice this  is a special case of generalized ECS, where the objective function value is equal to the number of vectors in the collection. This version is referred to as Perfect ECS (or PECS for brevity) which we use for the reduction and later on prove that PECS is also NP-hard.

Given a Boolean instance of PECS, we perform the following polynomial time reduction. Each vector $v$ stands for a user $u$ in GF, and the associated vector describes $u$'s preference over the $m$ items. $sc(u,j)=0/1$. Therefore, the number of user $n$ corresponds to the size of the collection $C$, i.e., $n=|C|$. We set $k=1$ for the top-$k$ recommended set. The number of groups $x$ is set to $K$. This exactly creates an instanceof  the simpler version of GF from PECS in polynomial time. Now, the task is to compute $K$ groups such that the objective function value of GF is $n$, such that there exists a solution to the Boolean PECS with sum $|C|$, if and only if, a solution to our instance of SGF exists. 
} 

\end{proof}

\eat{
\note[Laks]{The following lemma needs to be properly proved.} 

\begin{lemma}
DGF in general is also NP-hard.
\end{lemma}

\begin{proof}
Since the simpler version of GF, namely SGF is NP-hard, it is easy to notice that the additional constraint of GF, i.e., no two top-$k$ recommended sets can be the same, makes the problem only more difficult.  Therefore, GF is also NP-hard.
\end{proof}
}

\section{Approximation Algorithms: LM}\label{sec:apprxalgo}
In this section, we investigate efficient algorithms for group formation based on LM. Notice that when $k=1$, the Max, Min, and Sum aggregation (see Section~\ref{sec:agg}) coincide. When $k>1$, basing the LM score on the bottom item (i.e., the $k$-th item in the top-$k$ list) corresponds to Min aggregation while basing it on the top item corresponds to Max aggregation, and the entire top-$k$ set corresponds to Sum aggregation. Unless otherwise stated, we henceforth focus on Min and Sum aggregation.


We propose two greedy algorithms, where both of them have respective {\em absolute error guarantees}. Algorithm {\tt GRD-LM-MIN} is designed for LM considering Min aggregation and Algorithm {\tt GRD-LM-SUM} is for Sum aggregation. 


For simplicity of exposition, we use our running example~\ref{ex1} and ~\ref{ex2} from Section~\ref{sec:model}. We interleave our exposition of the algorithm with an illustration of how it works on these examples for $k=1$ as well as $k=2$ (i.e., top-$1$ or top-$2$ items are recommended).

\subsection{Min Aggregation}
Our proposed algorithm operates in a top-down fashion and gradually forms the groups. Intuitively, the algorithm consists of the following three high level steps. \\
{\bf Step 1 - forming a set of intermediate groups:} It begins with the user set $\mathcal{U}$ and leverages a preference list ($\mathcal{L}$) of items for each user $u$, sorted in non-increasing order of item ratings. In our running example, for user $u_2$ in Example~\ref{ex1}, $\mathcal{L}^{u_2} = \langle i_3,5; i_2,3; ; i_1,2 \rangle$. After that, the algorithm creates a set of intermediate groups. Each group $g$ {\em contains a set of users who have the same top-$k$ item sequence}, as well as the {\em same preference rating for the bottom item} across all users in the group. E.g., the group $\{u_2, u_6\}$ shares the same top-$1$ item ($i_3$) and the same rating for it ($5$). Thus, for $k=1$, this is a valid group. On the other hand, for $k=2$, even though $u_2$ and $u_6$ share the same top-$2$ sequence of items ($i_3; i_2$), they have distinct ratings for the bottom item, namely $3$ and $2$, and so they cannot be in the same group for $k=2$. 

Assuming Min-aggregation, the interesting observation is that, for each group, it is a good strategy to form these groups on the common top-$k$ item sequence, as long as the group members (users)  match on the preference rating of the bottom item. This is because the objective function (LM score) is based on the ratings of the bottom (i.e., $k$-th) item among group members. On the other hand, a subtle point is that it is {\em not} a good strategy to consider just the bottom item only (even though the aggregation is on that item) instead of the entire top-$k$ sequence. The reason is that the bottom item recommended to a group containing a user $u$ may differ from $u$'s personal bottom item. We next illustrate this with an example. 

\begin{example}\label{ex3} 
{\em 
Consider a group consisting of two users $u_1, u_2$ whose individual ratings over items $i_1, ..., i_3$ are respectively $u_1 = (5,4,1)$ and $u_2 = (1,4,5)$. For $k=2$, the second best (i.e., bottom) item for either user in isolation is $i_2$, and yet under LM semantics, it can be easily verified that the top-$2$ item list recommended to the group $\{u_1, u_2\}$ is $(i_2; i_{*})$, where $i_{*}$ could be any one of the remaining items. Notice that the bottom item recommended to the group is different from the indidual bottom  preference of every group member, even though they all shared the same item $i_2$ as their bottom preference, with identical ratings ($4$). The reason this happened is because $i_2$ ended up having the \emph{highest} LM score for this group, among \emph{all items}. When $i_2$ is moved to the top position, no matter which other item is chosen as the top-$2$ (bottom) item for the group, its LM score is just $1$ in this example. This shows that forming a group solely based on shared bottom item and score can lead to a group with a poor LM score, when $k>1$. \qed } 
\end{example} 

To generalize this observation, our algorithm needs to store the top-$k$ common sequence as well as the rating of the item on which the group satisfaction (i.e., the LM score of that group) is aggregated.
\eat{ 
because, depending on the members in a group the top-$k$ itemset changes (for example, top-$2$ item for $(u_2,u_3)$ is $i_1$, whereas, $u_2$ herself does not have $i_1$ as her top-$1$ or top-$2$ item). 
} 
 Next, we describe how one can execute this first step above efficiently.

For every user $u$, we create a sequence $\langle i^u_1,i^u_2,.. i^u_k:sc(u,i^k)\rangle$ comprising her top-$k$ ranked items (in sequence) followed by the preference rating of the $k$-th (bottom) item, $sc(u,i^k)$. We use a hash map to hash each user $u$ using $(\langle i^u_1,i^u_2,.. i^u_k:sc(u,i^k)\rangle)$ as the key and the user id $u$ as the value. 
Then, we create a heap $\mathcal{H}$ to store the LM scores $sc(u,i^k)$ for various users.\footnote{It is sufficient to store the value once per intermediate group formed.} This data structure enables us to efficiently retrieve the highest LM score, needed for Step 2 below. 
In our Example~\ref{ex1}, when $k=1$, we hash  user $u_2$ with key $\langle i_3:5 \rangle$ and value $u_2$. We add the entry $sc(u,i^k)$ to the heap, i.e., $\mathcal{H}.insert(sc(u,i^k))$. Notice that, users with same keys get hashed together and the associated value gets updated with their union after each such operation in the hash map. For example, $\{u_3,u_4\}$ gets hashed together, with key $\langle i_2:5 \rangle$ and value $\{u_3,u_4\}$. Finally, we preserve the association between the hash keys and the corresponding LM scores in another data structure. 
This operation generates the set of intermediate groups, where users with same keys belong to the same group. 

For our running example (Example~\ref{ex1}), when $k=1$, we form the following set of intermediate user groups: $\{u_2, u_6\}$ on item $i_3$,  $\{u_3,u_4\}$ on item $i_2$, and two singleton groups for $\{u_1\}$ and $\{u_5\}$, since they do not share a common top-$1$ item. 
When $k=2$, $\{u_3,u_4\}$ will be grouped together  with key $\langle i_2, i_1:2 \rangle$ and value $\{u_3,u_4\}$. This step creates the  following intermediate  groups:  $\{u_3,u_4\}$, and four singleton groups, ($\{u_1\}, \{u_2\}, \{u_5\}, \{u_6\}$). Observe that the sets of intermediate groups generated for $k=1$ and for $k=2$ are different.

{\bf Step 2 - greedy selection of $\ell-1$ groups:} Recall that the group formation problem requires that we form at most $\ell$ groups of users. Observe that the objective function value $Obj$ (see Section~\ref{sec:probdef}) is maximized when all $\ell$ groups are formed. Accordingly, in this step, the algorithm runs in $\ell-1$ iterations. Continuing with Example~\ref{ex1}, suppose $\ell=3$, then this means that this step of the algorithm runs for $2$ iterations. In iteration $i$, it  retrieves the {\em maximum element}  from the heap $\mathcal{H}$ (i.e., highest associated  LM score), extracts the corresponding key and uses that to output the user group from the hash map. After that, it deletes that $\langle key,value \rangle$  entry from the hash map and deletes the corresponding LM score from the heap. 


When $k=1$, in Example~\ref{ex1}, iteration $1$ outputs the group $\{u_3,u_4\}$ with score $5$ and iteration $2$ outputs $\{u_2, u_6\}$ with score $5$. When $k=2$, for the same example instance, iteration $1$ outputs  the group $\{u_1\}$ with score $3$ and  iteration $2$ outputs $\{u_2\}$ with score $3$. 

{\bf Step 3 - forming the $\ell$-th group:}
Finally, the last group $g_{\ell}$ is formed by considering all the remaining users from the hash map and a top-$k$ LM score is assigned to this group. For our running example (Example~\ref{ex1}), when $k=1$, this group is $\{u_1,u_5\}$ with LM score of $1$. When $k=2$, this last group is  $\{u_3,u_4,u_5,u_6\}$ with LM score of $1$. Our algorithm terminates after this iteration. When $k=1$, the final set of groups are $\{u_3,u_4\}$, $\{u_2, u_6\}$, $\{u_1,u_5\}$ and the corresponding value $Obj$ of the objective function is $5+5+1 = 11$. When $k=2$, the final set of groups are $\{u_1\}$, $\{u_2\}$, $\{u_3,u_4,u_5,u_6\}$. The corresponding value of $Obj$ is $3+3+1=7$. The pseudo-code of the algorithm is presented in Algorithm~\ref{alg:lm}.


\begin{algorithm}[t]
\caption{Algorithm {\tt GRD-LM-MIN}}
\label{alg:lm}
\begin{algorithmic}[1]
\begin{small}
\REQUIRE Preference lists $\mathcal{L}^u \forall u \in \mathcal{U}$, $k$, group recommendation semantics $\mathcal{F}_{LM}$, $\ell$; hash map $h$ and heap $\mathcal{H}$;
\STATE for every user $u$, generate top-$k$ preference list $\mathcal{L}^{u^k}$ from $\mathcal{L}^u$;
\STATE for every user $u$, generate top-$k$ item sequence and the $k$-th item score $sc(u,i^k)$, i.e., $\langle i^u_1,i^u_2,.. i^u_k:sc(u,i^k)\rangle$;
\STATE For every distinct $\langle i^u_1,i^u_2,.. i^u_k:sc(u,i^k)\rangle$, create $h.key=\langle i^u_1,i^u_2,.. i^u_k:sc(u,i^k)\rangle$, $h.value=\{u\}$, where each user $u$ has same $sc(u,i^k)$;
\STATE Perform $\mathcal{H}.insert = sc(u,i^k)$, associated with every unique key;
\STATE $\mathcal{G} = \{\text{ values present in h} \}$;
\STATE $Obj = 0$;
\STATE $j=0$;
\WHILE {($j < (\ell-1)$)}

\STATE Retrieve the maximum LM score $SC$ from $\mathcal{H}(Maximum)$;
\STATE Retrieve the sequence key $\mathcal{S}$ associated with $SC$;
\STATE Retrieve the value for the given key, $g = h.key(\mathcal{S})$;
\STATE $Obj = Obj + SC$;
\STATE Remove $\mathcal{S}$ from $h$;
\STATE Remove $SC$ from $\mathcal{H}$;
\STATE $j=j+1$;
\ENDWHILE
\STATE Form group $g_{\ell}$ with all the remaining users in $\mathcal{U}$;
\STATE $g_{\ell}^s(\mathcal{I}^k_{g_{\ell}}) = \min_{\forall u \in g_{\ell}} sc(u,i^k)$;
\STATE $Obj = Obj + g_{\ell}^s(\mathcal{I}^k_{g_{\ell}})$;
\RETURN $g_1,g_2,\ldots g_\ell$ groups and $Obj$;
\end{small}
\end{algorithmic}
\end{algorithm}

In general, the grouping formed by Algorithm {\tt GRD-LM-MIN} is sub-optimal. For  Example~\ref{ex1}, we saw that the objective function value achieved by the grouping found by {\tt GRD-LM-MIN} is $11$. It may be verified that the optimal grouping for Example~\ref{ex1} is $\{u_1,u_3,u_4\}$, $\{u_2, u_6\}$, $\{u_5\}$ with an overall $Obj$ value of $4 + 5 + 3 = 12$. 

\subsection{Sum Aggregation}
The greedy algorithm for Sum aggregation, {\tt GRD-LM-SUM}, exploits a similar framework, except that it primarily differs in {\em Step-1, i.e., in the intermediate groups formation step} of the previous algorithm. Notice that {\tt GRD-LM-MIN} forms these intermediate groups by bundling those users who have the same top-$k$ sequence, as well as the same preference rating for the $k$-th item. Obviously, this strategy falls short for {\tt GRD-LM-SUM}, where, we are interested to aggregate satisfaction over the entire $k$-itemset. Therefore,  {\tt GRD-LM-SUM} forms these intermediate user groups by  hashing users who have not only the {\em same top-$k$ item sequence, but also the same score for each item}. 

More concretely, for every user $u$, we create a sequence $\langle i^u_1:sc(u,i^1),i^u_2:sc(u,i^2),.. i^u_k:sc(u,i^k)\rangle$ comprising her top-$k$ ranked items and scores. We use a hash map to hash each user $u$ using $\langle i^u_1:sc(u,i^1),i^u_2:sc(u,i^2),.. i^u_k:sc(u,i^k)\rangle$  as the key and the user id $u$ as the value. Then, in the heap $\mathcal{H}$, we store the aggregated LM scores $sc(u,\sum_{j=1}^k i^j)$ for various users - i.e., we perform, $\mathcal{H}.insert(sc(u,\sum_{j=1}^k i^j))$.
In our Example~\ref{ex1}, when $k=2$, this way, users  $u_3$ and $u_4$ are hashed together with key $\langle i_2:5, i_1:2 \rangle$ and value $\{u_3,u_4\}$. The other four users get hashed in $4$ individual buckets, as their top-$2$ item rating sequences do not match. We insert a score of $5+2=7$ in the heap for users $u_3$ and $u_4$, and the respective sum scores of top-$2$ items for the other four users. 

{\em Steps 2 and 3} of {\tt GRD-LM-SUM} are akin to that of {\tt GRD-LM-MIN}, except for the obvious difference, that the group satisfaction score is now aggregated over all $k$-items. We omit the details for brevity. Using Example~\ref{ex1}, this algorithm may form the following three groups at the end - $\{u_3,u_4\}$, $\{u_1, u_5, u_6\}$, $\{u_2\}$ with the total objective function value as $(5+2)+(1+1)+(5+3)=17$. Appendix~\ref{addex} gives an example where the grouping produced by {\tt GRD-LM-SUM} is suboptimal. 

\subsection{Analysis}\label{sec:analyseslm}
In this subsection, we analyze the performance of the greedy algorithms for LM, compared to 
optimal grouping. We also analyze the running time complexity of the algorithms. 

{\bf Approximation Analysis:} Our analysis focuses on the absolute error of the greedy algorithms, compared to their optimal solutions. 


\begin{definition}
[Absolute error~\cite{kann1992approximability}] 
{\em 
Let $\Pi$ be a problem, $I$ an instance of $\Pi$, $A(I)$ be the solution provided by algorithm $A$, $f(A(I))$ be the value of the objective function for that solution. Let $OPT(I)$ be the value of the objective function for an optimal solution on instance $I$. We say Algorithm $A$ solves the problem with a guaranteed absolute error of $\eta$, provided for every instance $I$ of $\Pi$, $|f(A(I)) - OPT(I)| <= \eta$. \qed 
} 
\end{definition}


\begin{theorem}
Algorithm {\tt GRD-LM-MIN} solves the group formation problem under LM semantics using min-aggregation with a  guaranteed absolute error of at most $r^{max}$, where $r^{max} \in \mathcal{R}$ is the maximum value in the rating scale.
\end{theorem}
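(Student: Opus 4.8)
The plan is to show that the grouping produced by {\tt GRD-LM-MIN} differs from an optimal grouping by at most $r^{max}$ in objective value, by a careful comparison of the two solutions group by group. First I would observe that the objective is a sum of $\ell$ terms, one per group, where the term for a group is the LM score of its bottom ($k$-th) item, and that this value always lies in $[r^{min}, r^{max}] \subseteq [0, r^{max}]$. The first key point is that the intermediate groups formed in Step~1 are, by construction, the \emph{finest} partition of $\mathcal{U}$ that is ``LM-coherent'' for $k$, in the sense that every user in an intermediate group contributes the \emph{same} LM bottom-item score; consequently every coarser grouping (in particular the optimal one) can only be obtained by \emph{merging} intermediate groups, and merging can never increase the LM bottom score of the resulting block above the minimum of the bottom scores of the merged pieces. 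So the optimal solution can be viewed as a partition of the set of intermediate groups into at most $\ell$ blocks, where each block's value is the min of the bottom scores of the intermediate groups it contains.

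Next I would argue about what the greedy algorithm does: in Steps~2--3 it picks the $\ell-1$ intermediate groups with the largest LM bottom scores as singleton blocks, and dumps everything else into the $\ell$-th block. Let the intermediate groups have bottom scores $s_1 \ge s_2 \ge \cdots \ge s_t$ (in sorted order). Then the greedy objective is $s_1 + s_2 + \cdots + s_{\ell-1} + (\text{min over the rest})$, and since the min over the rest is at least $0$ (ratings are nonnegative, with $r^{min}\ge 0$), greedy achieves at least $s_1 + \cdots + s_{\ell-1}$. The key inequality is an upper bound on $OPT$: since $OPT$ partitions the $t$ intermediate groups into $\le \ell$ blocks and each block's value is the \emph{minimum} bottom score among its members, and at most one intermediate group per block can be the one realizing that minimum, $OPT$ is bounded above by the sum of the $\ell$ largest values among $\{s_1,\dots,s_t\}$, i.e.\ $OPT \le s_1 + s_2 + \cdots + s_\ell$. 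Combining, $OPT - f(\text{greedy}) \le (s_1 + \cdots + s_\ell) - (s_1 + \cdots + s_{\ell-1}) = s_\ell \le r^{max}$. Since greedy is a feasible solution, $f(\text{greedy}) \le OPT$ as well, so $|f(\text{greedy}) - OPT| \le r^{max}$.

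The main obstacle — and the step that needs the most care — is justifying the upper bound $OPT \le \sum_{j=1}^{\ell} s_j$ rigorously: one must argue that in \emph{any} partition of $\mathcal{U}$ into $\le \ell$ groups, each group's LM satisfaction is at most the bottom score of \emph{some} intermediate group entirely contained in it (this uses the ``finest LM-coherent partition'' property and the monotonicity of $\min$ under merging), and that distinct output groups own distinct witnessing intermediate groups, so the $\ell$ group values are bounded by $\ell$ distinct $s_j$'s, hence by the top $\ell$ of them. A minor subtlety to double-check is the Min-aggregation bookkeeping when $k>1$ (as in Example~\ref{ex3}): the intermediate-group key includes the full top-$k$ sequence, so users in one intermediate group really do share the same group-level bottom item and score, which is what makes ``same contribution'' hold; I would state this as a small lemma before the main estimate. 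Finally I would note $\eta = r^{max}$ is independent of $n$, $m$, $k$, and $\ell$, which is the content of a \emph{bounded absolute error} guarantee.
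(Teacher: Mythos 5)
Your argument has a genuine gap: the key inequality $OPT \le s_1 + s_2 + \cdots + s_\ell$ is false. The flaw is the assumption that the optimal grouping is a coarsening of the Step-1 partition, so that distinct output groups own distinct witnessing intermediate groups. In fact an optimal group need not contain any intermediate group, and -- more damagingly -- a single intermediate group can be \emph{split} across several optimal groups, each piece of which retains the full bottom score of that intermediate group (every piece still consists of users sharing the same top-$k$ sequence and the same $k$-th score). Concretely, take $k=1$, $\ell=3$, items $i_1,i_2,i_3$, and users with rating vectors $u_1=u_2=(5,1,1)$, $u_3=(1,4,1)$, $u_4=(1,1,1)$. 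The intermediate groups are $\{u_1,u_2\}$, $\{u_3\}$, $\{u_4\}$ with bottom scores $s_1=5$, $s_2=4$, $s_3=1$, so your bound gives $OPT\le 10$; but the feasible partition $\{u_1\},\{u_2\},\{u_3,u_4\}$ has value $5+5+1=11$. (The theorem's conclusion is not violated here, since greedy attains $10$ and $11-10\le r^{max}$, but your derivation of the error bound as $OPT - f(\mathrm{greedy}) \le s_\ell \le r^{max}$ collapses.) What you do establish correctly is the lower bound $f(\mathrm{greedy})\ge s_1+\cdots+s_{\ell-1}$ and the monotonicity fact that any group's LM-Min score is at most $\min_j s_j$ over the intermediate groups $j$ it \emph{intersects}; but since several optimal groups may each intersect (indeed be contained in) the same high-scoring intermediate group, the pigeonhole step bounding $OPT$ by the top $\ell$ values among the $s_j$ does not go through.

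For reference, the paper takes a different route: it sorts both the greedy and the optimal groups by their LM scores, argues a term-by-term domination of the first $\ell-1$ optimal scores by the first $\ell-1$ greedy scores, and charges the entire error to the $\ell$-th group, which can lose at most $r^{max}$. To repair your version you would need an accounting that charges each optimal group to an intermediate group it intersects while correctly handling the case where many optimal groups are charged to the same one; as stated, the counting is simply wrong.
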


\begin{proof}
Let $g_1,g_2,\ldots g_\ell$  (resp., $g'_1,g'_2,\ldots g'_\ell$) be the set of groups formed by Algorithm {\tt GRD-LM-MIN} (resp., the optimal algorithm). Sort the groups w.r.t. the group's LM score and assume that the orders $g_1, ..., g_{\ell}$ and $g_1', ..., g_{\ell}'$ are non-increasing in the group LM scores. 

Recall that $\mathcal{I}^k_g$ denotes the top-$k$ item list recommended to a group $g$ under the group recommendation semantics under consideration, which, for this theorem, is LM. Similarly, $g^s(\mathcal{I}^k_g)$ denotes the LM score of group $g$ w.r.t. the recommended top-$k$ item list $\mathcal{I}^k_g$. Let ${\tt OPT}(i) := \sum_{j=1}^i g'^s_j(\mathcal{I}^k_{g'_j})$ and 
${\tt GRD}(i) := \sum_{j=1}^i g^s_j(\mathcal{I}^k_{g_j})$, i.e., they are the partial sum of LM scores of the first $i$ groups formed by the two algorithms. 
Clearly, ${\tt OPT}(\ell)$ and ${\tt GRD}(\ell)$ are the final objective function values achieved by both algorithms and ${\tt OPT}(\ell) = {\tt OPT}(\ell-1) + g'^s_{\ell}(\mathcal{I}^k_{g'_{\ell}})$ and similarly, 
${\tt GRD}(\ell) = {\tt GRD}(\ell-1) + g^s_{\ell}(\mathcal{I}^k_{g_{\ell}})$. 
\eat{ 
denote the objective function value of the optimal algorithm and that of Algorithm GRD-LM respectively over all $\ell$ groups. First, for both algorithms, we rank these created $\ell$ groups based on their respective group satisfaction score in a descending fashion, ${\tt GRD}(\ell) = {\tt GRD}(\ell-1) + g_{\ell}^s(\mathcal{I}^k_{g_{\ell}})$ and ${\tt OPT}(\ell) = {\tt OPT}(\ell-1) + {g'}_{\ell}^s(\mathcal{I}^k_{{g'}_{\ell}})$, i.e., the overall objective function value is the summation of the group satisfaction score of the best $\ell-1$ groups and that of the $\ell$-th (i.e, the last) group. 
} 
The overall proof hinges on the following points.

(1) The aggregated satisfaction score generated by the best $\ell-1$ groups of {\tt GRD}  is always larger or equal to that of {\tt OPT}, i.e., ${\tt GRD}(\ell-1) \geq {\tt OPT}(\ell-1)$. We prove this below with a simple domination argument. 

First, notice that the LM score of $g_1$ cannot be less than that of $g_1'$. By construction, $g_1$ consists of a set of users who are indistinguishable w.r.t. their top-$k$ list and bottom item score, which is what determines the LM score of a group; besides, $g_1$ has the highest such bottom score and hence highest possible LM score among all possible groups. Let $i<\ell$ be the smallest number such that the LM score of $g_i$ is less than that of $g'_i$. Removal of any user from $g_i$ leaves its LM score unchanged while adding a new user to $g_i$ cannot possibly increase its LM score. This contradicts the assumption. We just showed for $1\le i<\ell$, $g_i$ dominates $g_i'$. Thus, ${\tt GRD}(\ell-1) \geq {\tt OPT}(\ell-1)$. 
\eat{ 
The group satisfaction of the $i$-th group (out of the first $\ell-1$ groups)  produced by the greedy algorithm is dominated by the $i$-th group produced by the optimal algorithm. This is indeed true, as the {\em while loop (lines 8-16)} of  {\tt GRD-LM} forms the first $\ell-1$ groups by considering the $\ell-1$ highest distinct top-$k$ item-rating pairs considering all users and items. Therefore, for any of the $i$-th group, $g_{i}^s(\mathcal{I}^k_{g_{i}}) \geq  {g'}_{i}^s(\mathcal{I}^k_{{g'}_{i}})$. 
\item Thus,  ${\tt GRD}(\ell-1) \geq {\tt OPT}(\ell-1)$. 
} 

(2) For the $\ell$-th group, notice that $g_{\ell}^s(\mathcal{I}^k_{g_{\ell}}) \geq {g'}_{\ell}^s(\mathcal{I}^k_{{g'}_{\ell}}) - r^{max}$, since the maximum possible rating value is $r^{max}$. 
\eat{ 
This is also true, because the satisfaction score of the very last group $\ell$ is between $r^{min}$ and $r^{max}$. Given any two groups, the difference in their satisfaction is therefore, at most $r^{max}$. Thus, even at the worst case, ${g'}_{\ell}^s(\mathcal{I}^k_{{g'}_{\ell}})  - g_{\ell}^s(\mathcal{I}^k_{g_{\ell}}) \leq r^{max}$. } 

The theorem follows from (1) and (2). 
\end{proof}

\begin{theorem}\label{th:3}
Algorithm {\tt GRD-LM-SUM} solves the group formation problem under LM semantics using Sum aggregation with a  guaranteed absolute error of at most $k \times r^{max}$, where $r^{max} \in \mathcal{R}$ is the maximum value in the rating scale.
\end{theorem}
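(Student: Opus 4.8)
The plan is to mirror the proof of the preceding theorem (the $r^{max}$ bound for \texttt{GRD-LM-MIN}), tracking the one change that matters: under Sum aggregation the satisfaction of a single group is a sum of $k$ per-item LM scores, each lying in $[r^{min},r^{max}]$, so the worst-case slack charged to the last group grows from $r^{max}$ to $k\cdot r^{max}$. Concretely, let $g_1,\dots,g_\ell$ and $g_1',\dots,g_\ell'$ be the groups returned by \texttt{GRD-LM-SUM} and by an optimal algorithm respectively, each listed in non-increasing order of its Sum-aggregated LM score $g^s(\mathcal{I}^k_g)$. Put ${\tt GRD}(i)=\sum_{j=1}^i g^s_j(\mathcal{I}^k_{g_j})$ and ${\tt OPT}(i)=\sum_{j=1}^i g'^s_j(\mathcal{I}^k_{g'_j})$. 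As in the previous proof, it suffices to establish (1) ${\tt GRD}(\ell-1)\ge {\tt OPT}(\ell-1)$ and (2) $g^s_\ell(\mathcal{I}^k_{g_\ell}) \ge g'^s_\ell(\mathcal{I}^k_{g'_\ell}) - k\cdot r^{max}$; adding these two inequalities gives ${\tt OPT}(\ell) - {\tt GRD}(\ell) \le k\cdot r^{max}$, which is the claim.

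Part (2) is the easy half: any group's Sum-LM score equals $\sum_{i\in S}\min_{u\in g}sc(u,i)$ for its chosen top-$k$ set $S$, hence lies in the interval $[k\cdot r^{min}, k\cdot r^{max}]$, so the difference between the $\ell$-th optimal group's score and the $\ell$-th greedy group's score is at most $k\cdot r^{max}$ (in fact at most $k(r^{max}-r^{min})$). Part (1) is the heart of the argument, and I would prove it by the same domination idea used for \texttt{GRD-LM-MIN}, suitably adapted. The key ingredient is a monotonicity observation: if $g\subseteq g'$ then $\min_{u\in g'}sc(u,i)\le\min_{u\in g}sc(u,i)$ for every item $i$, so $g'^s(\mathcal{I}^k_{g'})\le g^s(\mathcal{I}^k_g)$ --- enlarging a group can only decrease its Sum-LM satisfaction. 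In particular, for any group $h$ and any member $u\in h$, the score of $h$ is at most the sum of $u$'s own top-$k$ ratings; and by construction the intermediate group that \texttt{GRD-LM-SUM} builds around $u$ (all of whose members share $u$'s entire top-$k$ item-and-score sequence) has score \emph{exactly} that sum. Hence the largest intermediate-group score, which is $g^s_1(\mathcal{I}^k_{g_1})$, dominates the score of every group in every partition, in particular $g'^s_1(\mathcal{I}^k_{g'_1})$. For $1\le i<\ell$ one argues inductively: taking the least $i$ with $g^s_i(\mathcal{I}^k_{g_i}) < g'^s_i(\mathcal{I}^k_{g'_i})$, the $i$ disjoint optimal groups $g_1',\dots,g_i'$ each have score $\ge g'^s_i(\mathcal{I}^k_{g'_i})$, and by the observation above each of them meets only intermediate groups of score $\ge g'^s_i(\mathcal{I}^k_{g'_i})$; since these intermediate groups partition $\mathcal{U}$, one extracts $i$ distinct intermediate groups all of score strictly larger than $g^s_i(\mathcal{I}^k_{g_i})$, contradicting the fact that \texttt{GRD-LM-SUM} fills its first $\ell-1\ge i$ slots with the highest-scoring intermediate groups. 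This yields $g^s_j(\mathcal{I}^k_{g_j})\ge g'^s_j(\mathcal{I}^k_{g'_j})$ for all $j\le\ell-1$, hence (1).

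The main obstacle is precisely this domination step for Sum aggregation, for two reasons. First, unlike the Min case, the top-$k$ set chosen \emph{for a group} need not coincide with any member's personal top-$k$ set (cf.\ Example~\ref{ex3}), so the argument must go through the monotonicity observation rather than through item-by-item reasoning. Second, when several users share an identical top-$k$ item/score profile, \texttt{GRD-LM-SUM} is forced to keep them inside a single intermediate group, which complicates the ``extract $i$ distinct intermediate groups'' bookkeeping (and, in degenerate cases, may even leave $g_\ell$ empty); I would either argue that an optimal partition can be assumed not to split such an equivalence class --- merging an arbitrary split back together leaves every block's Sum-LM score unchanged --- or add the mild non-degeneracy assumption that the number of distinct profiles is at least $\ell$, under which the extraction always succeeds and all $\ell$ greedy groups are non-empty. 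Everything else, in particular Part (2), is routine.
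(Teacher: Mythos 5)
Your architecture is the same as the paper's: the published proof of this theorem is only a two-sentence sketch that defers to the preceding Min-aggregation argument and charges $r^{max}$ per item to the $\ell$-th group, and you are filling in exactly the details it omits. Your Part (2) is fine, and your two supporting observations are correct and are the right tools: enlarging a group can only lower its Sum-LM score, and every intermediate group built by {\tt GRD-LM-SUM} scores exactly its members' common personal top-$k$ sum (no item outside the shared top-$k$ can beat the shared $k$-th rating). This does give $g^s_1(\mathcal{I}^k_{g_1}) \ge g'^s_1(\mathcal{I}^k_{g'_1})$.

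The gap is in the domination step for $i\ge 2$, and neither of your proposed patches closes it. Patch (a) rests on the claim that merging a split equivalence class back together leaves every block's score unchanged; by your own monotonicity observation the receiving block's score can only \emph{decrease}, and it can decrease strictly, since class members may disagree arbitrarily on items outside their shared top-$k$ and so destroy a high-LM item the receiving block was exploiting. Worse, the reduction itself is false: because every nonempty subset of an equivalence class attains the full class score, an optimal partition can strictly profit from splitting a class. Concretely, with $k=2$ and $r^{max}=5$, take a class $\{u_1,u_2,u_3\}$ each rating $i_1=i_2=5$ and everything else $1$, plus three users $v_2,v_3,v_4$ with distinct profiles of personal top-$2$ sum $3$, and $\ell=4$. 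The greedy grouping scores $10+3+3+3=19$, while the partition $\{u_1\},\{u_2\},\{u_3\},\{v_2,v_3,v_4\}$ scores $10+10+10+2=32$, an absolute error of $13 > k\cdot r^{max}=10$. This instance has $\ell$ distinct profiles, so patch (b) does not help either: several optimal blocks can meet only the single high-scoring intermediate group, and the extraction of $i$ distinct intermediate groups genuinely fails. To be clear, you have inherited this defect from the source and, to your credit, located it precisely: the paper's Min-aggregation proof dismisses it with ``removal of any user from $g_i$ leaves its LM score unchanged,'' which does not address an optimum that splits a class, and its Sum-aggregation proof is only a sketch. Repairing the statement requires either modifying the algorithm (e.g., allowing a top equivalence class to be split across unused group slots) or an explicit hypothesis excluding multi-member classes among the top scorers.
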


\begin{proof}
(Sketch): We omit the details for lack of space but note that this proof uses similar reasoning as above. Akin to {\tt GRD-LM-MIN}, only the $\ell$-th group of {\tt GRD-LM-SUM} is subject to error compared to OPT, where each of the $k$-items can accrue at most  $r^{max}$ error. Therefore, the aggregated error over all $k$ items, i.e., the absolute error of {\tt GRD-LM-SUM}, is upper-bounded by $k \times r^{max}$.
\end{proof}


{\bf Running time  complexity:}
We first describe the running time complexity of {\tt GRD-LM-MIN}.
{\em Line 2} of Algorithm {\tt GRD-LM-MIN}   takes $O(nk)$ time overall to produce top-$k$ item list per user. 
{\em Line 3} takes $O(n)$ time to hash all users. Adding LM score to the heap also takes $O(n)$ time overall. The {\em while loop} runs for $(\ell-1)$ iterations and in each iteration the highest LM score is obtained in constant time from the heap, and rebuilding the heap takes $O(\log n)$ time. Therefore, the entire {\em while loop} takes $O(\ell \log n)$ time. Forming the $\ell$-th group ({\em lines 17-18}) can take at most $O(n\log k)$ time. Therefore, the overall complexity of the algorithm is $O(nk+ n+\ell\log n)$ or simply $O(nk+\ell\log n)$. 
Similarly, it can be shown that the running time of {\tt GRD-LM-SUM} is also  $O(nk+\ell\log n)$.


\section{On Approximation Algorithms : AV}\label{sec:av}
Next, we describe algorithms to produce groups with high satisfaction scores under the semantics of aggregate voting (AV) considering both Min and Sum aggregation. Unlike least misery (LM), aggregate voting defines the satisfaction score of a group as the sum of the preference scores of the individual users in the group, for the recommended top-$k$ itemset. 

An insight for forming good groupings under the AV semantics is that users who share the same top-$k$ sequence of items could be grouped together, irrespective of the underlying aggregation function (Min/Max/Sum). Notice that the grouping principle differs from that used by the greedy algorithms for LM, which look for not only common top-$k$ item sequence but also a common rating for the bottom ($k$-th)  item ({\tt GRD-LM-MIN}) or for all $k$ items ({\tt GRD-LM-SUM}). To see why the above grouping principle is intuitive, notice that a group formed in this way preserves the personal top-$k$ list associated with each group member. Secondly, the contribution of this group to the overall satisfaction score of the grouping is the sum of ratings of the bottom item (Min aggregation), or all $k$-items (Sum aggregation). Two users who have the same sequence of top-$k$ item sequence therefore are best grouped together, irrespective of their individual item preference. Thus, grouping on item's score is not a useful operation for AV semantics. 

We devise two algorithms {\tt GRD-AV-MIN} (for Min aggregation) and  {\tt GRD-AV-SUM} (for Sum aggregation) that exploit the same algorithmic framework as that of greedy algorithms for LM.

{\bf Min Aggregation:} {\tt GRD-AV-MIN} also runs in a top-down manner (starting with a single group with all users and forming a set of intermediate groups from there) and consists of three primary steps. Computationally, it has only two major differences with {\tt GRD-LM-MIN}, described next:

(1) Consider {\em Lines 2 and 3} of Algorithm~\ref{alg:lm} which hash every unique {\em top-$k$ item sequence and the bottom item score} in the hash map. By contrast, as explained above, {\tt GRD-AV-MIN}  hashes {\em only the top-$k$ item sequence and not the $k$-th item score}. 
Because of this, {\tt GRD-AV-MIN} is likely to generate fewer unique hash keys (and hence fewer intermediate groups). This observation is corroborated by our experiments, in Section~\ref{quality}.

(2) The other difference is more obvious: the group satisfaction score is computed differently in {\tt GRD-AV-MIN} compared to {\tt GRD-LM-MIN}. What we store in heap $\mathcal{H}$ in {\em line 4} is the aggregated group satisfaction score $\sum_{u} sc(u,i^k)$, where each user $u$ has the same top-$k$ item sequence, and $sc(u,i^k)$ is their respective bottom item score.

The remaining operations of Algorithms {\tt GRD-AV-MIN} and \\ {\tt GRD-LM-MIN} are essentially similar. 


Consider Example~\ref{ex2} and assume the groups are to be formed using Min-aggregation function over top-$2$ ($k=2$) recommended itemset under AV. 

{\bf Step-1} of {\tt GRD-AV-MIN} will only group $\{u_3,u_4\}$ together as they have the same top-$2$ item sequence, $h.key=\langle i_2,i_1 \rangle$ and $h.value=\{u_3,u_4\}$. The heap $\mathcal{H}$ will insert $4$ as the corresponding AV score. The remaining four users will form singleton groups. 

{\bf Step-2} of {\tt GRD-AV-MIN} will have only one iteration (as $\ell-1=1$). It will retrieve that element from the heap with the highest AV score on the top-$2$ item for item $i_1$, which is $4$. Consequently, it will produce $\{u_3,u_4\}$ as the first group. The top-$2$ itemset for this group will be $(i_2,i_1)$.

{\bf Step-3} of {\tt GRD-AV-MIN} will form the second group by merging the remaining singleton groups into $\{u_1,u_2,u_5,u_6\}$. The AV score on the top-$2$ item is $9$ considering item $i_2$. This group will be recommended the following top-$2$ itemset, $(i_3,i_2)$. The algorithm terminates after that and achieves the objective function value $13$.

Notice that, {\tt GRD-AV-MIN} may produce sub-optimal answers as well. For Example~\ref{ex2}, the optimal two user groups are instead $\{u_1,u_3,u_4\}$, $\{u_2,u_5,u_6\}$.  In this case, the first group has the same recommended item list as that of the first group of {\tt GRD-AV-MIN}, however, the second group has  $\{i_2,i_3\}$ as the recommended itemset. The overall objective function value is now $14$, which is higher.

{\bf Sum Aggregation:} Operationally, there is no difference between {\tt GRD-AV-MIN} and {\tt GRD-AV-SUM}, except for the obvious difference, that the latter aggregates the group satisfaction score over the entire $k$-itemset (not just on the $k$-th item). Using Example~\ref{ex2} again,  Step-1 of {\tt GRD-AV-SUM} will group $\{u_3,u_4\}$ together, as they have the same top-$2$ item sequence, $h.key=\langle i_2,i_1 \rangle$ and $h.value=\{u_3,u_4\}$, but   will insert $(5+2)+(5+2)=14$ as the corresponding AV score in the heap. Other than that, the rest of the users will form four singleton groups. {\tt GRD-AV-SUM} will result in the same set of user groups as that of {\tt GRD-AV-MIN} but the overall objective function value is $14+20=34$, as the second group  $\{u_1,u_2,u_5,u_6\}$ will now have a satisfaction score of  $(4+3+3+1)+(1+4+2+2)=20$ based on the Sum-aggregation.

\eat{ 
\note[senjuti]{not sure if needed}
Instead of our top-down approach, one may think of a bottom-up computation for both these algorithms that begins with singleton groups and greedily merges smaller groups together and form larger group. However, such bottom up merging is unlikely to result in any guarantee on the quality of the produced groups. Moreover,  this bottom up merging is likely to take longer computation time, as the algorithm has to run until we have only $\ell$ groups.
}

\eat{Like LM algorithm, we focus our discussion assuming $k=1$ and then later on describe how to generalize for any $k$. When $k=1$, the Max and Min aggregation refers to the same item. Also notice that $k=1$ corresponds to the Max-Aggregation for any general value of top-$k$ itemset. 

Unlike the approximation algorithm for LM which starts with a large group and gradually disintegrates it, {\tt GrpFrm-AV} works from the opposite direction. We begin by creating $n$ singleton groups, each representing an individual user. The score of each singleton group with user $u$ is the $i$-th item score $sc(u,i)$ ($i=1$ for Max-aggregation, and $i=k$ for Min-aggregation). The overall objective function value $Obj$ is the summation $\Sigma_{j=1}^{n}sc(u_j,i)$  Algorithm {\tt GrpFrm-AV} is also greedy in nature and runs until we have only $x$ groups. In each iteration, we pairs up every two existing groups and generates a set of intermediate groups. For each intermediate group $g$, it computes $sc(g,i) :=\sum_{u \in g} sc(u,i)$. Naturally, this merging may reduce the overall objective function score. From the set of produced intermediate groups, it chooses that as the one which {\em reduces this score by the smallest amount} (ties are broken arbitrarily). For our running example,  {\tt GrpFrm-AV} will create $6$ singleton groups in the beginning, one for each user. Assuming Max-aggregation (or $k=1$), the overall objective function value is $27$. In the first iteration, it will pair up all possible users with each other and produce a group decreases the objective function value by the smallest amount. This process will result in merging $(u_1,u_2)$ together or  $(u_3,u_4)$ together. For our running example, the objective function value remains $27$ even after the first iteration. The produced set of groups are $\{(u_1,u_2), u_3, u_4, u_5, u_6\}$. The iteration repeats and applies the same logic after the second iteration and produces $\{(u_1,u_2), (u_3, u_4), u_5, u_6\}$ with the overall objective function score $27$. After third iteration, it terminates, and produces the following three groups, $\{(u_1,u_2), (u_3, u_4), (u_5, u_6)\}$ with the overall objective function score of $26$. Algorithm~\ref{alg:av} presents the pseudocode.

\begin{algorithm}[t]
\caption{Algorithm {\tt GrpFrm-AV}}
\label{alg:av}
\begin{algorithmic}[1]
\begin{small}
\REQUIRE User set $\mathcal{U}$, item-set $\mathcal{I}$, $k$, consensus function $\mathcal{F}_{AV}$, number of groups $x$;
\STATE Retrieve preference rating list for each user $u$ sorted in decreasing order of preference;
\STATE $\mathcal{G}$ = a set of singleton user groups;
\STATE $Obj  = \Sigma_{\forall g \in \mathcal{G}} sc(g,i)$, where $i$ is the Min/Max aggregation item.
\WHILE {$|\mathcal{G}| > x$}
\STATE Create a set of intermediate groups by merging all possible group pairs.
\STATE Select that intermediate group $g'$ merging two groups $g_i,g_j$, such that 
$Obj - sc(g_i,i) - sc(g_j,i) + sc(\{g_i \cup g_j\},i)$ is the smallest, $\forall g_i,g_j, i \neq j$.
\STATE $\mathcal{G} = \{\mathcal{G}\} \cup \{g_i \cup g_j\} - \{g_i\} - \{g_j\}$
\STATE $Obj = \Sigma_{\forall g \in \mathcal{G}} sc(g,i)$
\ENDWHILE
\RETURN $g_1,g_2,\ldots g_x$ groups and $objVal$;
\end{small}
\end{algorithmic}
\end{algorithm}

}


\subsection{Analysis}
Akin to Section~\ref{sec:analyseslm}, we present both qualitative and runtime analyses of the greedy algorithms for AV. \\

{\bf Qualitative Analysis:}
Unlike {\tt GRD-LM} algorithms, greedy algorithms for AV do not come with any guarantees about the total satisfaction score of the grouping they provide. While at this time the approximability of optimal group formation under AV semantics is open, we conjecture that the problem is MAX-SNP-Hard~\cite{kann1991maximum} and cannot be approximated within a constant factor. We next give an example to bring out the subtleties of AV semantics. The point is that, {\sl by grouping a user $u$ with others such that the resulting top-$k$ order is personally arguably worse for user $u$, can still produce a group with higher group satisfaction score, than if $u$ had been grouped with users with the same top-$k$ item list.} 


\begin{example}\label{ex4} 
{\em  Consider four users $u_1, ..., u_4$ and two items $i_1, i_2$. Let the ratings for the users, respectively be $u_1 = (5,4)$, $u_2 = u_3 = (4,5)$, and $u_4 = (3, 2)$. Let $k=2$. Suppose we wish to form two groups. Considering Min aggregation, grouping based on common top-$2$ item list would produce the groups $\{u_1, u_4\}$ (satisfaction score $4+2 = 6$) and $\{u_2, u_3\}$ (satisfaction score $4+4=8$) for an overall satisfaction score of $14$. However, suppose $u_1$ is grouped together with $u_2, u_3$ and $u_4$ is left alone. The top-$2$ list for the group $\{u_1, u_2, u_3\}$ becomes $(i_2; i_1)$, whereas $i_1$ is $u_1$'s favorite. Yet, the satisfaction scores are $5+4+4 = 13$ for the first group and $2$ for the second, for a total satisfaction score of $15$. Even though $u_1$'s top-$2$ order changed to something sub-optimal for $u_1$, the overall satisfaction has improved! This kind of behavior is impossible under LM semantics. This illustrates that it's tricky to reason about forming groups that even approximate the optimal satisfaction score for AV semantics. \qed } 
\end{example}


{\bf Running time complexity:} Running time of the greedy algorithms of AV is similar to that of the LM algorithms, except that the group satisfaction score needs to iterate over all the users to compute the sum or iterate over all $k$ items for the Sum-aggregation. Therefore, adding AV scores to the heap for {\tt GRD-AV-MIN} and {\tt GRD-AV-SUM} now take $O(nk)$ time. The {\em while loop} will overall take $O(\ell \log n)$ time. The last step will now take $O(nk)$ time. Therefore, both the algorithms accept same time complexity, i.e., $O(nk+ \ell \log n)$.



\eat{
\begin{theorem}
The objective function for DGF is a supermodular multi-way partition function
\end{theorem}

\begin{proof}

\end{proof}

\subsection{DGF' :Transforming DGF to a Submodular Partition Function}

\subsection{Approximation Algorithm {\tt ApprxGrp'}}

\begin{theorem}
Algorithm {\tt ApprxGrpLM} has a constant additive approximation factor for LM, when $k=1$
\end{theorem}

\begin{proof}(sketch)
Given a userset and an item set, imagine that a set of $x$ groups, $g_1,g_2, \ldots, g_x\}$ are created by the proposed greedy algorithm {\tt ApprxGrpLM}. Let $g_1',g_2', \ldots, g_x'\}$ be the optimal groups. Let the objective function $\sum_{j=1}^{x}sc(i^k,g_j)$ generated by {\tt ApprxGrpLM} be $S$, whereas, $S'$ reflects the objective function value for the optimal groups.
Our overall proof takes the following overall approach: out of these $x$ generated groups, we prove that the best $x-1$ groups with the highest individual $sc(i^k,g_j)$ score generated by  {\tt ApprxGrpLM} can not be smaller than that of the one generated by any optimal solution.
Let $g_1,g_2, \ldots, g_{x-1}\}$ and $g_1',g_2', \ldots, g_{x-1}'\}$ represent the best $x-1$ groups generated by the approximation algorithm and the optimal solution respectively. Our first objective is to prove $\sum_{j=1}^{x-1}sc(i^k,g_j) \geq \sum_{j=1}^{x-1}sc(i^k,g_j')$. To do so, we reason on the following observations:
\begin{itemize}
\item No better grouping among the users is possible: All the users part of the $x-1$ groups generated by {\tt ApprxGrpLM}, if partitioned in any other way, can only reduce the objective function $\sum_{j=1}^{x-1}sc(i^k,g_j)$ the overall objective function score of the first $x-1$ groups. This is due to the nature of the greedy algorithm, as it groups users with the same ratings for the top item and selects the top $x-1$ groups(i.e., partitions) with users with top $x-1$ scores. Therefore, any alternative partition among the users members in the top $x-1$ groups will reduce the objective function.
\item Moving users from the $x$-th group to any of these $x-1$ groups is not useful: Note that this brings a user $u'$ to one of the $x-1$ groups whose top-$1$ rating is equal or only smaller than the smallest ratings in each of these $x-1$ groups (otherwise $u'$ would have been part of one of these $x-1$ groups in the first place). If the former condition is satisfied, the objective function $\sum_{j=1}^{x-1}sc(i^k,g_j)$ remains same, or else it worsens.
\item Moving users from the $x-1$ groups to the $x$-th group is not useful: This condition is also true, because if a user $u'$ is moved from the $x-1$-th groups to the $x$-th group, this will either keep the objective function as is or will worsen it. Notice this does not change the score of $\sum_{j=1}^{x-1}sc(i^k,g_j)$, but $sc(i^k,g_x)$ may decrease due to these.
\end{itemize}
With the following three observations, we prove that $\sum_{j=1}^{x-1}sc(i^k,g_j)$ of $x-1$ groups of {\tt ApprxGrpLM} dominates over $x-1$ groups created by the optimal solution.

For the $x$-th group, it is easy to notice that $(sc(i^k,g_x) + C) \geq sc(i^k,g_x')$ ($C=5$ for us, which is a the maximum rating possible).

Therefore, $S= \sum_{j=1}^{x-1}sc(i^k,g_j) + sc(i^k,g_x)$, and $S'=\sum_{j=1}^{x-1}sc(i^k,g_j') + sc(i^k,g_x')$. 

As $\sum_{j=1}^{x-1}sc(i^k,g_j)\geq  \sum_{j=1}^{x-1}sc(i^k,g_j')$ and $sc(i^k,g_x) + C \geq sc(i^k,g_x')$. Therefore, $(S+C) \geq S'$.

Hence the constant additive approximation factor.
\end{proof}

\begin{theorem}
Algorithm {\tt ApprxGrp'} has a 2-approximation factor for Simple-DGF.
\end{theorem}

\subsection{Heuristic for DGF}
(the one that satisfies the diversity constraint) :need to discuss 
}

\vspace{-0.05in}
\section{Discussion}\label{disc}
\vspace{-0.05in}
In this section, we present some extensions to our proposed group formation framework. In particular, we describe how to extend Sum Aggregation to consider differential weights, as briefly discussed in Section~\ref{sec:agg}. Intuitively, {\em  Weighted Sum Aggregation} can assign different weights to the top-$k$ items and not treat them equally. We next describe two natural alternatives. 


{\bf Weights at the item list level:} For any group, we can assign a weight to each of the top-$k$ recommended items, where the weights could simply be inversely proportional to the position or its logarithm. This way, the top items will have higher weight than the lower ones. Then instead of Sum, we compute Weighted Sum LM or AV. This extension does not introduce any  complications to our proposed algorithms, as we only need to consider the weights when the overall objective function value is calculated.

{\bf Weights at the user level:} A more interesting scenario is to consider weighted aggregation at the user level.  More specifically, how satisfied an individual user is with the recommended top-$k$ items could be measured using IR techniques, such as, {\em NDCG} (Normalized Discounted Cumulative Gain)~\cite{baeza1999modern}. Using a graded relevance scale, NDCG computes the user satisfaction (i.e., gain) for an item, given its position in the result list. The gain is then aggregated from the top of the item list to the bottom and the gain of each item is discounted at lower ranks. After weighted satisfaction of each user is computed, any group recommendation semantics (such as, LM or AV) could be used to compute the group satisfaction. Our proposed solutions do not require any significant change even here, except for the fact that the user satisfaction will be computed in a weighted fashion  that our objective function must account for. Notice that, except for the $\ell$-th group in our greedy algorithm, all the users in the first $\ell-1$ groups are fully satisfied, i.e., the recommended top-$k$ lists exactly match their individual top-$k$ lists. Only for  users in the $\ell$-th group, dissatisfaction may occur, which does not affect the theoretical guarantees. 

\section{Experimental Evaluations}\label{exp}
We evaluate our proposed algorithms w.r.t. their 
effectiveness and efficiency. We also conduct a small scale  user study on
Amazon Mechanical Turk (AMT) to evaluate effectiveness.

The development and experimentation environment uses Python on a 2.9 GHz Intel Core i7 with 8 GB of memory using OS X 10.9.5 OS. We use IBM CPLEX for solving the IP instances. All numbers  are presented as the average of three runs.

{\bf Datasets:}
(1) Yahoo! Music: This dataset represents a snapshot of the Yahoo! Music community's preferences for various songs. Standard pre-processing for collaborative filtering and rating prediction was applied to prepare this data set. The data has been trimmed so that each user has rated at least 20 songs, and each
song has been rated by at least 20 users. The data has been
randomly partitioned so as to correspond to 10 equally sized sets of users, in order  to
enable cross-validation. We use a subset of this dataset in our experiments. 
The ratings values are on a scale from 1 to 5, 5 being the best. More information about this dataset can be found at Yahoo! Research Alliance Webscope program\footnote{\small http://research.yahoo.com}.  

(2) MovieLens:  We use the MovieLens 10M ratings dataset~\footnote{\em \small
http://movielens.umn.edu}. MovieLens is a collaborative rating
dataset where users provide ratings ranging on a 1--5 scale. Table~\ref{tbl:ml1m} contains the statistics of both these datasets. Additionally, Section~\ref{exp:usrstudy} conducts a user study using {\em Flickr data}.

{\bf Algorithms Compared:} In addition to the greedy algorithms (Sections~\ref{sec:apprxalgo} and \ref{sec:av}), we also developed optimal algorithms for group formation, based on integer programming (IP) (Appendix~\ref{sec:optimal}). 
\eat{We compare our proposed solutions {\tt GRD-LM} and {\tt GRD-AV} described in Section~\ref{sec:apprxalgo} and~\ref{sec:av} with the IP based optimal algorithms {\tt OPT-LM} and {\tt OPT-AV} (referred to Section~\ref{sec:optimal} for the IP formulation), whenever feasible. The optimal algorithms do not complete  in reasonable time beyond $200$ users, $100$ items, and $10$ groups.} 
In addition to these, we also implemented two baseline algorithms ({\tt BaseLine-LM} and {\tt BaseLine-AV}), described below, by adapting prior work~\cite{DBLP:conf/er/NtoutsiSNK12}. All three aggregation functions (Min/Max/Sum) are considered.

\begin{table}[h]
  \centering
    \begin{tabular}{|r|r|r|r|}
    \hline
   dataset name & \# users & \# items \\
    \hline
    Yahoo! Music & 200,000 & 136736 \\
    \hline   
    MovieLens & 71,567  & 10,681   \\
    \hline
    \end{tabular}
\vspace{-0.1in}
\caption{Dataset Descriptions\label{tbl:ml1m}}
\end{table}

The baseline algorithms work as follows: For every user pair $u,u'$, we measure the Kendall-Tau distance~\cite{romesburg2004cluster} between them based on their individual ranking of items, induced by the ratings they provide. This way, we obtain  $dist(u,u')$ for each $u,u'$. Notice that it is not {\em sufficient to consider only top-$k$ items to compute this ranked distance, because two users may have a very small overlap on their top-$k$ itemset; therefore, we consider all the items to obtain $dist(u,u')$}. After that, we use {\em K-means clustering}~\cite{DBLP:books/mk/HanK2000} to form a set of $\ell$ user groups. Once these groups are formed, for each group, we compute the  top-$k$ item list and respective group satisfaction scores (using Min/Max/Sum aggregation) based on  LM or AV semantics. We aggregate these scores over $\ell$ groups to produce the final objective function value. The maximum number of iterations in the clustering is set to $100$ by default.

{\bf Experimental Analysis Setup:} Wherever applicable, we compare the aforementioned algorithms both qualitatively and quantitatively. For evaluation of quality, we measure the {\em objective function value} (for AV or LM), as well as {\em average group satisfaction score on the top-$k$ item lists across the groups, $\frac{\sum_{x=1}^{\ell}\sum_{j=1}^{k} \overline{sc(g_x,i^j)}}{\ell}$}, where $\overline{sc(g_x,i^j)}$ is the average $j$-th item for group $g_x$. We additionally present the distribution of group sizes to examine whether our solution can give rise to many degenerated groups (i.e., singleton groups). For scalability experiments, we primarily measure the clock time to produce the groups and their respective top-$k$ item list. 
We typically vary number of users ($n$), number of  items ($m$),  number of  groups ($\ell$), and the number of recommended items ($k$). In user study, we evaluate the effectiveness of our group formation algorithms compared with the baselines.



{\bf Preview of Experimental Results:} Our key findings are: (1) We find that our proposed group formation algorithms effectively maximize the objective function compared to the optimal algorithms, i.e., the average group satisfaction, for all three aggregation functions (Min/Max/Sum). (2) Our results indicate the practical usefulness of Min aggregation, where {\em the average aggregate group satisfactions over the entire top-$k$ item lists} are presented.  Even though Min aggregation only optimizes on the $k$-th item, our results demonstrate high aggregate user satisfaction over the entire list. (3) We observe that our solution produces groups that are quite balanced in size, i.e., the variation in size is small. This observation establishes that our greedy algorithms are also practically viable. (4) We observe that our proposed algorithms are scalable and form groups efficiently, even when the number of users, items, or groups is large. We also observe that we outperform the baseline algorithms quite consistently in all cases -- both qualitatively and w.r.t. performance (efficiency, satisfaction scores). (5) Our user study results indicate that, with statistical significance, our optimization guided group formation algorithms produce user groups, in which, the users are more satisfied with the top-$k$ recommendations than that of the baseline algorithms. This observation is consistent across the datasets. Sections~\ref{quality},~\ref{scale}, and~\ref{exp:usrstudy} present the quality, scalability, and the user study results, respectively.

\eat{Next, we present the results of our quality experiments followed by the scalability results for both the datasets. Finally, we present the user study result.} 
For lack of space, we only present a subset of results. The results are representative and the omitted ones are similar.

\vspace{-0.1in}
\subsection{Quality Experiments}\label{quality}
The IP-based optimal algorithms do not complete in a reasonable time, beyond $200$ users, $100$ items, and $10$ groups. Our default settings are as follows: number of users = $200$, number of items = $100$, number of groups = $10$, $k=5$ and Max-aggregation. We vary \# users, \# items, \# groups, and $k$ in the top-$k$ list.  We  measure two quality metrics: (1) the objective function value, i.e., the total satisfaction score of a grouping under LM or AV semantics, (2) the average group satisfaction score over all the recommended top-$k$ items, (3) present the distribution of group sizes for both LM and AV.


{\bf Interpretation of Results:} We observe that the {\tt GRD} algorithms outperform the corresponding  baseline algorithms, over both of these datasets. With increasing number of users, the objective function value as well as the average group satisfaction on the recommended top-$k$ itemset decrease for a given value of the number of groups $\ell$, as larger number of users typically add more  variance in user preference. On the other hand, with increasing number of groups, both of these values increase, as there is more room for similar users to be grouped together, thereby improving overall satisfaction. With increasing $k$, both of these values decrease again (except Sum aggregation).

\subsubsection{Measuring Objective Function}
For lack of space, we only present the results for Yahoo! Music dataset. Results on MovieLens are similar. 

{\bf Number of users:} We vary the number of users and use the default settings for the rest of the parameters. Figure~\ref{fig:obj-lm-user} depicts the results. With increasing number of users, the objective function value decreases in general, because, more users typically introduce larger variation in the preference and smaller LM score. The results also clearly demonstrate that {\tt GRD-LM-MAX} consistently outperforms {\tt Baseline-LM-MAX} and achieves an objective function value comparable to {\tt OPT-LM-MAX}.

{\bf Number of items:} Figure~\ref{fig:obj-lm-item} shows that with increasing number of items, the objective function value typically increases. This is also intuitive, because, the top items of each group are likely to improve leading to higher LM score. Also, {\tt GRD-LM-MAX} consistently outperforms the {\tt Baseline-LM-MAX}.

{\bf  Number of groups:} When number of groups is increased, the overall objective function value also increases. Notice that the objective function reaches its maximum possible value  when number of groups equals the number of users. Therefore, with more groups, users get the flexibility to be with other users who are ``similar'' to them. Figure~\ref{fig:obj-lm-grp} presents these results. Similar to the above two cases, {\tt GRD-LM-MAX} outperforms the baseline.

{\bf Top-$k$ on Min and Sum aggregation:} 
\eat{The objective function aggregates the LM group satisfaction over the most preferred item in the previous three experiments.} 
In this experiment, we vary $k$ and produce the objective function value over the least preferred (i.e., bottom) item on the recommended  top-$k$ item list. For Min-aggregation, Figure~\ref{fig:obj-lm-k} shows that with increasing $k$, the objective function value decreases across the algorithms. This is natural, because, group satisfaction based on the bottom element typically decreases with increasing value of $k$. 
Figure~\ref{fig:obj-lm-k-sum} shows the Sum aggregation , where the objective function value increases across the algorithms with increasing $k$, although the rate of increase is smaller for higher $k$. These results demonstrate that {\tt GRD-LM} algorithms are highly effective, their respective objective function values are quite comparable with that of {\tt OPT-LM}. 
 

\eat{
\begin{figure*}
\centering
\begin{minipage}[t]{0.24\textwidth}
    \includegraphics[width=\textwidth]{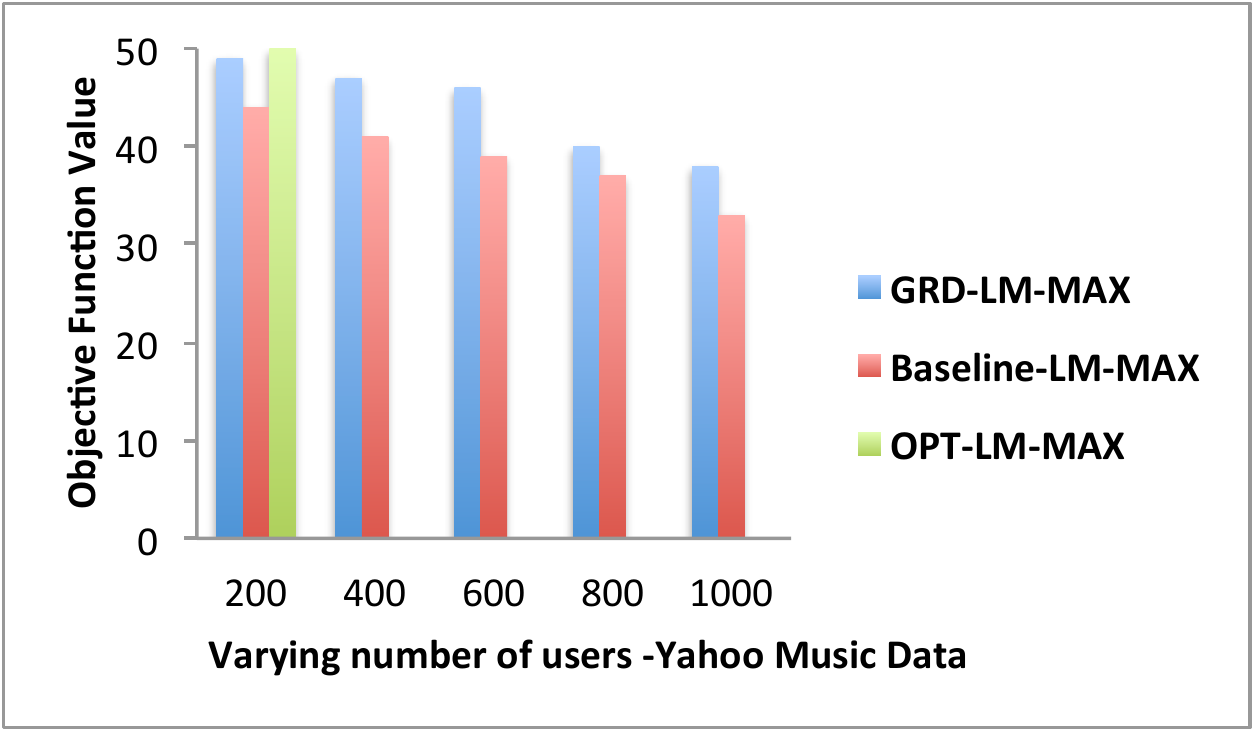}
    \label{fig:obj-lm-user}
\end{minipage}
\begin{minipage}[t]{0.24\textwidth}
    \includegraphics[width=\textwidth]{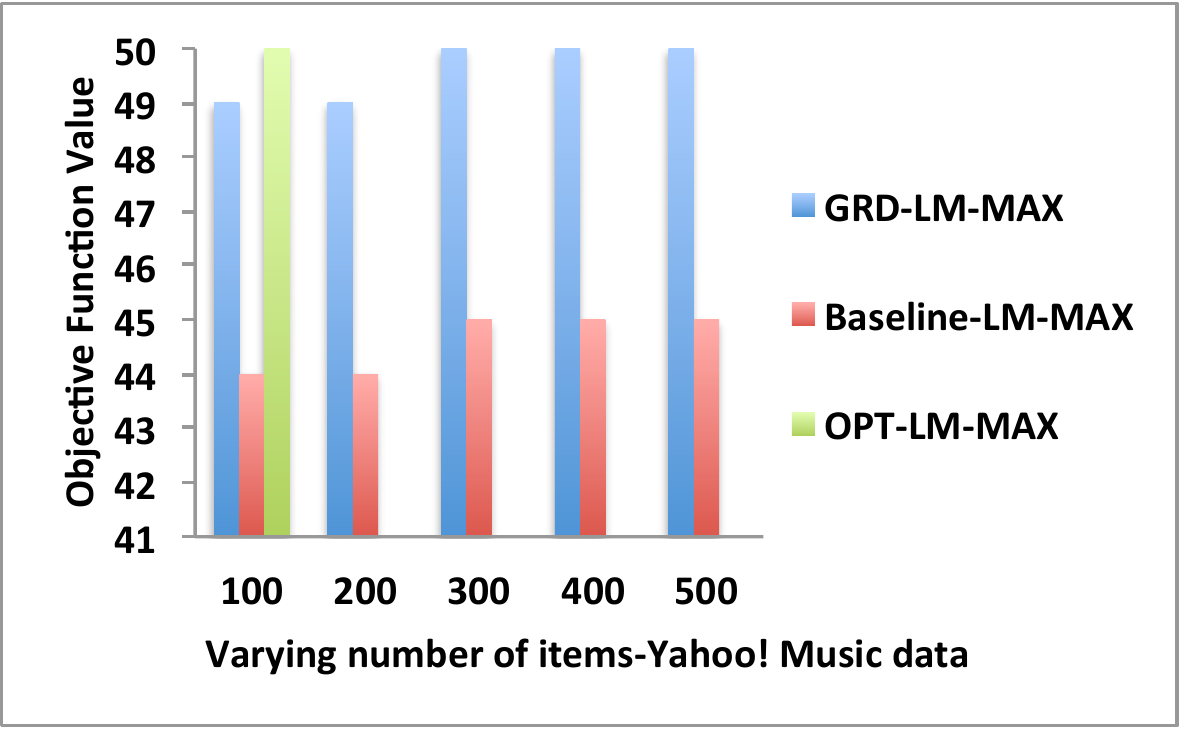}
    \label{fig:obj-lm-item}
\end{minipage}
\begin{minipage}[t]{0.24\textwidth}
\centering
   \includegraphics[width=\textwidth]{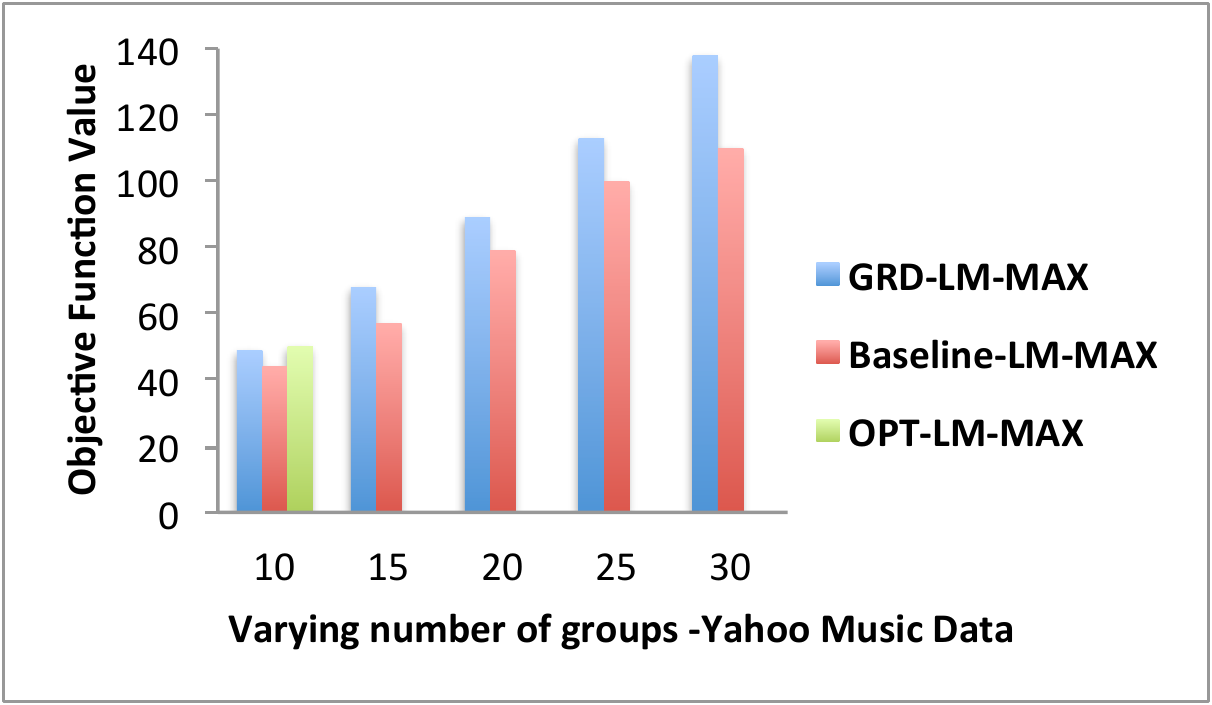}
    \label{fig:obj-lm-grp}
\end{minipage}
\begin{minipage}[t]{0.24\textwidth}
    \includegraphics[width=\textwidth]{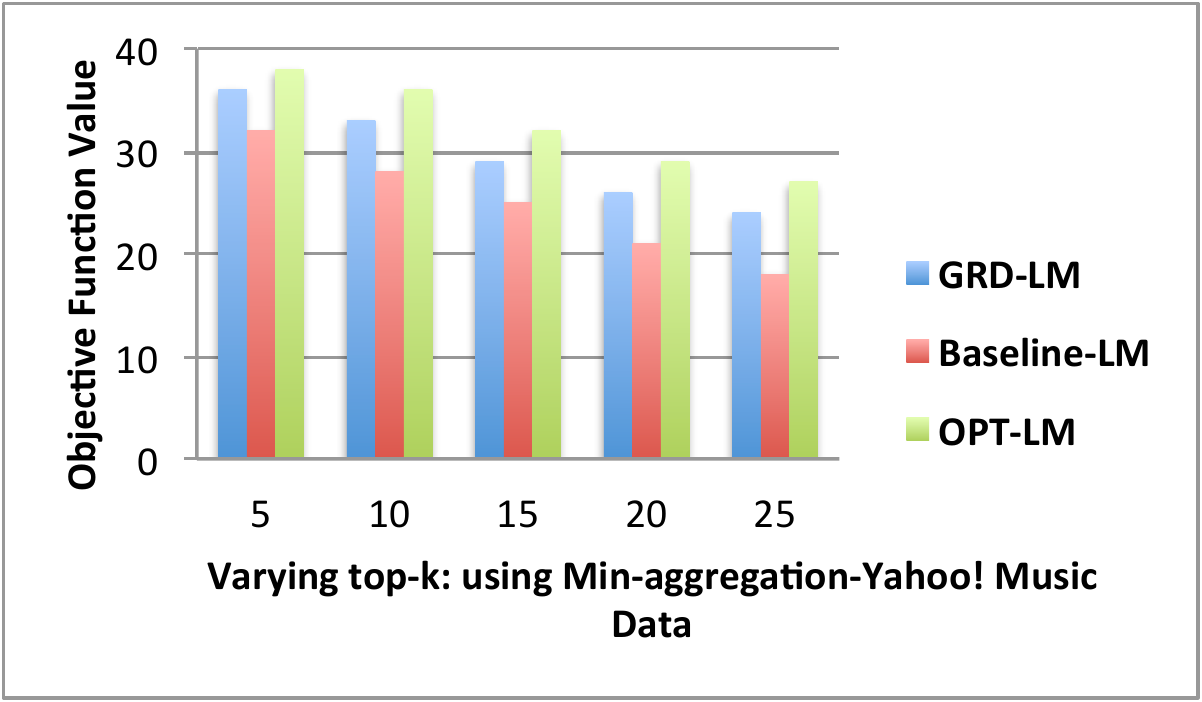}
    \label{fig:obj-lm-k}
\end{minipage}
\vspace{-0.1in}
\caption{\small We measure objective function by varying \# users, \# items, and \# groups respectively, one at a time. The default parameters are \# users = $200$, \# items = $100$, \# groups = $10$, $k=5$ and Max-aggregation. }
\end{figure*}}

\subsubsection{Avg Group Satisfaction Over top-$k$ List}
For lack of space, for this set of experiments, we only present the results on MovieLens dataset considering AV semantics. Here, we measure the average user satisfaction over {\em all} the recommended top-$k$ items, i.e., $\frac{\sum_{x=1}^{\ell}\sum_{j=1}^{k} \overline{sc(g_x,i^j)}}{\ell}$, where $\overline{sc(g_x,i^j)}$ is the average AV score on the $j$-th item for group $g_x$ using Min aggregation. While {\tt GRD-AV-MIN}  is not specifically optimized for this measure, our experimental results indicate that the formed groups have very high average satisfaction nevertheless. 

{\bf  Number of users:} Figure~\ref{fig:sat-av-user} presents the results where we vary the number of users. Notice that for $10$ user groups, the maximum possible satisfaction per group over the top-$k$ item list could be as high as $25$ when $5$ items are recommended (and the ratings are in the scale of $1-5$). This is indeed true, because, $\sum_{j=1}^{5} \overline{sc(g,i^j)} \leq 25$. Interestingly, {\tt GRD-AV-MIN} consistently produces a score that is close to $25$ and outperforms the baseline algorithm.

{\bf  Number of items:} When number of items is increased, the  group satisfaction score is likely to increase, as the algorithms now have more options to recommend the top-$k$ items from,  for each group. Figure~\ref{fig:sat-av-item} presents these results. The average group satisfaction score of both algorithms improves slightly with more items and again {\tt GRD-AV-MIN} consistently beats the baseline.

{\bf  Number of groups:} The results are shown in Figure~\ref{fig:sat-av-grp}. As expected, we observe that the aggregated group satisfaction over the top-$k$ items improves with the increasing number of groups. As explained before, with increasing number of groups, the algorithms have more flexibility to form groups with users who are highly similar in their top-$k$ item preferences. 

{\bf  Top-$k$ on Min-aggregation:} Finally, we vary top-$k$ and compute the aggregated group satisfaction score over all top-$k$ items (results in Figure~\ref{fig:sat-av-k}). With increasing $k$, the  aggregation is done over more number of items, thus increasing the overall score. As shown in the figure, {\tt GRD-AV-MIN} produces highly comparable results with that of {\tt OPT-AV-MIN} and consistently outperforms the baseline algorithm {\tt Baseline-AV-MIN}.



\begin{figure*}[ht]
\centering
\subfigure[]{
   \includegraphics[height=2.5cm, width=5.5cm] {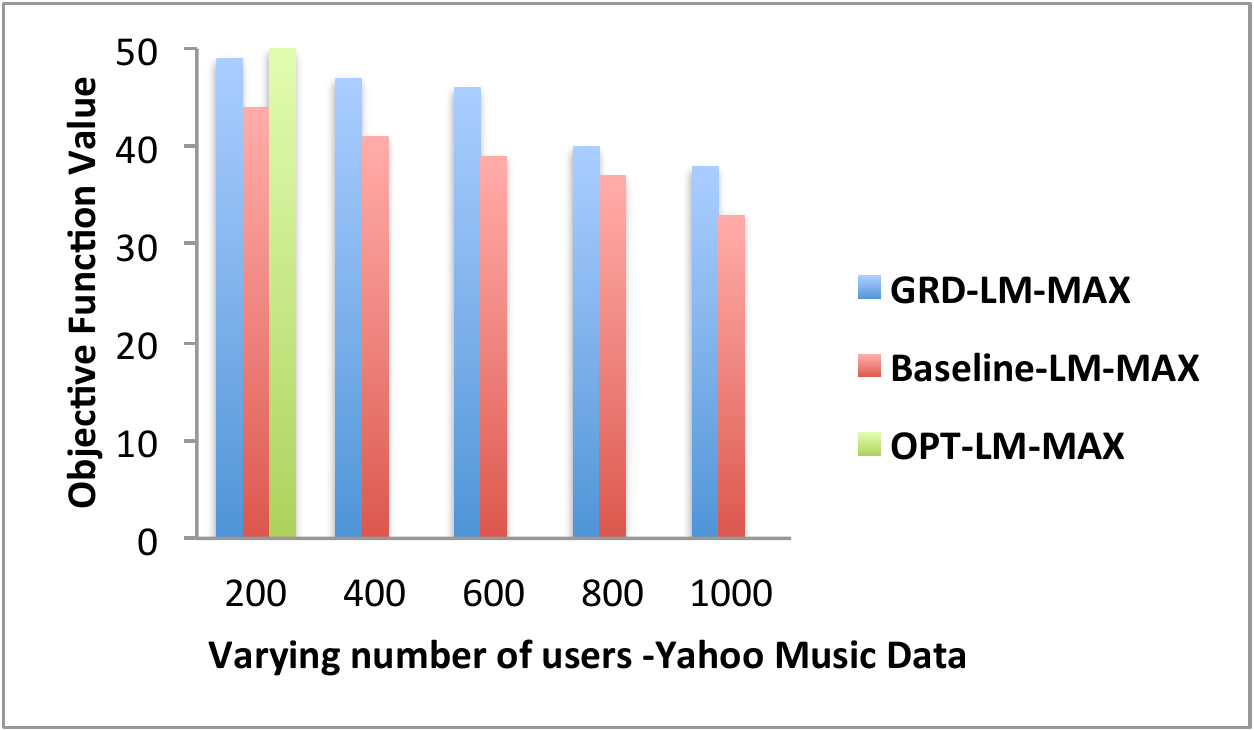}
    \label{fig:obj-lm-user}
 }
 \subfigure[]{
   \includegraphics[height=2.5cm, width=5.5cm] {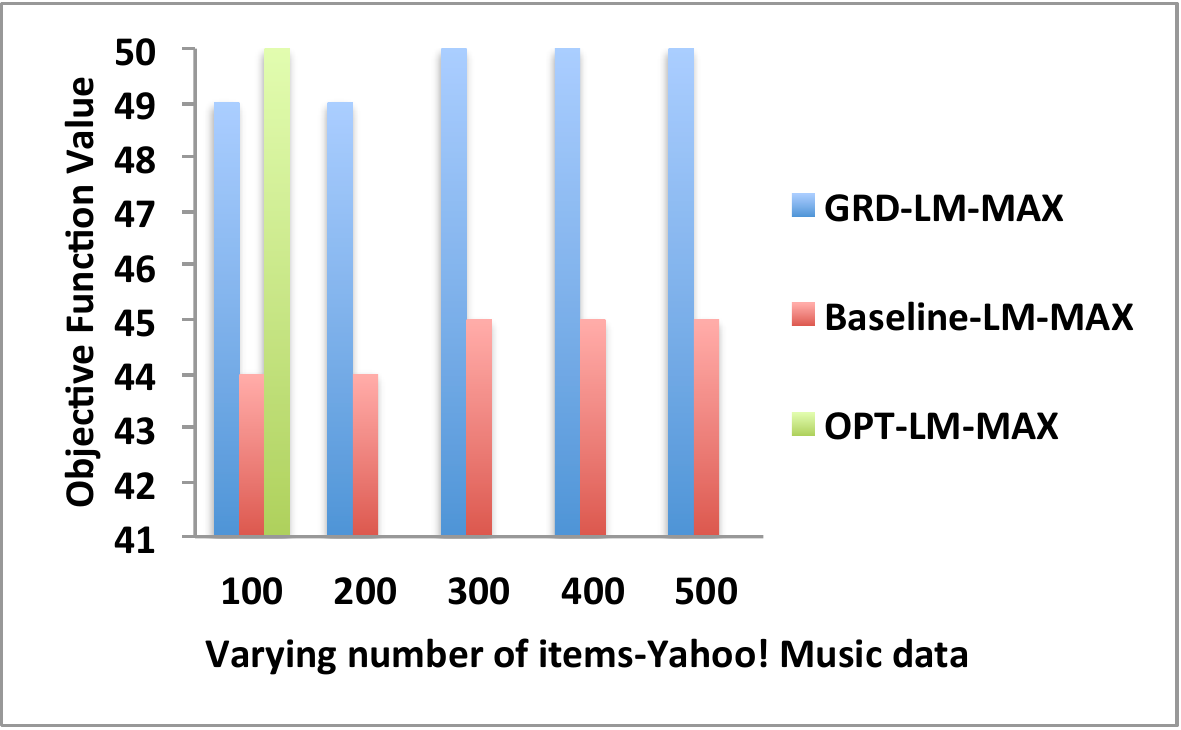}
    \label{fig:obj-lm-item}
 }
 \subfigure[]{
   \includegraphics[height=2.5cm, width=5.5cm] {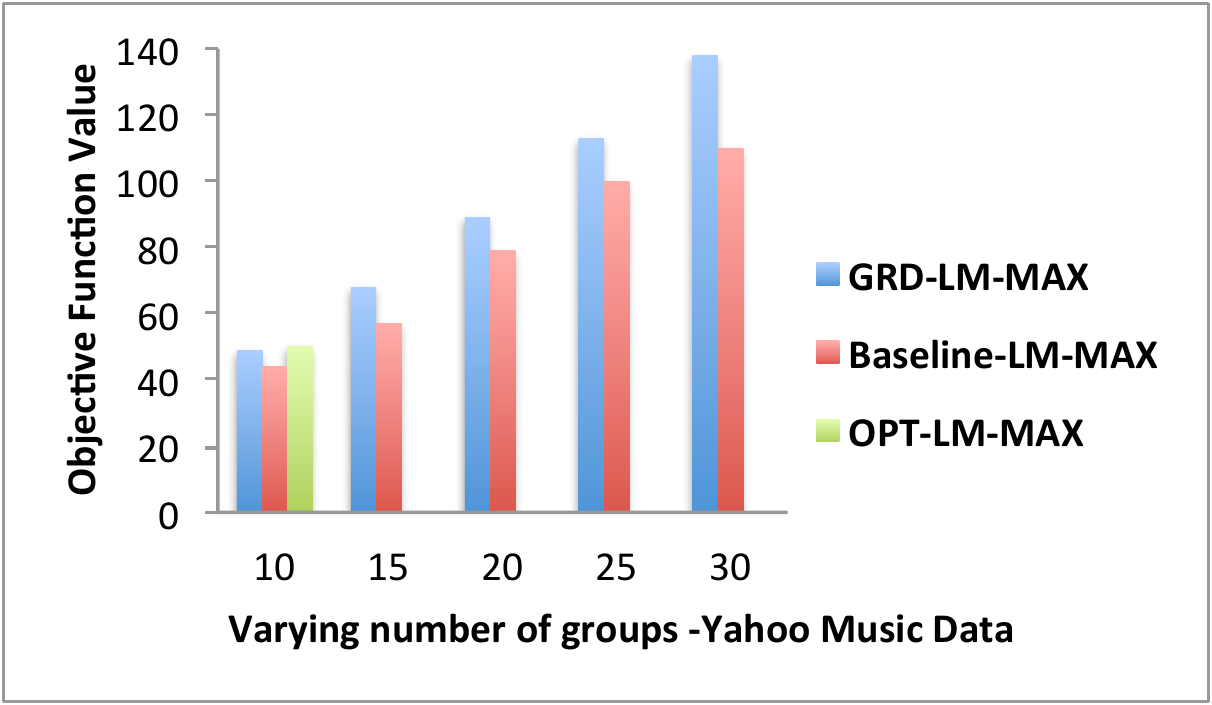}
    \label{fig:obj-lm-grp}
 }
 \label{fig:obj}
\caption{\small We measure the objective function value by varying \# users, \# items, \# groups,  respectively, one at a time. The default parameters are \# users = $200$, \# items = $100$, \# groups = $10$, $k=5$ and Max-aggregation. The underlying dataset is Yahoo! Music.}
\end{figure*}

\begin{figure}[ht]
\centering
\subfigure[]{
   \includegraphics[height=2.5cm, width=4cm] {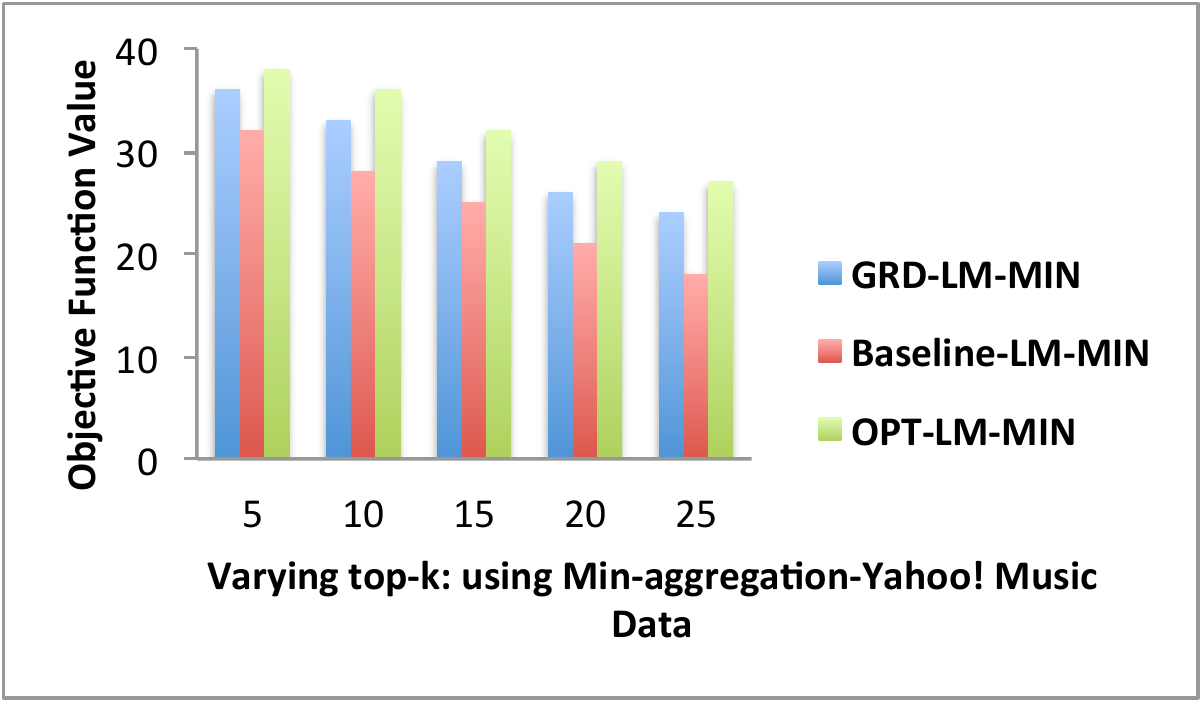}
    \label{fig:obj-lm-k}
 }
  \subfigure[]{
   \includegraphics[height=2.5cm, width=4cm] {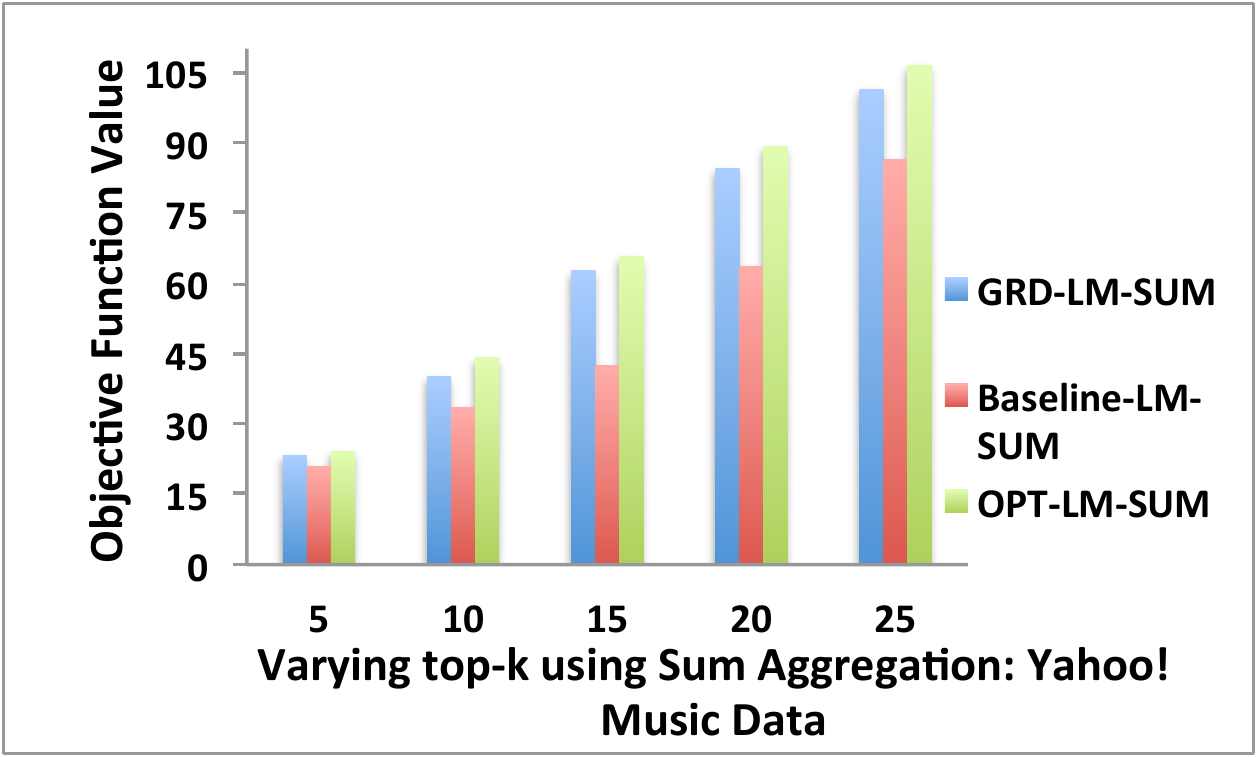}
    \label{fig:obj-lm-k-sum}
 }
\label{fig:obj1}
\vspace{-0.1in}
\caption{\small We measure the objective function value by varying top-$k$ for Min and Sum-aggregation using Yahoo! Music. The default parameters are \# users = $200$, \# items = $100$, \# groups = $10$, $k=5$.}
\end{figure}

\begin{figure*}
\centering
\subfigure[]{
   \includegraphics[height=2.5cm, width=4cm]{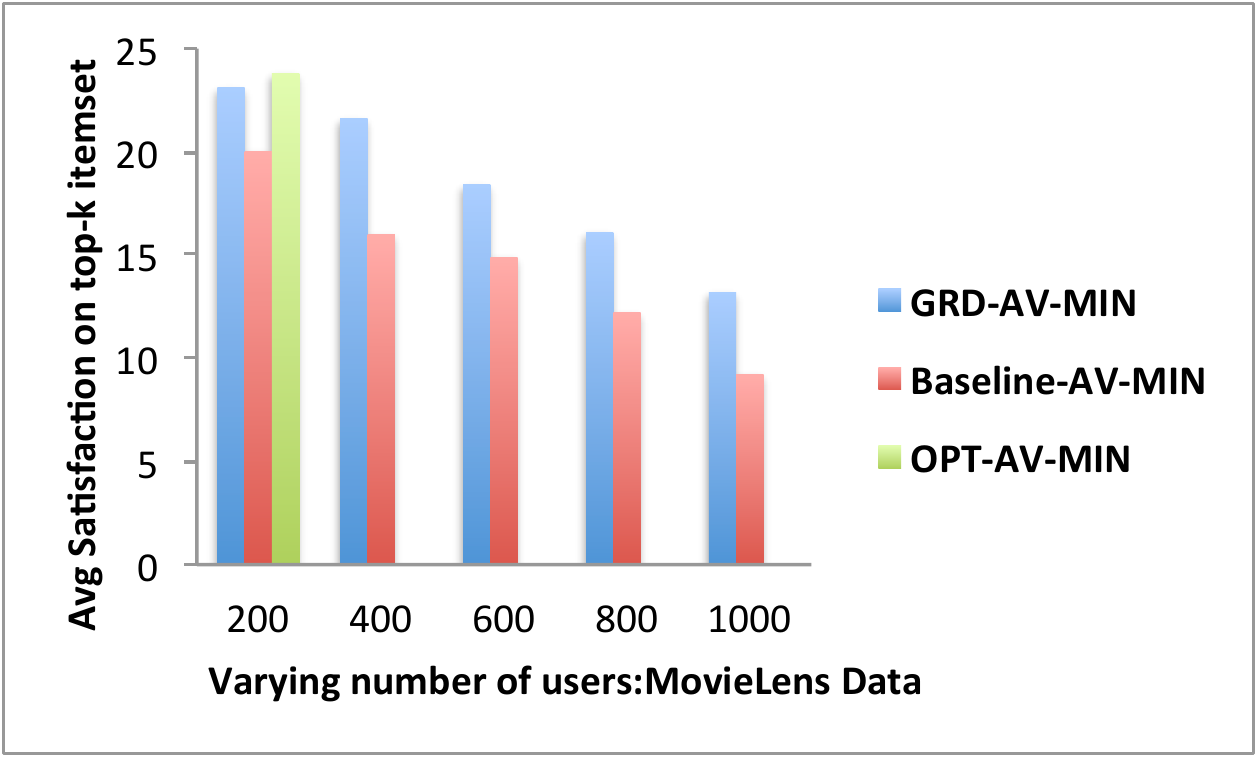}
    \label{fig:sat-av-user}
    }
\subfigure[]{
   \includegraphics[height=2.5cm, width=4cm]{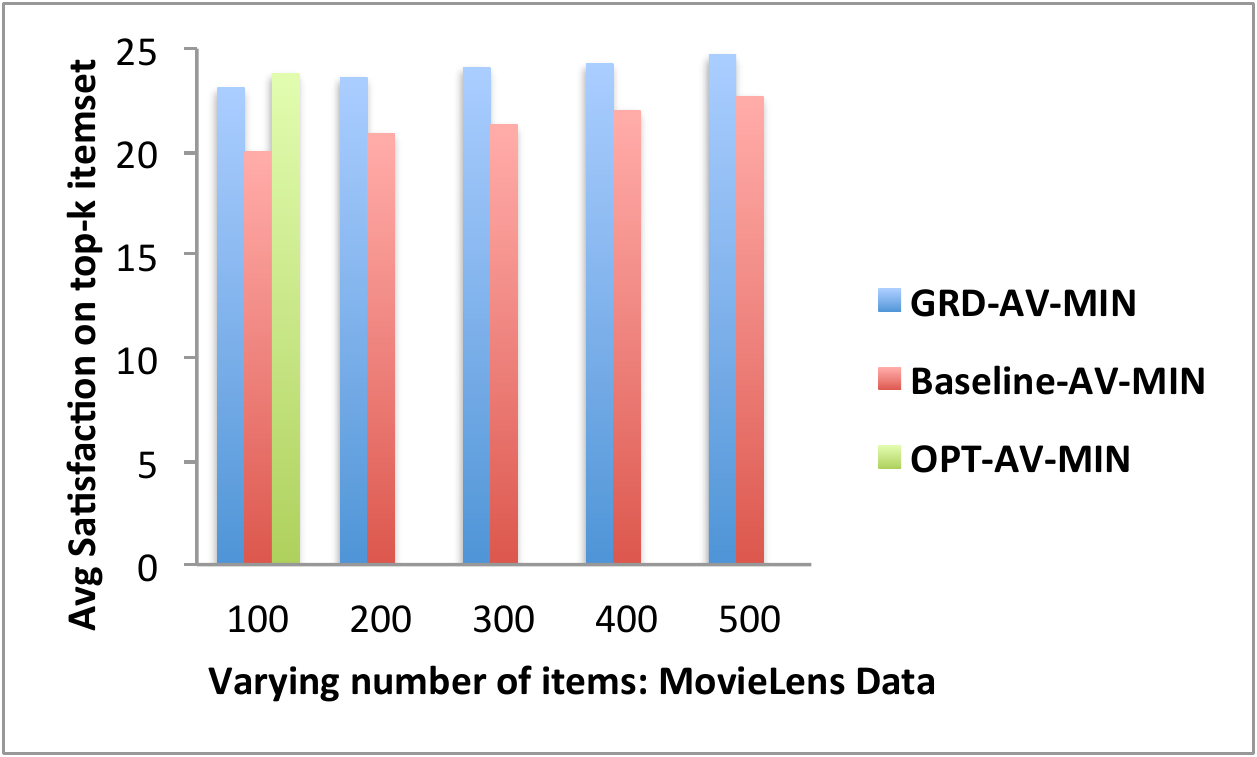}
    \label{fig:sat-av-item}
    }
\subfigure[]{
   \includegraphics[height=2.5cm, width=4cm]{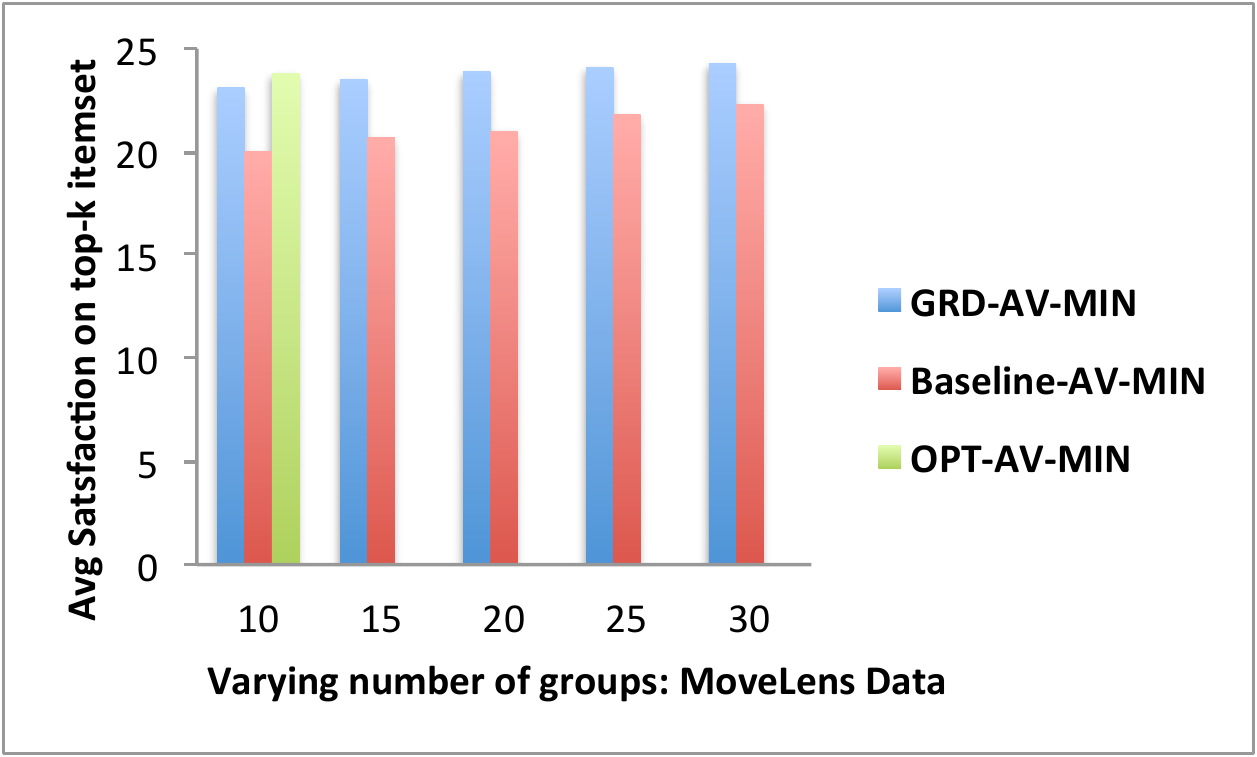}
    \label{fig:sat-av-grp}
    }
\subfigure[]{
   \includegraphics[height=2.5cm, width=4cm]{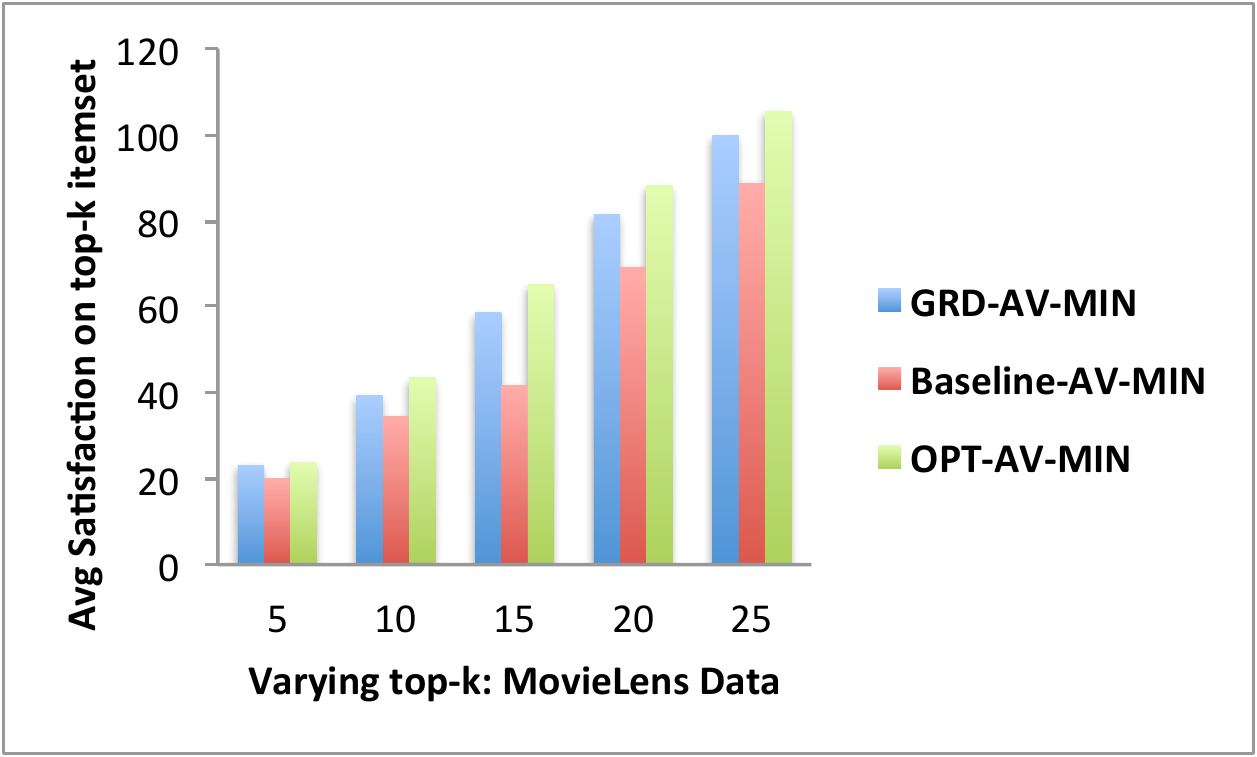}
    \label{fig:sat-av-k}
    }
    \vspace{-0.1in}
\caption{\small We measure the average group satisfaction score on the top-$k$ item list $\frac{\sum_{x=1}^{\ell}\sum_{j=1}^{k} \overline{sc(g_x,i^j)}}{\ell}$ by varying \# users, \# items, \# groups, and top-$k$, respectively, one at a time. The default parameters are \# users = $200$, \# items = $100$, \# groups = $10$, $k=5$. The underlying dataset is MovieLens.}
\end{figure*}

\begin{figure*}
\subfigure[]{
   \includegraphics[height=2.5cm, width=5.5cm]{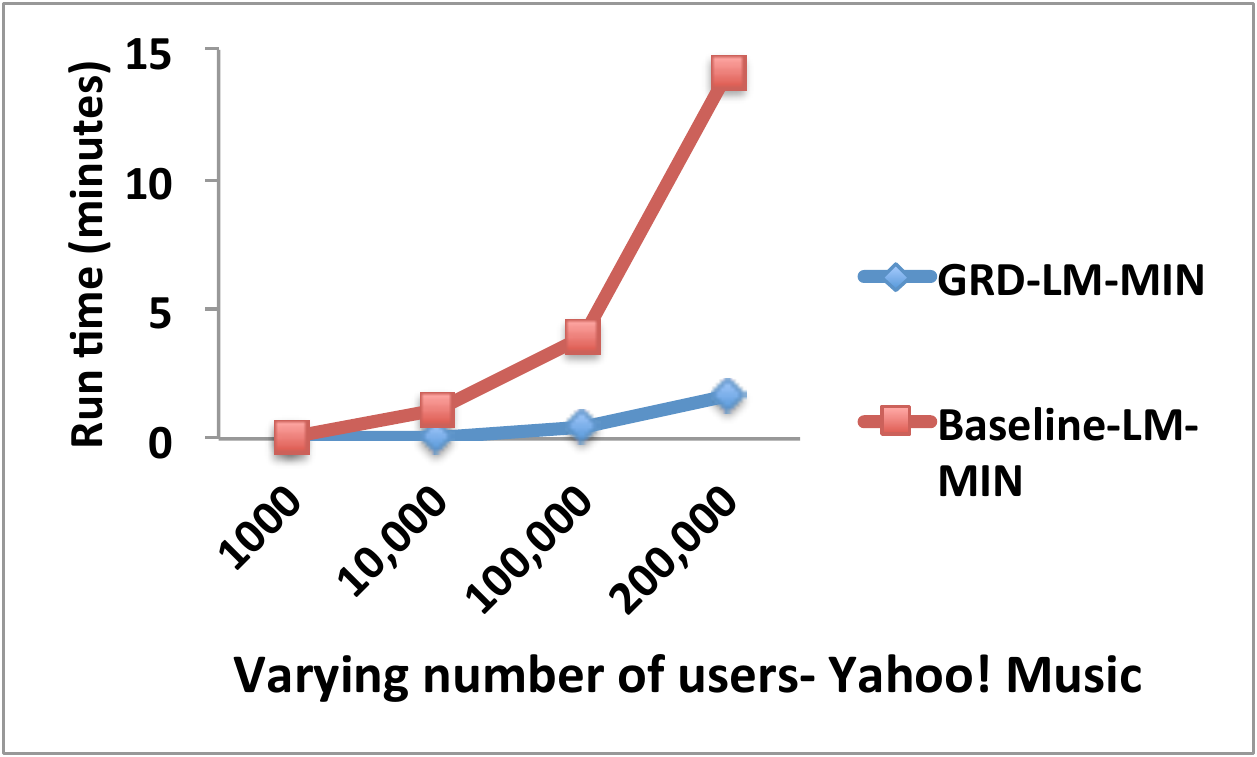}
    \label{fig:lm-user-scale}
}
\subfigure[]{
   \includegraphics[height=2.5cm, width=5.5cm]{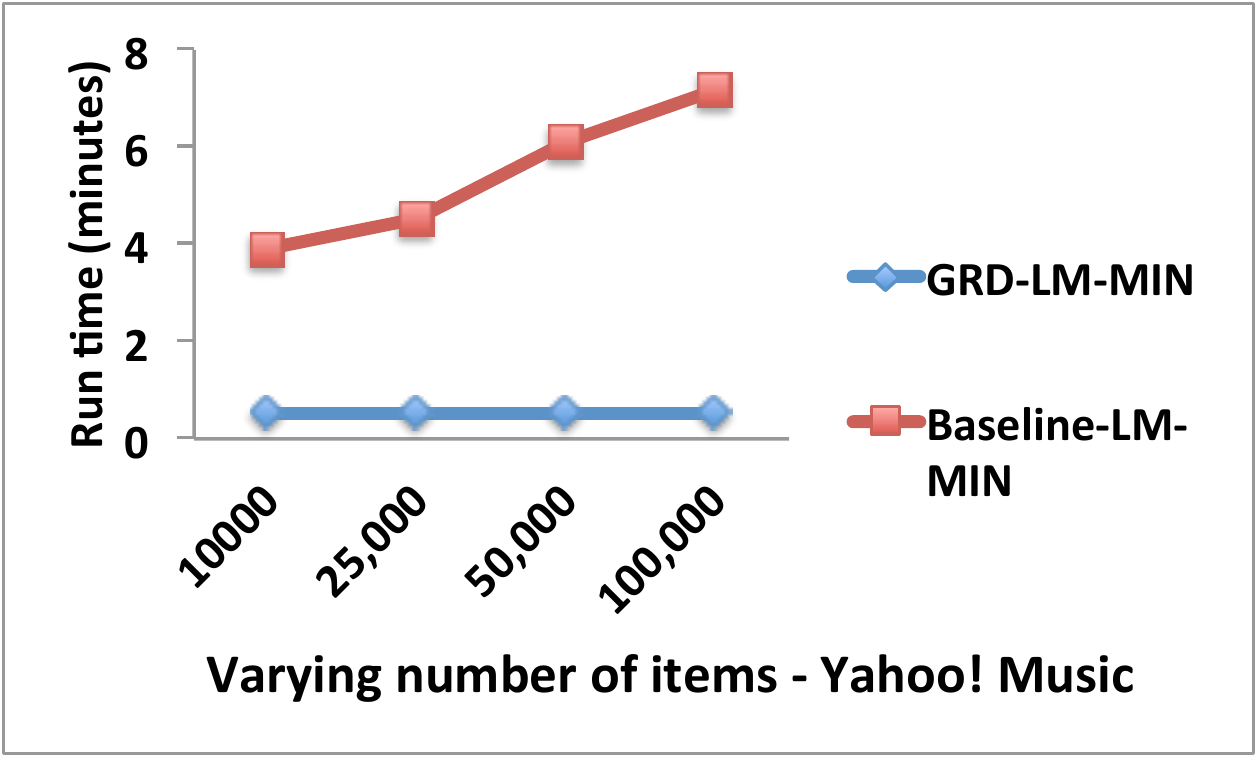}
    \label{fig:lm-item-scale}
}
\subfigure[]{
   \includegraphics[height=2.5cm, width=5.5cm]{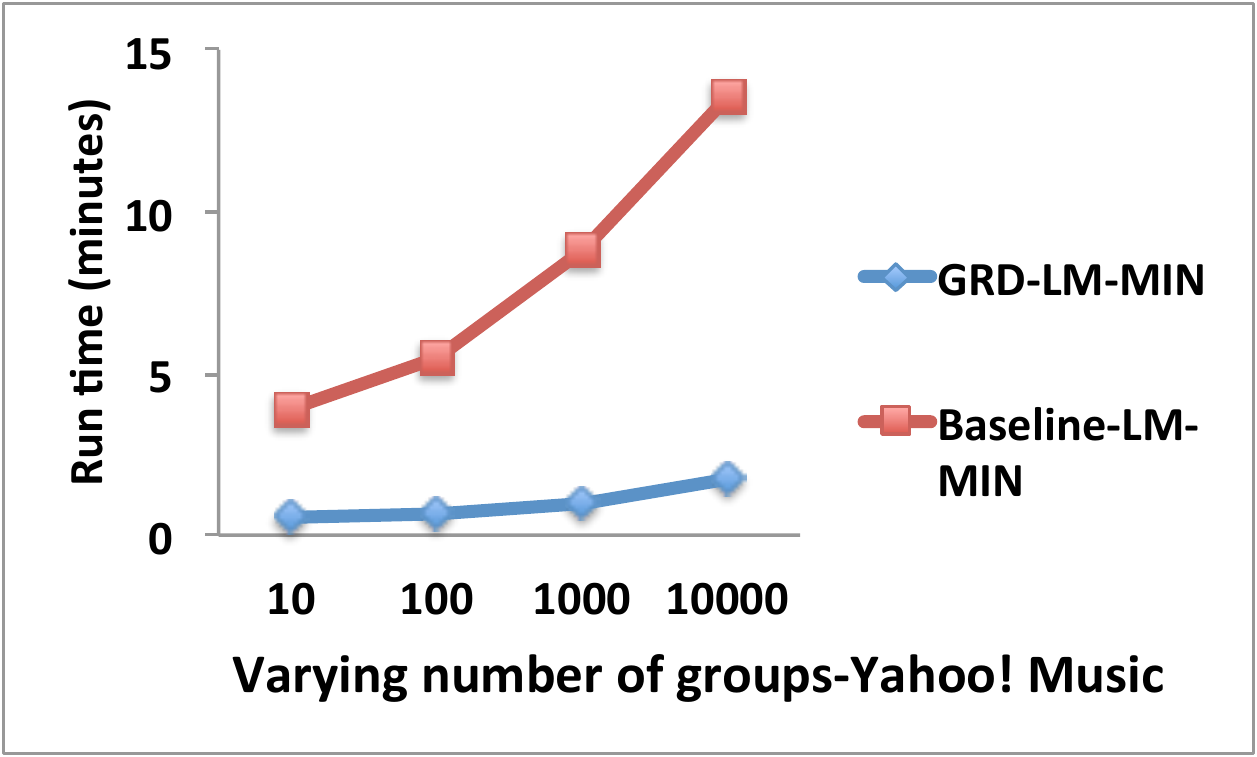}
    \label{fig:lm-grp-scale}
}
\vspace{-0.1in}
\caption{\small We present the average running time (measured in minutes) of the group formation algorithms under LM semantics with varying \# users, \# items, \# groups respectively, one at a time. The default parameters are \# users = $100,000$, \# items = $10,000$, \# groups = $10$, $k=5$, considering Min-aggregation. The underlying dataset is Yahoo! Music.}
\end{figure*}
\begin{figure*}
\subfigure[]{
   \includegraphics[height=2.5cm, width=4cm]{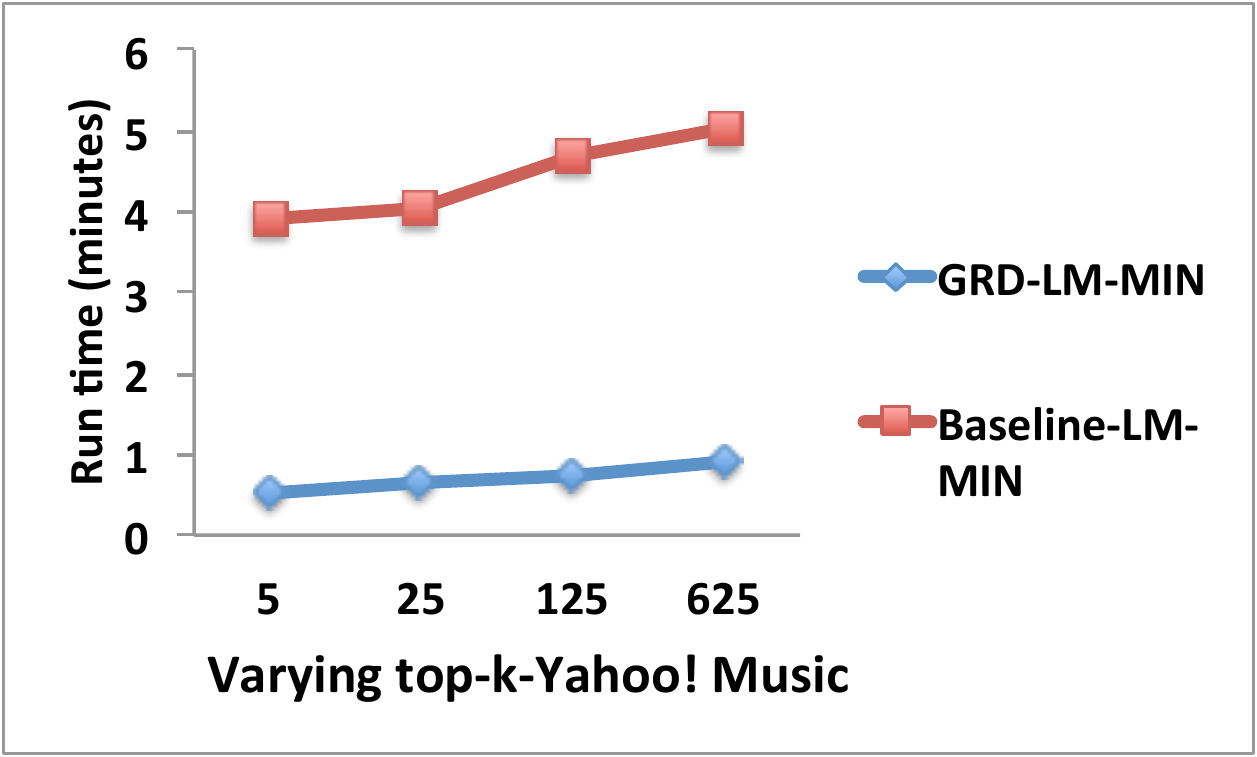}
    \label{fig:lm-k-scale}
}
\subfigure[]{
   \includegraphics[height=2.5cm, width=4cm]{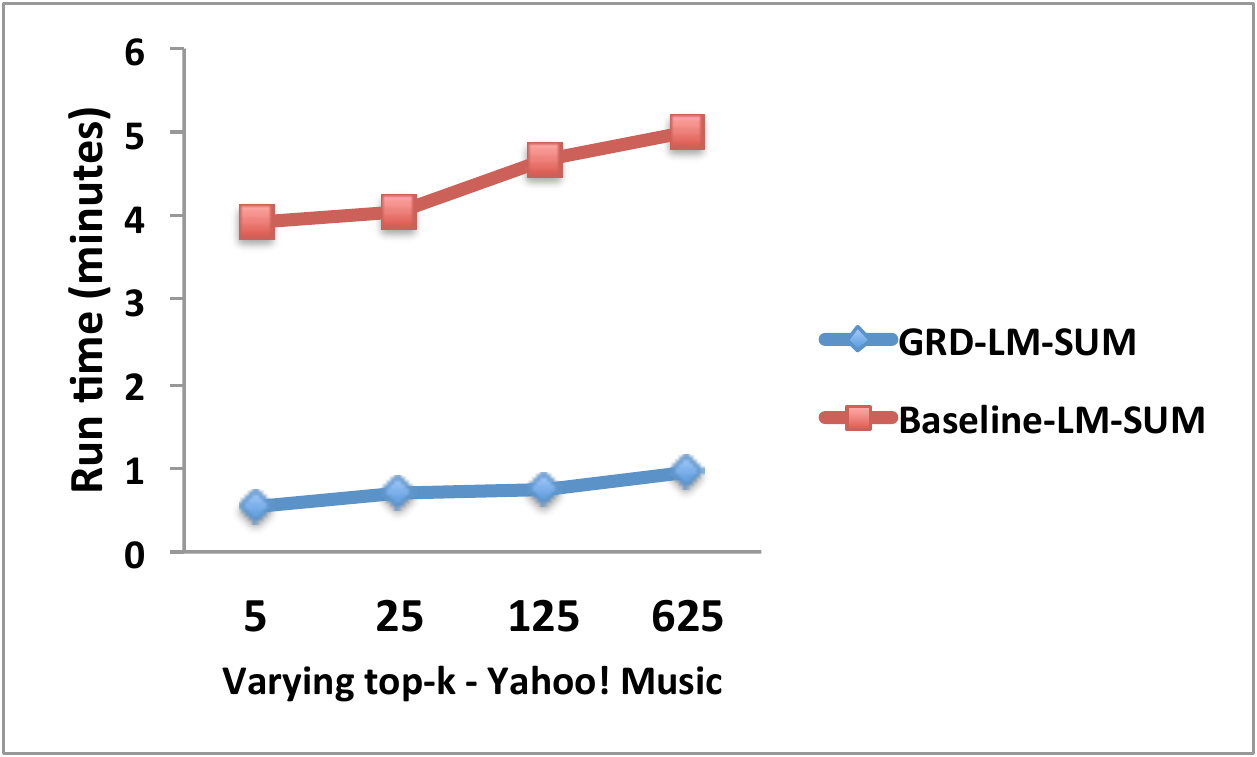}
    \label{fig:lm-k-scale-sum}
}
\subfigure[]{
   \includegraphics[height=2.5cm, width=4cm]{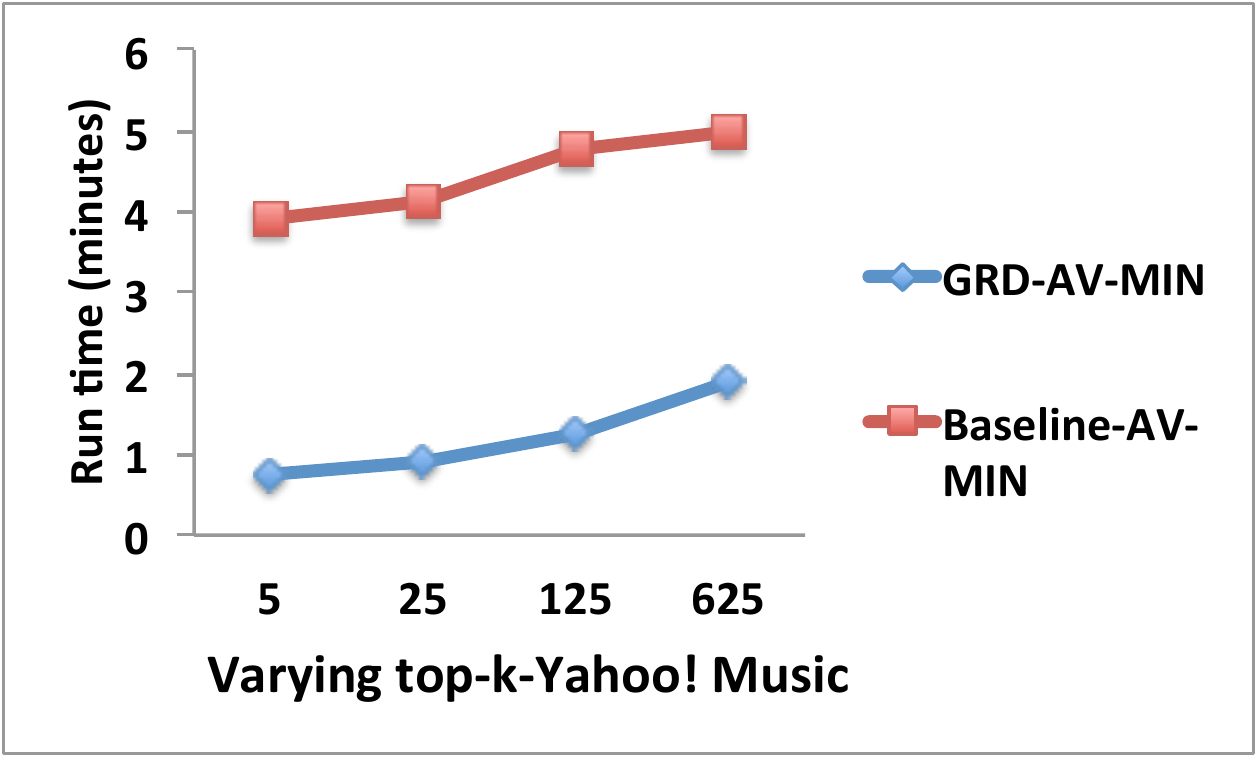}
    \label{fig:av-k-scale}
}
\subfigure[]{
   \includegraphics[height=2.5cm, width=4cm]{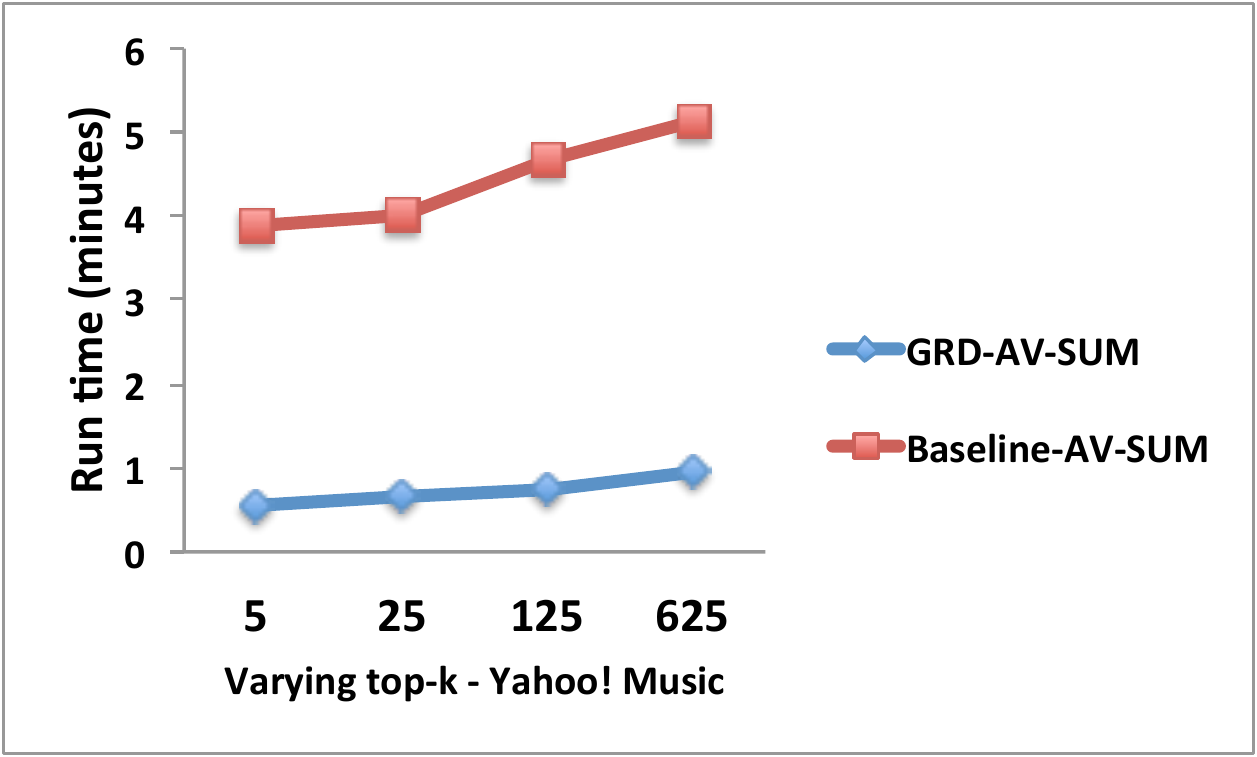}
    \label{fig:av-k-scale-sum}
}
\vspace{-0.1in}
\caption{\small We present the average running time (measured in minutes) varying top-$k$, using Sum and Min aggregation for LM and AV. The default parameters are \# users = $100,000$, \# items = $10,000$, \# groups = $10$. The underlying dataset is Yahoo! Music.}
\end{figure*}

\subsubsection{Distribution of Group Sizes}
We randomly select $200$ users and $100$ items with the objective to form $\ell=10$ groups and recommend top-$k$ ($k=5$) items to each group using both datasets. In each sample of $200$ users, we measure the number of users in each of these $10$ groups. We repeat this experiment $3$ times and present the average variation in group size using a $5$-point summary : average minimum size, average 25\% percentile (Q1), average median, average 75\% percentile (Q3), average maximum size. This representation is akin to the box-plot summary~\cite{DBLP:books/mk/HanK2000}. The underlying algorithms are {\tt GRD-LM} and {\tt GRD-AV} considering both Max and Sum-aggregation. These results are summarized in Table~\ref{tbl:dist}. It is evident that the groups that are generated by our algorithms are balanced in general. Unsurprisingly, {\tt GRD-LM-MAX} produces more uniform groups than {\tt GRD-LM-SUM}, as the latter imposes stricter condition on grouping members (needs to match both top-$k$ sequence and ratings). Interestingly, notice that the generated group sizes have smaller average variation under AV than under LM. This is expected, because {\tt GRD-AV} only requires users to have the same top-$k$ item sequence (irrespective of the specific ratings on the bottom item) to belong to the same group. Thus, AV tends to produce relatively larger groups and results in smaller variation in size across the generated groups.

\begin{table}
\centering
\begin{tabular}{ |l|l|l|l| }
\hline
\multicolumn{4}{ |c| }{Distribution of Average Group Size} \\
\hline
Semantics & Quantile  &  {\tt GRD-*-MAX} & {\tt GRD-*-SUM}\\ \hline
\multirow{4}{*}{LM} & {\tt Minimum} & $11.33$ & $8.33$\\
& {\tt Q1} & $15.75$ & $11.5$\\
 & {\tt Median } & $18.5$ & $13.66$\\
 & {\tt Q3} & $23.58$ & $19.33$ \\ 
 & {\tt Maximum} & $31.33$ & $39.33$ \\
 \hline
\multirow{4}{*}{AV} & {\tt Minimum} & $20.33$ & $14.33$\\
& {\tt Q1} & $22.4$ & $19.35$\\
 & {\tt Median } &  $25.4$ & $22.5$\\
 & {\tt Q3} & $28.66$ & $25.95$ \\ 
 & {\tt Maximum} & $30.33$ & $33.75$\\
 \hline
\end{tabular}
\vspace{-0.1in}
\caption{Distribution of Average Group Size\label{tbl:dist}}
\end{table}


\subsection{Scalability Experiments}\label{scale}
For brevity, we present the results for only the larger dataset, Yahoo! Music, and present a subset of results. As mentioned earlier, {\tt OPT-LM} and {\tt OPT-AV} do not terminate beyond $200$ users, $100$ items, and $10$ groups, and are thus omitted.
Our default settings here are as follows: number of users= $100,000$, number of items = $10,000$, number of groups =$10$, $k=5$ and Min-aggregation. Again, we vary \# users, \# items, \# groups, and $k$.

{\bf Interpretation of Results:} The running time of {\tt GRD} is primarily affected by the number of users ($n$), number of groups ($\ell$), and $k$.  Therefore, as it would be seen throughout the results, varying number of items does not impact the computational cost of either {\tt GRD-LM} or {\tt GRD-AV}. Between {\tt GRD-LM} or {\tt GRD-AV}, the latter takes more  time, as it has to aggregate the satisfaction of all the users in a single group to produce the group satisfaction score. Running time of {\tt GRD-LM-MIN} and {\tt GRD-LM-SUM} are observed to be comparable, which corroborates our theoretical analysis.
For the baseline algorithms ({\tt Baseline}), running time increases with increasing number of users, and number of groups. Additionally, to produce the top-$k$ recommendations once the groups are formed, the last step of these algorithms has to sift through the item-ratings of all users inside every group. Given a group obtained using clustering, the ranked item lists of users may not be aligned. Thus, to form the group's overall top-$k$ list, one may have to consider arbitrarily many items in the individual ranked items lists of the group members. Therefore, the computation time of the baseline algorithms increases with increasing $m$ or $k$. In case of our greedy algorithms, groups are formed by insisting that group members are aligned on the top-$k$ item sequence. Thus, forming the overall top-$k$ list for a group is straightforward in this case, for all groups but the $\ell$-th group formed by the greedy algorithms. For the $\ell$-th group, it sifts through the top-$k$ items per user to generate score.

\vspace{-0.05in}
\subsubsection{Scalability Experiments : LM}

{\bf  Number of users:} We vary the number of users and measure the clock time of group formation and top-$k$ recommendation time for  {\tt GRD-LM-MIN} and {\tt Baseline-LM-MIN}. (For the record, the optimal algorithms do not complete even after one hour.) Figure~\ref{fig:lm-user-scale} presents the results in minutes. As expected, {\tt GRD-LM-MIN} increases linearly and always terminates within $2$ minutes. These results also exhibit that the clustering based baseline algorithm is non-linear that our algorithm significantly outperforms. 

{\bf  Number of items:} Next, we vary the number of items and measure the clock time. As can be seen from Figure~\ref{fig:lm-item-scale}, the running time of our proposed algorithm is less affected by varying number of items. This result is also consistent with our theoretical analysis. Recall from Section~\ref{sec:analyseslm} that the running time of both the algorithms is $O(nk+\ell\log n)$, which is independent of the number of items $m$. Since {\tt GRD-LM-MIN} leverages the sorted top-$k$ list of items, more items do not necessarily lead to higher computational cost. On the contrary, the clustering based baseline has to produce the top-$k$ itemset for each user group once the groups are formed. As explained earlier, this requires considerable work since top-$k$ lists of cluster members may not be aligned. Figure~\ref{fig:lm-item-scale} clearly demonstrates that {\tt Baseline-LM} is rather sensitive to the increasing number of items. {\tt GRD-LM-MIN} of course beats the baseline. 

{\bf  Number of groups:} These results are presented in Figure~\ref{fig:lm-grp-scale}. When the number of groups is increased, both {\tt Baseline-LM-MIN} and our algorithm take more time. This observation is also consistent with our theoretical analyses, as the running time of both these algorithms depends on the number of groups. However, {\tt GRD-LM-MIN} scales linearly with the increasing number of groups and outperform its baseline counterpart quite consistently.

{\bf  Top-$k$ on Min and Sum aggregation:} We vary top-$k$ for both {\tt GRD-LM-MIN} and {\tt Baseline-LM-MIN} and present the running time in Figure~\ref{fig:lm-k-scale}. While {\tt GRD-LM-MIN} consistently outperforms {\tt Baseline-LM-MIN}, both these algorithms are not very sensitive to increasing $k$. The computation time of the first $\ell-1$ groups are not so much affected by $k$ and only determining the LM score and hence top-$k$ list of the last group (i.e., $\ell$-th group) is affected. The same observation holds for the baseline, as it incurs majority of its computations in forming the clusters that do not depend on $k$. Figure~\ref{fig:lm-k-scale-sum} presents the running time of both {\tt GRD-LM-SUM} and {\tt Baseline-LM-SUM}. {\tt GRD} consistently outperforms {\tt Baseline}, as expected, similar to Min aggregation.




\vspace{-0.05in}
\subsubsection{Scalability Experiments : AV}
{\bf  Number of users:}
In this final set of scalability experiments, we again vary number of users and compute the running time of group formation algorithms under  AV semantics. Figure~\ref{fig:av-user-scale} presents the results. These results are similar to those of LM, except that AV takes more time to compute than LM. Then, as expected, the running time of {\tt Baseline-AV} is similar to that of  {\tt Baseline-LM} in Figure~\ref{fig:lm-user-scale}, as the clustering algorithm does not exploit the (AV) semantics in the group formation process. Our proposed greedy algorithm consistently outperforms the baseline algorithm.

{\bf  Number of items:}
We vary the number of items and present the computation time in Figure~\ref{fig:av-item-scale}. {\tt GRD-AV-MIN} takes more time  to terminate compared to that of {\tt GRD-LM-MIN} (Figure~\ref{fig:lm-user-scale}), this slight increase is due to the extra computation that {\tt GRD-AV-MIN} has to perform to aggregate AV score for each group. On the other hand, {\tt GRD-AV-MIN} is not sensitive to the increasing number of items, similarly to {\tt GRD-LM-MIN}. The figure clearly illustrates that {\tt Baseline-AV-MIN} takes more time, as the number of items is increased. As usual, {\tt GRD} outperforms {\tt Baseline}.

{\bf  Number of groups:}
We vary number of groups and observe that both algorithms incur higher processing time with increased number of groups. Figure~\ref{fig:av-grp-scale} presents these results. As expected, {\tt GRD-AV-MIN} scales linearly with the increasing number of groups and consistently outperforms the baseline algorithm.

{\bf  Top-$k$ on Min and Sum aggregation:}
We vary $k$ and measure the computation time in Figures~\ref{fig:av-k-scale} and Figure~\ref{fig:av-k-scale-sum}. With increased values of $k$, running time increases overall. However, {\tt GRD-AV-MIN} takes significantly less time to terminate, compared to its baseline counterpart. Computation times of {\tt Baseline-LM-MIN} and {\tt Baseline-AV-MIN} are similar for the same values of $k$, as these baseline algorithms do not make use of the underlying group recommendation semantics during the group formation process. Sum aggregation results are presented in Figure~\ref{fig:av-k-scale-sum} and the behavior is consistent, as before.

\vspace{-0.05in}
\subsection{User Study}\label{exp:usrstudy}

We use publicly available Flickr data to set up the user study in AMT for New York city.  Given a Flickr log of a particular city, each row in that log corresponds to a user itinerary
that is visited in a 12-hour window. From this log, we extract the most popular $10$ POIs. The user study is designed in two phases overall, where Phase 1 is used to create three different sets of users -- similar, dissimilar, and random. Phase 2 is used to assess the performance of the algorithms on each of these sets, under different semantics and aggregation functions. 

{\bf Phase 1: Preference Collection and Group Formation:} First, we set up a HIT (Human Intelligence Task) in AMT, where we ask each AMT user to rate one of these $10$ POIs on a scale of $1-5$, higher rating implying greater  preference. This data is collected from $50$ workers.
From this collected dataset, we create user samples. Sampling is conducted to select a seed user.

{\bf Similar user sample:} We select a subset of $10$ users who have provided very similar ranking on the $10$ POIs. We compute normalized pair-wise similarity, considering each item in the top-$10$ ranked item lists for each user pair and aggregating that over all $10$ items, as follows:
$sim(u,u') = \sum_{j=1}^{10} \frac{sim(u,u',j)}{10}$
\begin{align*}
&  sim(u,u',j) &= 
\begin{cases} 
 1 - \frac{|sc(u,i^j)- sc(u',i^j)|}{5}  \text{ if } i_{j}^u=i_{j}^{u'} \\
0 \text{\qquad otherwise} 
\end{cases}
\end{align*}  
{\bf Dissimilar user sample} We select a different subset of $10$ users who has the smallest aggregate pair-wise similarity.\\
{\bf Random user sample} In our third sample, we select another subset of $10$ users who are chosen randomly from the $50$ users (workers).\\

For brevity, we only report results on LM semantics for these experiments and set the number of groups to be $\ell=3$. We apply {\tt GRD-LM} and {\tt Baseline-LM} (both Sum and Min) to each sample and each algorithm produces three groups.

{\bf Step 2: Group Satisfaction Evaluation}: In this phase, for each user sample (similar, dissimilar, and random), we set up  $6$ HITs in AMT ($3$ for Min and another $3$ for Sum), where each HIT comprises the ($3+3$) groups created by {\tt GRD-LM} and {\tt Baseline-LM}.  In each HIT, we first show the individual user preference ratings for all $10$ users in the sample, over all $10$ items. We do not disclose the underlying group formation algorithm (but refer to them as Method-1 and Method-2) and produce the groups formed  by {\tt GRD-LM} and {\tt Baseline-LM}. We note that our settings mimic the set-up of previous user studies in group recommendation research~\cite{reco1,DBLP:journals/pvldb/Amer-YahiaRCDY09,DBLP:conf/sigmod/RoyICDE14}. We also request the worker to regard herself as one of the individuals in the sample and ask her to rate the following questions (higher is better): (1) Her satisfaction with the formed groups by Method-1; (2) Her satisfaction with the formed groups by Method-2, (3) In an absolute sense, which method she prefers more. Each HIT is undertaken by $10$ unique users, thereby involving $60$ new users in this phase ($30$ for Min and another $30$ for Sum). For each HIT, we average the ratings and present them in Figures~\ref{fig:userstudy1} and~\ref{fig:userstudy2}. Standard error bars are added for statistical significance.

\eat{
\begin{table}
\centering
\begin{tabular}{ |l|l|l|l| }
\hline
\multicolumn{4}{ |c| }{Results of user study} \\
\hline
Sample type & Algorithm &  Avg. Satisfaction & p-value\\ \hline
\multirow{3}{*} {\tt Similar users} & {\tt GRD-LM}  & $4.1$  & $0.0331455$ \\
  &  {\tt Baseline-LM} & $3.7$ &   \\
  \hline
\multirow{3}{*} {\tt Dissimilar users} & {\tt GRD-LM} & $4.1$ & $0.0281267$ \\
 & {\tt Baseline-LM} & $3.2$ &    \\
 \hline
\multirow{3}{*}{\tt Random users} & {\tt GRD-LM}  & $3.9$ & $0.0303563$\\
 & {\tt Baseline-LM} & $3.4$ &     \\
  \hline
\end{tabular}
\vspace{-0.1in}
\caption{Average User Satisfaction based on paired t-test with significance level $\alpha=0.05$ \label{tbl:userstudy}}
\end{table}

\begin{table}
\centering
\begin{tabular}{ |l|l|l| }
\hline
\multicolumn{3}{ |c| }{Results of user study} \\
\hline
Underlying Algorithm & Sample type & Average Satisfaction \\ \hline
\multirow{4}{*}{\tt GRD-LM} & {\tt Similar users} & $4.1$   \\
  & {\tt Dissimilar users} & $4.1$ \\
 & {\tt Random users} & $3.9$ \\ 
 \hline
\multirow{4}{*}{\tt Baseline-LM} & {\tt Similar users} & $3.7$ \\
 & {\tt Dissimilar users} & $3.2$ \\
 & {\tt Random users} & $3.4$ \\ 
\hline
\end{tabular}
\vspace{-0.1in}
\caption{Average User Satisfaction with Statistical Significance, significance level $\alpha=0.05$ \label{tbl:userstudy}}
\end{table}

\begin{table*}
\begin{tabular}{cc|c|c|c|c|l}
\cline{3-6}
& & \multicolumn{4}{ c| }{Average Satisfaction} \\ \cline{3-6}
& & 2 & 3 & 5 & 7 \\ \cline{1-6}
\multicolumn{1}{ |c  }{\multirow{2}{*}{$p$-value $0.0331455$} } &
\multicolumn{1}{ |c| }{504} & $4.1$ & $3.7$   \\ \cline{2-6}
\multicolumn{1}{ |c  }{}                        &
\multicolumn{1}{ |c| }{540} & 2 & 3 & 1 & 0 &     \\ \cline{1-6}
\multicolumn{1}{ |c  }{\multirow{2}{*}{$p$-value $0.0281267$} } &
\multicolumn{1}{ |c| }{gcd} & 2 & 2 & 0 & 0 &  \\ \cline{2-6}
\multicolumn{1}{ |c  }{}                        &
\multicolumn{1}{ |c| }{lcm} & 3 & 3 & 1 & $0.0303563$ &  \\ \cline{1-6}
\end{tabular}
\caption{Average User Satisfaction with Statistical Significance \label{tbl:userstudy}}
\end{table*}

\begin{tabular}{ r|c|c|c| }
\multicolumn{1}{r}{}
 &  \multicolumn{1}{c}{noninteractive}
 & \multicolumn{1}{c}{interactive} 
 & \multicolumn{1}{c}{p-value} \\
\cline{2-3}
{\tt Similar users} & Library & University & \\
\cline{2-3}
{\tt Dissimilar users} & Book & Tutor & \\
\cline{2-3}
{\tt Random users} & Book & Tutor & \\
\cline{2-3}
\end{tabular}

\begin{figure}[t]
\centering
\includegraphics[height=30mm, width = 30mm]{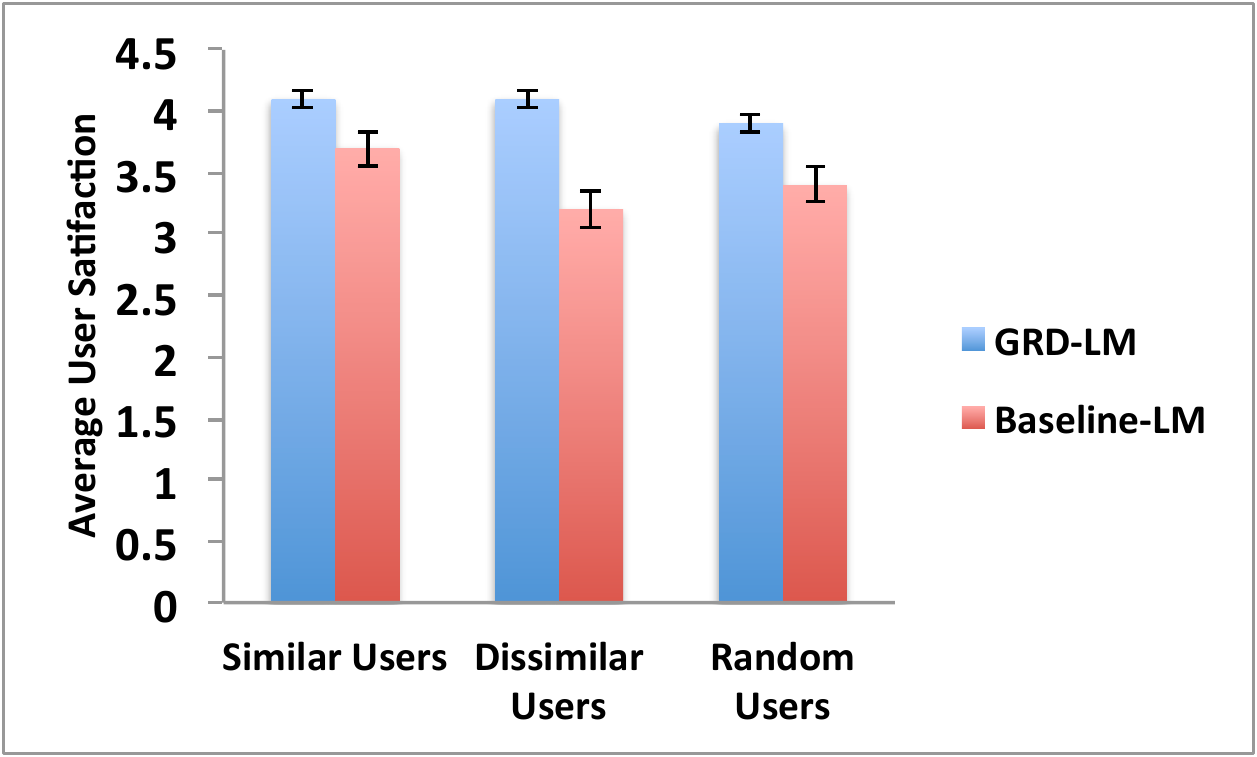}
\caption{\small{User Study Results: Average User Satisfaction with Statistical Significance (standard error)}}
\label{tbl:userstudy}
\end{figure}

\begin{figure}[t]
\centering
\includegraphics[height=30mm, width = 30mm]{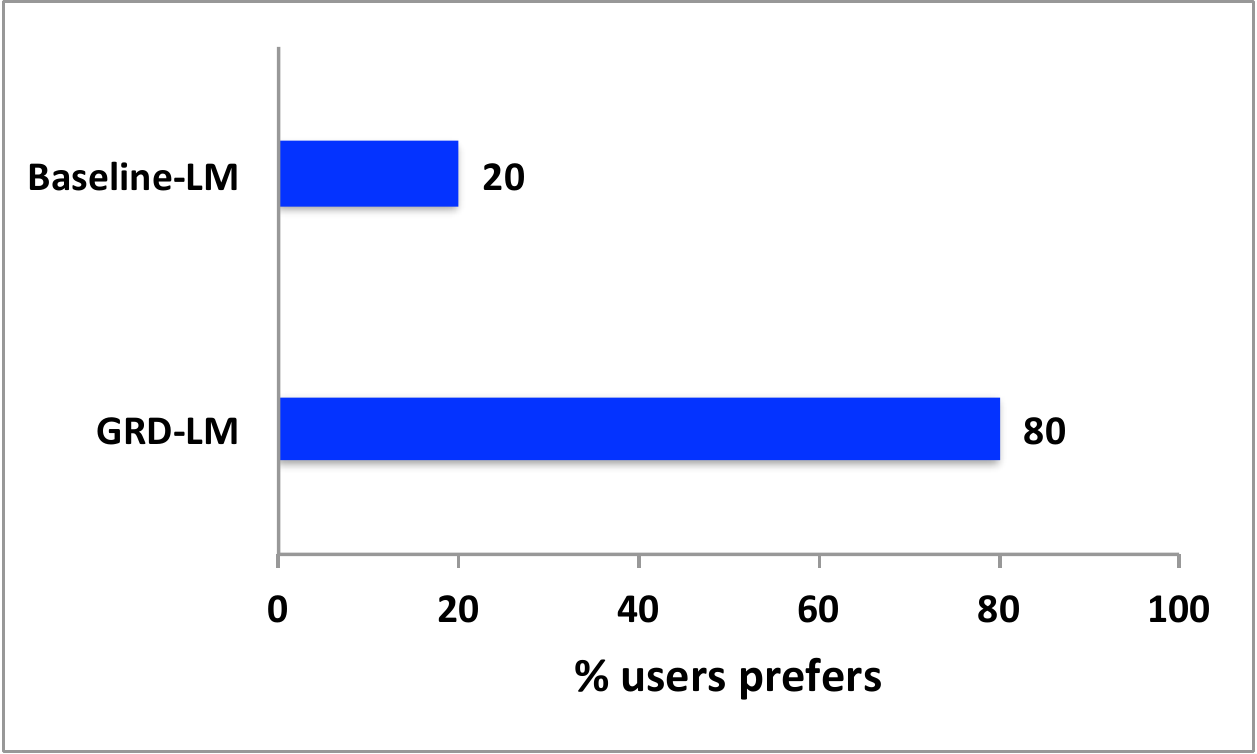}
\vspace{-0.1in}
\caption{\small{User study results: $80\%$ of users are more satisfied with the groups formed by {\tt GRD-LM} and only $20\%$ prefers {\tt Baseline-LM}}}
\label{fig:userstudy}
\end{figure}
}

Additionally, we aggregate and compute the percentage of users who prefer  {\tt GRD-LM} versus {\tt Baseline-LM} in Figure~\ref{tbl:userstudy}.

{\bf Interpretation of Results:} We make the following key observations from the results presented in Figures~\ref{tbl:userstudy},~\ref{fig:userstudy1},~\ref{fig:userstudy2}. First and foremost, our proposed  algorithm {\tt GRD-LM} gives rise to higher satisfaction compared to the baseline, in all cases. In fact, this difference in satisfaction seems to be higher when the user population is dissimilar in its individual preferences. During our post-analysis, we see, indeed the clustering based baseline algorithm becomes ineffective, when the individual user preferences are dissimilar from each other. For the same reason, the difference in the average satisfaction of our greedy algorithm from the baseline algorithm is the highest for dissimilar users and smallest for similar users. A random user population consists of both similar and dissimilar users, hence the effectiveness falls in the middle. These results clearly demonstrate that our proposed solutions can effectively exploit existing group recommendation semantics and form groups that lead to high group satisfaction in practice.

\begin{figure*}
\centering
\subfigure[]{
   \includegraphics[height=2.5cm, width=5.5cm]{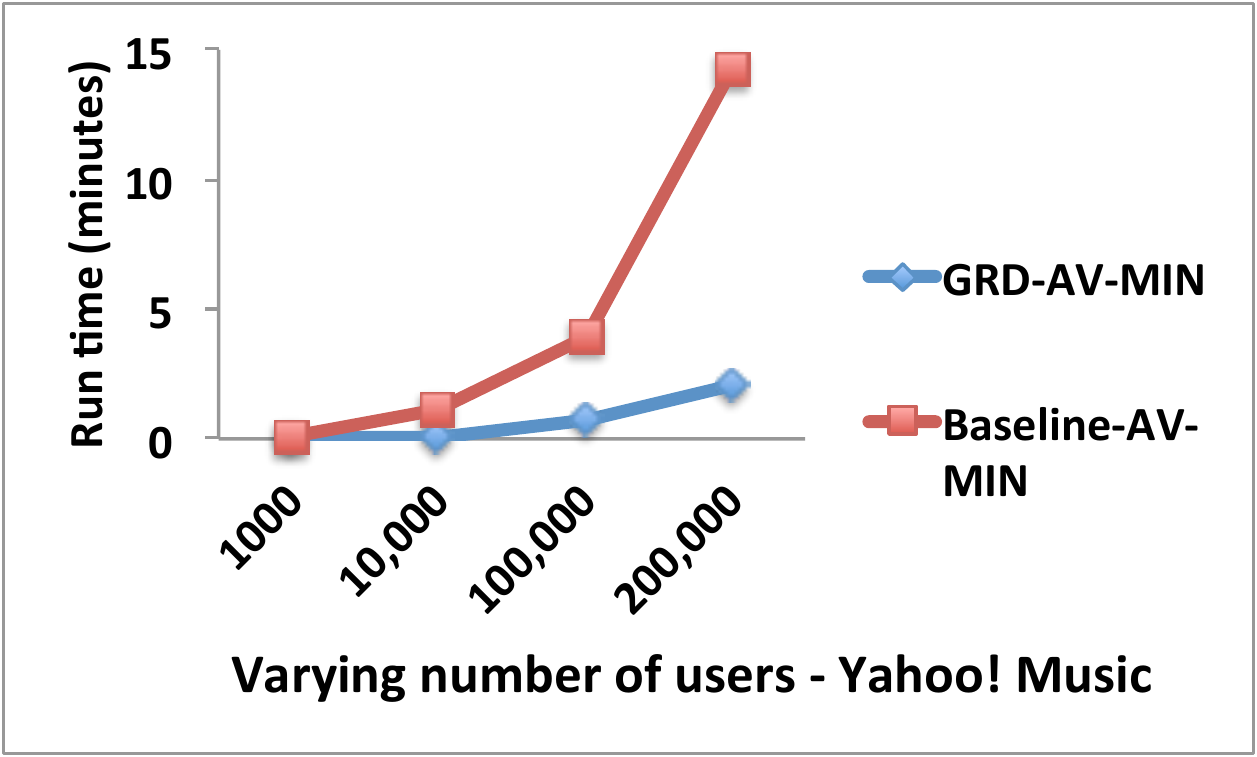}
    \label{fig:av-user-scale}
}
\subfigure[]{
   \includegraphics[height=2.5cm, width=5.5cm]{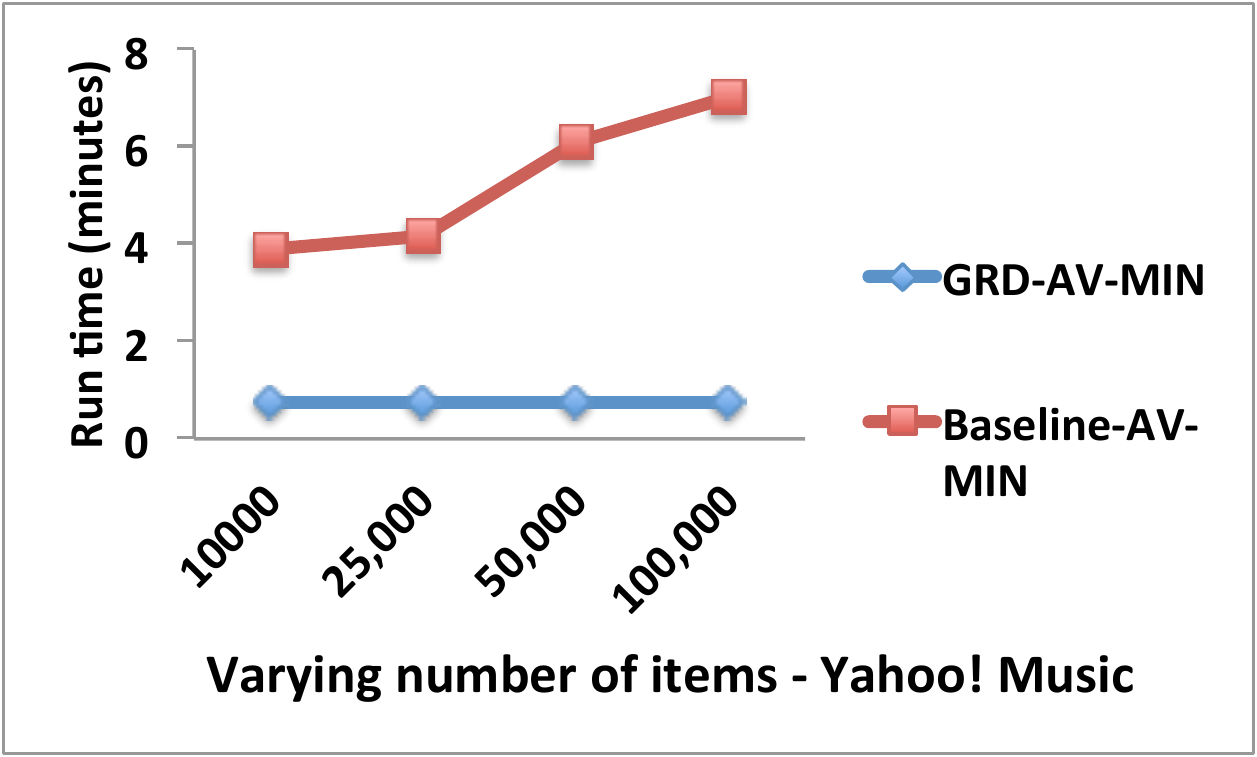}
    \label{fig:av-item-scale}
}
\subfigure[]{
   \includegraphics[height=2.5cm, width=5.5cm]{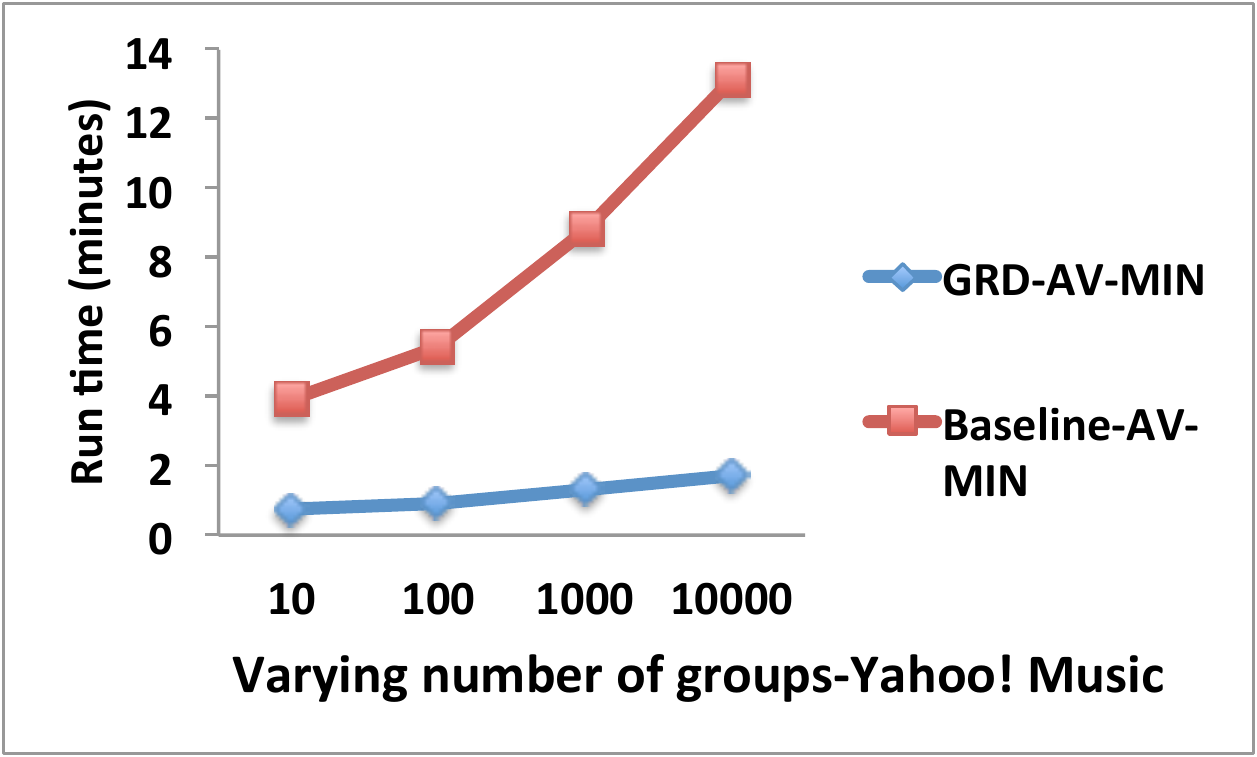}
    \label{fig:av-grp-scale}
}
\vspace{-0.1in}
\caption{\small We present the average running time (measured in minutes) of the group formation algorithms under AV semantics  with varying \# users, \# items, \# groups respectively, one at a time. The default parameters are \# users = $100,000$, \# items = $10,000$, \# groups = $10$, $k=5$, considering Min-aggregation. The underlying dataset is Yahoo! Music.}
\end{figure*}
\begin{figure*}
\centering
\subfigure[]{
   \includegraphics[height=2.5cm, width=5.5cm]{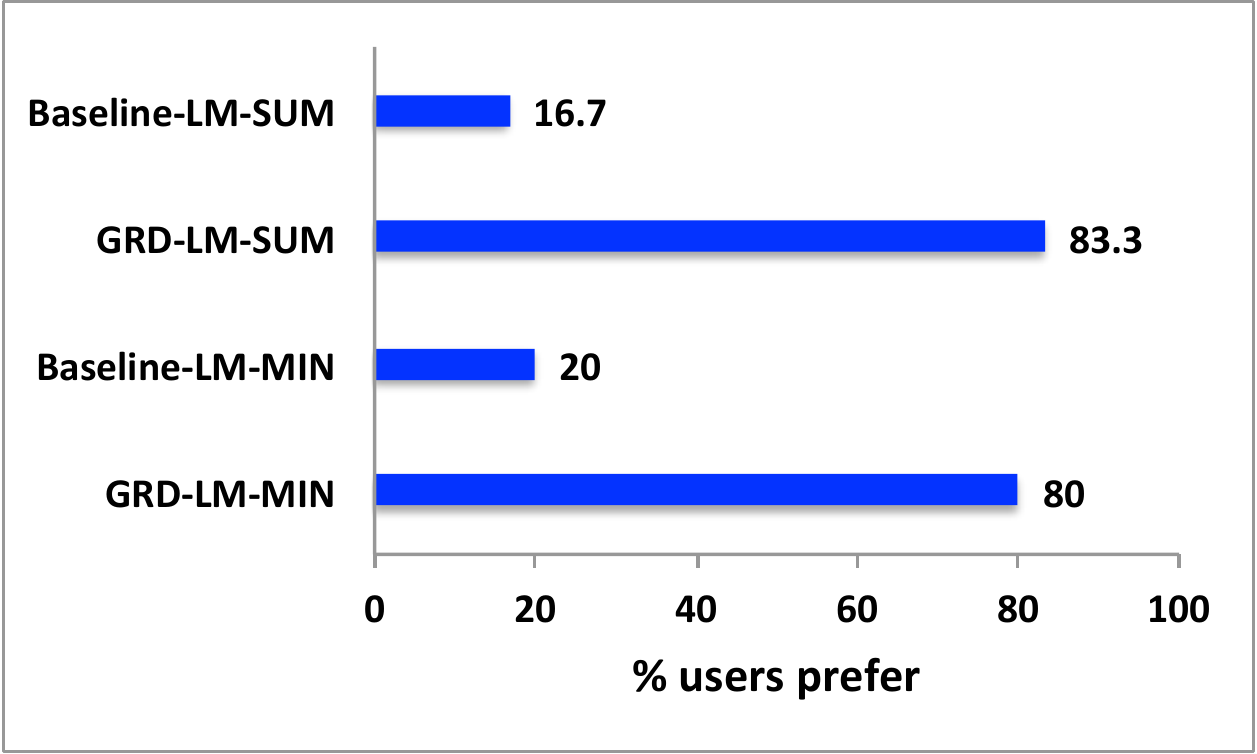}
    \label{tbl:userstudy}
}
\subfigure[]{
   \includegraphics[height=2.5cm, width=5.5cm]{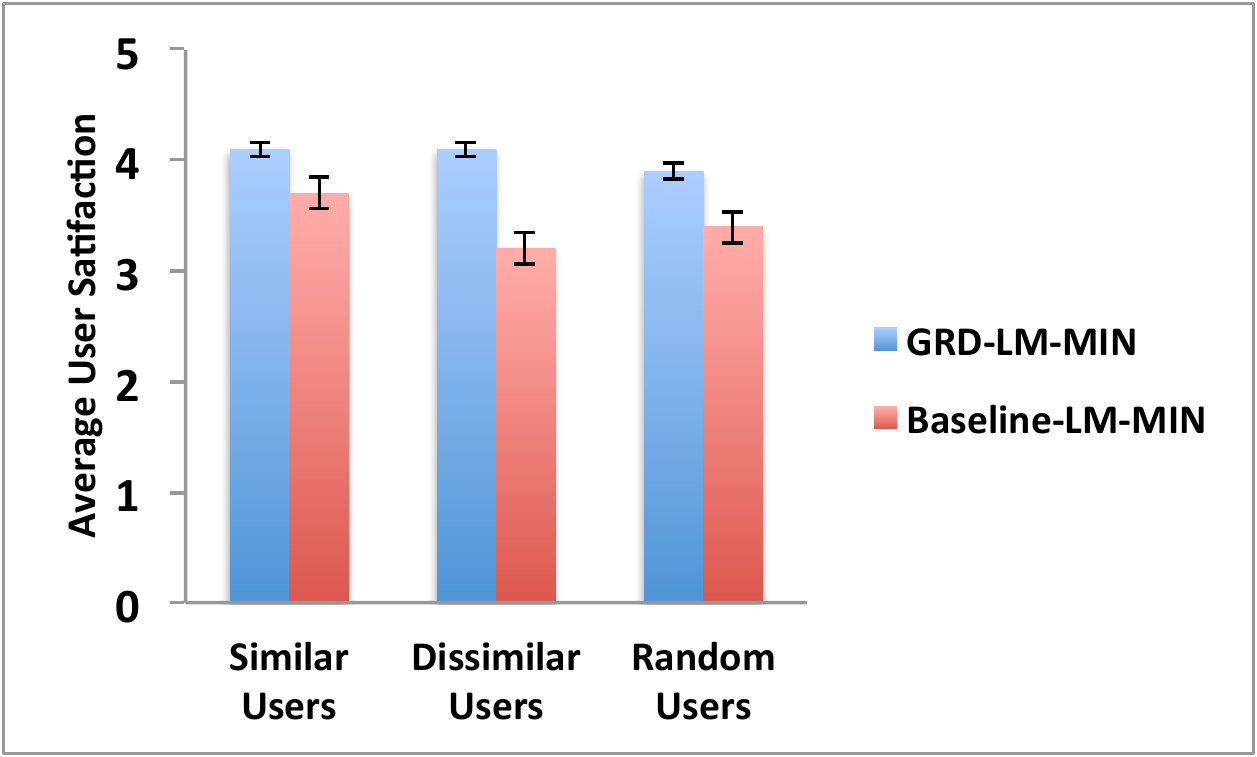}
    \label{fig:userstudy1}
}
\subfigure[]{
   \includegraphics[height=2.5cm, width=5.5cm]{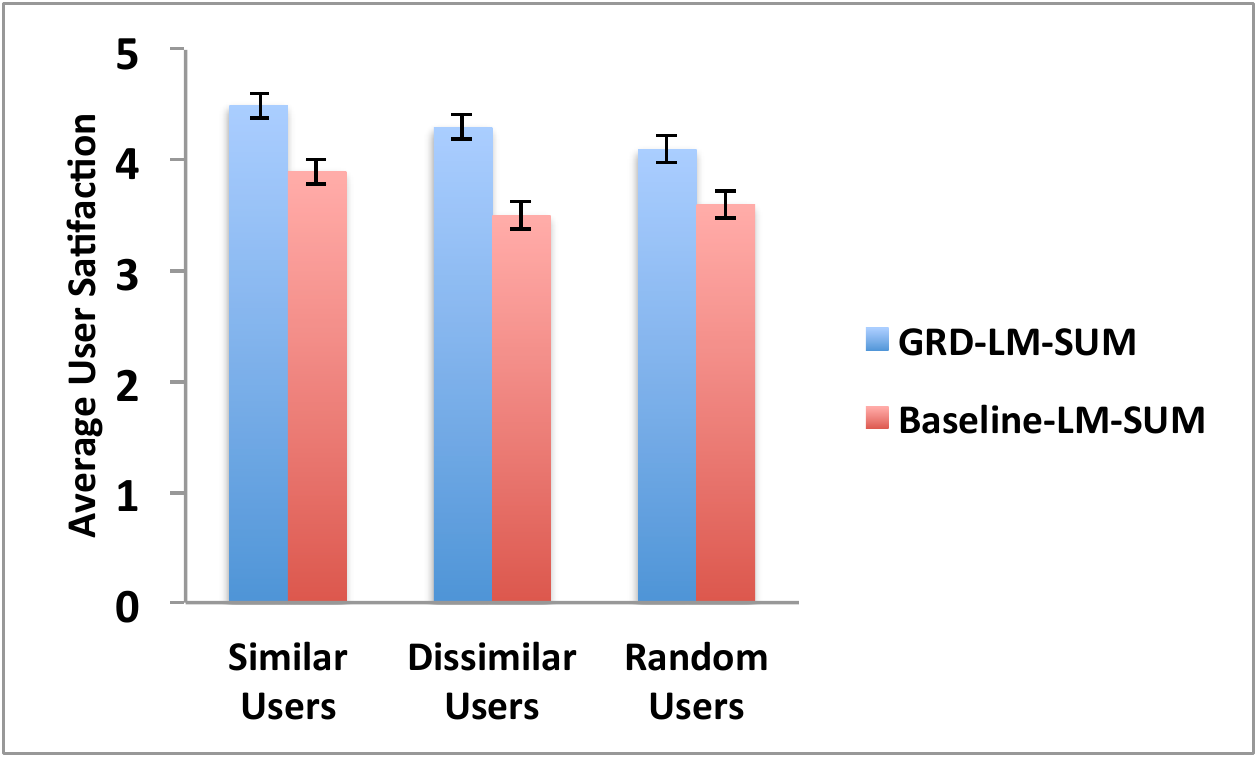}
    \label{fig:userstudy2}
}
\vspace{-0.1in}
\caption{\small User Study Results: Figures~\ref{fig:userstudy1} and~\ref{fig:userstudy2} present the average user satisfaction score of {\tt GRD-LM} and that of {\tt Baseline-LM} for both Min and Sum aggregation with statistical significance analyses. Figure~\ref{tbl:userstudy} indicates about 80\% of the users prefers {\tt GRD} over the {\tt Baseline} algorithms.}
\end{figure*}

\vspace{-0.15in}
\section{Related Work}\label{sec:related}
While no prior work has addressed the problem of group formation in the context of recommender systems, we still discuss existing work that appears to be contextually most related. 


{\bf Group Recommendation:} Group recommendation has been designed for various domains
such as news pages \cite{Pizzutilo:2005:GMP:1366349.1366378}, tourism
\cite{DBLP:journals/eswa/GarciaSO11}, music
\cite{crossen2002flytrap}, book, restaurants~\cite{DBLP:conf/er/NtoutsiSNK12}, and TV programs~\cite{tvrec}. 

There are two dominant strategies for group recommendations
\cite{DBLP:conf/recsys/BerkovskyF10,DBLP:journals/pvldb/Amer-YahiaRCDY09}. The
first approach creates a pseudo-user representing the group and then
makes recommendations to that pseudo-user, while the second strategy
computes a recommendation list for each group member and then combines
them to produce a group's list. For the latter, a widely adopted
approach is to apply an aggregation function to obtain a {\em
  consensus group preference} for a candidate item. Popular aggregation functions, such as, {\em least misery}, {\em aggregate voting} are popularly  used in existing works~\cite{reco1,DBLP:journals/pvldb/Amer-YahiaRCDY09,DBLP:conf/sigmod/RoyICDE14}. ~\cite{DBLP:conf/er/NtoutsiSNK12} pre-clusters the users and the individual recommendations are generated for group members using that member's cluster. After that group aggregation function is applied.

In these works, groups are created beforehand, either by a random set of users with different
interests, or by a number of users  who explicitly choose to be part of a
group.


{\bf Market-based Strategies:}  
Existing research on market based strategies~\cite{groupon1,groupon2, groupon3} studies  problems  on daily deals sites, such as Groupon and LivingSocial, that are orthogonal to our group formation problem. These works focus on {\em recommending deals} (i.e., items) to the users and groups 
For example,~\cite{groupon2} proposes new algorithms for daily deals recommendation based on the explore-then-exploit strategy.~\cite{groupon3} recommends the best deals to the users, among a set of candidate deals, to maximize revenues. Real-time bidding strategy for group-buying deals based on the online optimization of bid values is studied in~\cite{groupon1}.  This body of works essentially relies on price discounts to incentivize group formation around deals, with the objective of maximizing revenue. By contrast, we investigate {\em how to form groups to maximize user satisfaction} under existing group recommendations semantics. Our work is directly deployable as a non-intrusive addition to group recommender systems where explicit incentives may not be present. 
{

{\bf Team Formation:} Team formation problems ~\cite{lappas2009finding,baykasoglu2007project,zzkarian1999forming,anagnostopoulos2010power,
anagnostopoulos2012online} are often modeled using Integer Programming, or heuristic solutions using Simulated Annealing~\cite{baykasoglu2007project} or Genetic Algorithms~\cite{wi2009team} are designed. These problems are assignment problems.

In general, group formation is not a matching or generalized assignment problem~\cite{gap1,gap2}. There are no resources to match the users to. We need to ``match'' users to one another. In that sense, it's closer in spirit to clustering~\cite{DBLP:books/mk/HanK2000}. However, as we demonstrate in the paper, a clustering algorithm which is agnostic to the group recommendation semantics (LM or AV) is likely to perform poorly for purposes of maximizing group satisfaction.  

{\bf Community Detection:} These problems~\cite{aris} discover communities (a set of users) with common interests. Again, our groups have a more explicit connotation (in the sense of having clearly defined satisfaction scores) than communities. One can potentially generate a graph of users based on a suitable notion of distance or similarity between users in terms of tastes and find communities. Once again, this approach falls short, as it does not take group recommendation semantics into account.

{\bf Multi-way Partition:} Minimization problems over multiway partition functions are studied in ~\cite{part1,part2}, with graphs or hyper-graphs as the underlying abstract model. They attempt to partition the nodes to optimize certain outcome (e.g., variants of $k$-cut problems). While these problems are NP-hard, efficient algorithms with provable approximation factors are known, when the objective function exhibit certain properties~\cite{part1,part2}. If we are to use such a weighted graph, the weight on each edge is local to just two users and does not capture the essence of group recommendation semantics, which renders those solutions to our problem far from ideal. 

\vspace{-0.1in}
\section{Conclusion}\label{conc}
\eat{Group recommendations have attracted significant interest in recent years. In this paper, we motivate the related, but complementary problem of group formation. While group recommendation focuses on semantics of group satisfaction and efficient generation of recommendation lists for groups, our group formation problem seeks to find or form groups such that under existing group recommendation semantics, the aggregate satisfaction of the groups is maximized. We motivated this problem with natural applications in various domains. }

We initiate {\em the study of forming groups in the context of group recommender systems}. We consider two popular group recommendation semantics (LM and AV) and formalize the problem of creating a set of non-overlapping  groups over an underlying user population, such that the aggregate satisfaction of the formed groups, with their recommended top-$k$ lists, is maximized. We prove that optimal group formation is computationally intractable under both group recommendation semantics. 
\eat{We provide neat integer programming formulations of the optimal solutions for both problems. Besides their obvious theoretical interest, these formulations are helpful in calibrating approximate or heuristic group formation algorithms.} 
We present efficient greedy group formation algorithms and show that they achieve absolute error guarantees for LM. We present a comprehensive experimental analysis and user studies that demonstrates the effectiveness as well scalability  of our proposed solutions. 

The approximability of group formation under AV semantics remains open although we conjecture that it may be hard to approximate. Identifying natural special cases that are tractable is an interesting open problem. Forming groups where the individual members are not treated equally, or groups that are possibly overlapping are also worthy of study. 

\newpage
\appendix

\section{Optimal Algorithms}\label{sec:optimal}
Despite the fact that the  optimal group formation problem is computationally intractable, we describe optimal algorithms under both LM and AV semantics by formulating them as integer programming problems. We can make use of existing integer programming solvers (such as CPLEX) to solve these problems. Since IP is also NP-hard~\cite{wolsey1998integer} and could be exponential in the worst case, the proposed solutions are not scalable. Nevertheless, the formulation is useful when the numbers of users and items are fairly small. 

For both formulations the following Boolean decision variables are defined: $u_{ig}$ captures whether user $u_i$ is part of group $g$. To describe the top-$k$ itemset of each group, an additional Boolean decision variable, $y_{jg}$ checks, if item $j$ is the $k$-th item for group $g$, whereas, $w_{jg}$ is used to check if item $j$ is \emph{one of the top-$(k-1)$ items} for group $g$. The following formulation is provided assuming Min-aggregation for a general value of $k > 1$. To consider Max-aggregation formulation, we no longer need $w_{jp}$ and one has to check if $y_{jg}$ is indeed the top-$1$ item for the recommended itemset. Similarly, Sum aggregation could be performed by modifying the objective function to aggregate over all $k$-items.

\eat{
\begin{table}
\centering
\caption{Notations and Interpretations}
\begin{tabular} {|c|p{5cm}|}
\hline
  {\bf Notation} & {\bf Interpretation} \\
  $y_{jg}$ & a decision variable to denote if item $j$ is the $k$-th item of group $g$ \\
  \hline
  $w_{jp}$ & a decision variable to denote if item $j$ is in the $(k-1)$-th itemset of group $g$ \\
  \hline
  $u_{ig}$ & a decision variable to denote if user  $u_i$ is part of group $g$ \\
  \hline
\end{tabular}\label{tab:notations}
\end{table}}
\subsection{IP Formulation for Least Misery}
The objective function for LM aims to form $\ell$ groups such that their sum of scores  is maximized. The  first two constraints capture the fact that the score of an item $j$ is to be computed by considering the minimum score of that item over all users. The third and fourth constraints are used to capture the top-$(k-1)$ items. The rest of the constraints simply state that the $k-$th item is a single item for every group, whereas, there should be a total of $k-1$ additional items whose score is higher than that of the $k$-th item. Finally, we assert that only $\ell$ groups are to be formed, whereas, a user can belong to only one of these groups (satisfying disjointness).
\begin{equation}
\text{Maximize } \Sigma_{g=1}^{\ell} \{ y_{jp} \times sc(g,j) \}
\end{equation}
s.t. 
\vspace*{-0.2in}
\begin{align*}
\left.\begin{aligned}
sc(j,g) = r \\
r \leq {{\{\forall u_{ig} = 1\}}} u_{ig} \times sc(u_i,j) \\
w_{ig} \times sc(g,i)  \geq  y_{jg} \times sc(g,j) \times w_{ig} \\
(1 - w_{jg}) \times sc(g,j) \leq y_{jg} \times  sc(g,j) \\
\end{aligned} \right\} \\
\left.\begin{aligned}
\Sigma_{j=1}^{m} y_{jg} = 1  \\
\Sigma_{i=1}^{m} w_{ig} = k-1 \\
 \ell \leq n \\
\end{aligned}
 \right\} \\
\left.\begin{aligned}
y_{jg} =1/0 , \forall j=1,2,\ldots,m \\
w_{jg} =1/0 , \forall j=1,2,\ldots,m \\
\Sigma_{g=1}^{\ell} u_{ig} =1, \forall i=1,2,\ldots,n \\
u_{ig} =1/0, \forall i=1,2,\ldots,n, \forall g=1,2,\ldots,\ell  \\
\end{aligned}
 \right\}
\end{align*}

When this IP is run on Example~\ref{ex1}, considering $k=1$, the following groups are produced:  $\{u_1,u_3,u_4\}$, $\{u_2, u_6\}$, $\{u_5\}$ with an overall $Obj$ value of $4 + 5 + 3 = 12$. 

\subsection{IP Formulation for Aggregate Voting}
The formulation of optimal group formation under aggregate voting is similar to that of LM, except for the fact that the score of an item $j$ for a group $g$ is the summation of scores of $j$ over all users in $g$.

\begin{equation}
\text{Maximize } \Sigma_{p=1}^{\ell} \{ y_{jg} \times sc(g,j) \}
\end{equation}
s.t. 
\begin{align*}
\left.\begin{aligned}
sc(g,j) = \Sigma_{{\{\forall u_{ig} = 1\}}} u_{ig} \times sc(u_i,j) \\
w_{ig} \times sc(g,i)  \geq  y_{jg} \times sc(g,j) \times w_{ig} \\
(1 - w_{jg}) \times sc(g,j) \leq y_{jp} \times  sc(g,j) \\
\end{aligned} \right\} \\
\left.\begin{aligned}
\Sigma_{j=1}^{m} y_{jg} = 1  \\
\Sigma_{i=1}^{m} w_{ig} = k-1 \\
 \ell \leq n \\
\end{aligned}
 \right\} \\
\left.\begin{aligned}
y_{jg} =1/0 , \forall j=1,2,\ldots,m \\
w_{jg} =1/0 , \forall j=1,2,\ldots,m \\
\Sigma_{g=1}^{\ell} u_{ig} =1, \forall i=1,2,\ldots,n \\
u_{ig} =1/0, \forall i=1,2,\ldots,n, \forall g=1,2,\ldots,\ell  \\
\end{aligned}
 \right\}
\end{align*}

When run on Example~\ref{ex2}, the optimal grouping with two groups consists of the following groups:  $\{u_1,u_3,u_4\}$, and $\{u_2,u_5,u_6\}$, with the overall objective function value of $14$.
\eat{
\begin{equation}
\text{Maximize } \Sigma_{p=1}^{x} \{ y_{jp} \times rel(j,p) \}
\end{equation}
s.t. 
\begin{align*}
w_{ip} \times rel(i,p)  \geq  y_{jp} \times rel(j,p) \times z_{ip} \\
w_{ip} = z_{ip} \\
\Sigma_{j=1}^{m} y_{jp} = 1  \\
\Sigma_{j=1}^{m} w_{jp} = k-1 \\
rel(j,p) = Minimum_{\{\forall u_{ip} = 1\}} u_{ip} \times r(i,j) \\
x \leq n \\
\text{top-k-value}(p) \neq \text{top-k-value}(p'), \forall p \neq p' \\
\text{top-k-value}(p) = \Sigma_{j=1}^{m} \{ w_{jp} \times 2^j \} + \Sigma_{j=1}^{m} \{ y_{jp} \times 2^j \} \\
(1 - w_{jp}) \times rel(j,p) \leq y_{jp} \times  rel(j,p) \\
y_{jp} =1/0 , \forall j=1,2,\ldots,m \\
z_{ip} =1/0 , \forall i=1,2,\ldots,m \\
w_{jp} =1/0 , \forall j=1,2,\ldots,m \\
\Sigma_{p=1}^{x} u_{ip} =1, \forall i=1,2,\ldots,n \\
u_{ip} =1/0, \forall i=1,2,\ldots,n, \forall p=1,2,\ldots,x  \\
\end{align*}}

\section{Suboptimal Group Formation by GRD-LM-SUM}\label{addex}

\begin{example}\label{ex5}
{\em 
Consider the user set \\
$\mathcal{U}= \{u_1,u_2,u_3,u_4,u_5,u_6\}$  and the itemset  $\mathcal{I} = \{i_1,i_2, i_3\}$. The users'  preferences for the itemset are given (or predicted) as in Table~\ref{tab:ex5}. Suppose the users need to be partitioned into at most $3$ groups ($\ell \leq 3$), where each group has to be recommended $k=2$ items. \qed }  
\end{example}

Using {\tt GRD-LM-SUM}, this will form the following $3$ groups at the end: $\{u_2\}$, $\{u_3,u_4\}$, $\{u_1,u_5,u_6\}$ with the overall objective function value as $(5+3)+(5+2)+(3+2)=20$, whereas, the optimal grouping will give a different solutions to form $3$ groups, $\{u_2,u_6\}$, $\{u_3,u_4\}$, $\{u_1,u_5\}$ with the overall objective function value of $(5+2)+(5+2)+(4+3)=21$. 

\begin{table}[!htb]
\begin{tabular}{l*{6}{c}r}
User-item Ratings & $u_1$ & $u_2$ & $u_3$ & $u_4$ & $u_5$  & $u_6$  \\
\hline
$i_1$ 	    & 1 & 2 & 2 & 2 & 2 & 1   \\
$i_2$       & 4 & 3 & 5 & 5 & 4 & 2   \\
$i_3$       & 3 & 5 & 1 & 1 & 3 & 5  \\
\end{tabular}
\caption{User Item Preference Rating for Example~\ref{ex5}\label{tab:ex5}}
\end{table}

\bibliographystyle{abbrv}
\bibliography{paperbib,paperbib1,paperbib2}    


\end{document}